\def\BibTeX{{\rm B\kern-.05em{\sc i\kern-.025em b}\kern-.08em
    T\kern-.1667em\lower.7ex\hbox{E}\kern-.125emX}}
\newtheorem{theorem}{Theorem}
\newtheorem{definition}{Definition}
\theoremstyle{definition}
\newtheorem{remark}{Remark}
\theoremstyle{proposition}
\newtheorem{proposition}{Proposition}
\theoremstyle{Assumption}
\theoremstyle{Lemma}
\newtheorem{corollary}{\textbf{Corollary}}
\newcommand{\myref}[1]{%
    \ifthenelse{\equal{#1}{proof_proposition_1}}{A}{%
    \ifthenelse{\equal{#1}{proof_lemma_0}}{B}{%
    \ifthenelse{\equal{#1}{proof_lemma_2}}{C}{%
    \ifthenelse{\equal{#1}{proof_remark_2}}{D}{%
    \ifthenelse{\equal{#1}{proof_theorem_3}}{E}{%
    \ifthenelse{\equal{#1}{proof_theorem_4}}{F}{%
    {\textbf{??}}}}}}}}%
}
\definecolor{darkgreen}{RGB}{0,200,0}
\begin{document}
\title{
SLIDE: Simultaneous Model Downloading and Inference at the Wireless Network Edge   \\
\thanks{Guanqiao Qu, Tao Li, Qian Chen, and Xianhao Chen are with the Department of Electrical and Computer Engineering, The University of Hong Kong, Hong Kong SAR, China. (e-mail: gqqu@eee.hku.hk; litao@eee.hku.hk; qchen@eee.hku.hk; xchen@eee.hku.hk. Sheng Zhou is with the Department of Electronic Engineering, Tsinghua University, Beijing, China. (e-mail: sheng.zhou@tsinghua.edu.cn). The work was supported in part by the Research Grants Council of
Hong Kong under Grant 27213824 and CRS HKU702/2. \textit{(Corresponding author: Xianhao Chen.)}

}
}
\author{Guanqiao Qu,~\IEEEmembership{Graduate Student Member,~IEEE}, Tao Li, Qian Chen,~\IEEEmembership{Member,~IEEE}, \\Xianhao Chen,~\IEEEmembership{Member,~IEEE}, Sheng Zhou,~\IEEEmembership{Senior Member,~IEEE}}

\maketitle

\begin{abstract}
To support on-device inference, the next-generation mobile networks are expected to support real-time model downloading services to mobile users. However, powerful AI models typically have large model sizes, resulting in excessive end-to-end (E2E) downloading-and-inference (DAI) latency. To address this issue, we propose a \underline{s}imultaneous mode\underline{l} download\underline{i}ng an\underline{d} inf\underline{e}rence (SLIDE) framework, which allows users to perform inference with downloaded layers while simultaneously receiving the remaining layers of the model. To this end, we formulate a task throughput maximization problem by jointly optimizing model provisioning, spectrum bandwidth allocation, and computing resource allocation for multi-user downlink systems. Unlike traditional DAI frameworks, SLIDE introduces recursive dependencies across layers, where inference latency depends recursively on the downloading bandwidth and computing resource allocation for each of the preceding layers. To solve this challenging problem, we design an efficient algorithm that acquires the optimal solution with polynomial-time complexity. Simulation results demonstrate that the proposed SLIDE framework significantly improves task throughput under latency and communication resource constraints compared with the conventional model downloading schemes.
\end{abstract}
\begin{IEEEkeywords}
Edge AI, edge computing, 6G, model downloading, bandwidth allocation, task throughput maximization.
\end{IEEEkeywords}

\section{Introduction}

With mounting concerns about privacy and the increasing computing power of mobile devices, on-device inference has become a prevalent paradigm, enabling users to run AI models directly on their end devices~\cite{10.1145/3450268.3453524,10631278}. In this respect, \textit{AI model downloading} has emerged as an essential service in next-generation mobile networks to support ubiquitous on-device inference~\cite{3gpp.22.874,mao2023green}. 
Unlike the traditional model (or mobile app) downloading, which is often time-consuming, the future model downloading is expected to be a real-time service~\cite{10183793}. For instance, the 5G standard specifies model downloading latency requirements as low as 1 second for time-sensitive applications such as autonomous driving and robotics~\cite{3gpp.22.874}. \textit{Real-time} model downloading benefits on-device inference in the following aspects. (i) Storage efficiency: Due to the diverse range of AI applications and the large size of models, users prefer not to, or simply cannot, pre-store all the AI models that they might need in advance. For example, Google's on-device language model Gemini Nano-2 remains 1.51 GB even after 4-bit quantization~\cite{team2023gemini}. Real-time model downloading enables users to fetch models \textit{on demand} from off-site locations, thereby saving on-device storage. (ii) Just-in-time intelligence: Real-time downloading is crucial for time-sensitive applications when \textit{pre-downloading is infeasible}, particularly when user demands are unpredictable or specific model versions are not available in advance. For example, mobile networks are expected to deliver region-specific models to users, such as localized models for augmented reality~\cite{9162277} and autonomous driving~\cite{ma2025sense4fl}. In such cases, since the downloading of the context-aware model can occur only when users enter the region, real-time model downloading is essential to ensure low service latency for users. Moreover, while recent advances in large language models (LLMs) have promoted the deployment of on-device foundation models that can be reused across multiple applications, such a paradigm does not eliminate the need for real-time model parameter downloading. In practice, since foundation models face inherent tradeoffs between model size and generality, they cannot be optimally specialized for all tasks under limited resource budgets. Therefore, users often request task-specific parameters (such as LoRA matrices~\cite{hu2021lora}) for their requested tasks. As a result, dynamic and real-time model downloading is essential for providing flexible, timely, and up-to-date on-device intelligence to end users.

\textbf{Challenges}: However, while a model is nothing but just one type of digital content, existing content downloading schemes can hardly fulfill the real-time requirements of model downloading. Specifically, all prior content transmission schemes, including model downloading schemes~\cite{wu2023efficient}, focus solely on data transmission without considering \textit{downloading-inference co-design}. Nonetheless, the end-to-end (E2E) latency is what really matters. Imagining that a vehicle enters an area requesting a region-specific navigation model from a base station~\cite{hu2025agentscomerge}, the service cannot be fulfilled until an entire \textit{download-and-inference} (DAI) process has been accomplished. In practical wireless networks, DAI entails significant model downloading latency before inference, leading to excessive E2E service latency, as echoed by 3GPP findings~\cite{3gpp.22.874}. Considering the typical 5G download speed of 170.1 Mbps~\cite{cmhk5g}, downloading a ResNet-18 model takes about 2.1 seconds, and inference using such a ResNet-18 on a Raspberry Pi 4 with two images takes about 2 seconds~\cite{10.1145/3452411.3464446}. As a result, the separated design of downloading and inference in DAI may render E2E latency intolerant to time-sensitive applications.

\textbf{Our solution}: To reduce E2E service latency, our key idea lies in overlapping communication and computing in DAI, called \underline{s}imultaneous mode\underline{l} download\underline{i}ng an\underline{d} inf\underline{e}rence (SLIDE). Unlike the conventional DAI paradigm that initiates inference only after the entire model is downloaded, our framework enables users to start inference with the downloaded layers while simultaneously receiving the remaining layers. The advantages of this framework are evident. In Fig. \ref{fig:intro_comp}, we demonstrate the performance gain of SLIDE by comparing the E2E latency for different approaches, including: (i) \textit{disk-loading inference}, where the model is stored on the disk of the user device and is loaded from the disk into system memory and then into GPU memory for inference~\cite{zhu2023fastdimenet++}; 
(ii) \textit{memory-preloading inference}, where the model is already in system memory due to page cache or prior requests and can be directly loaded into GPU memory for inference~\cite{10.1145/3698038.3698509}; 
and (iii) \textit{conventional DAI}, where inference begins after the model has been fully downloaded from a base station (BS)~\cite{wu2023efficient}. 
As shown in Fig. \ref{fig:intro_comp}, the proposed SLIDE significantly reduces the E2E latency by 32.5\% on average compared with the conventional DAI, while being only 0.2 times slower than the disk-loading inference approach!


\begin{figure}[t]
\centering
\includegraphics[width=0.3\textwidth]{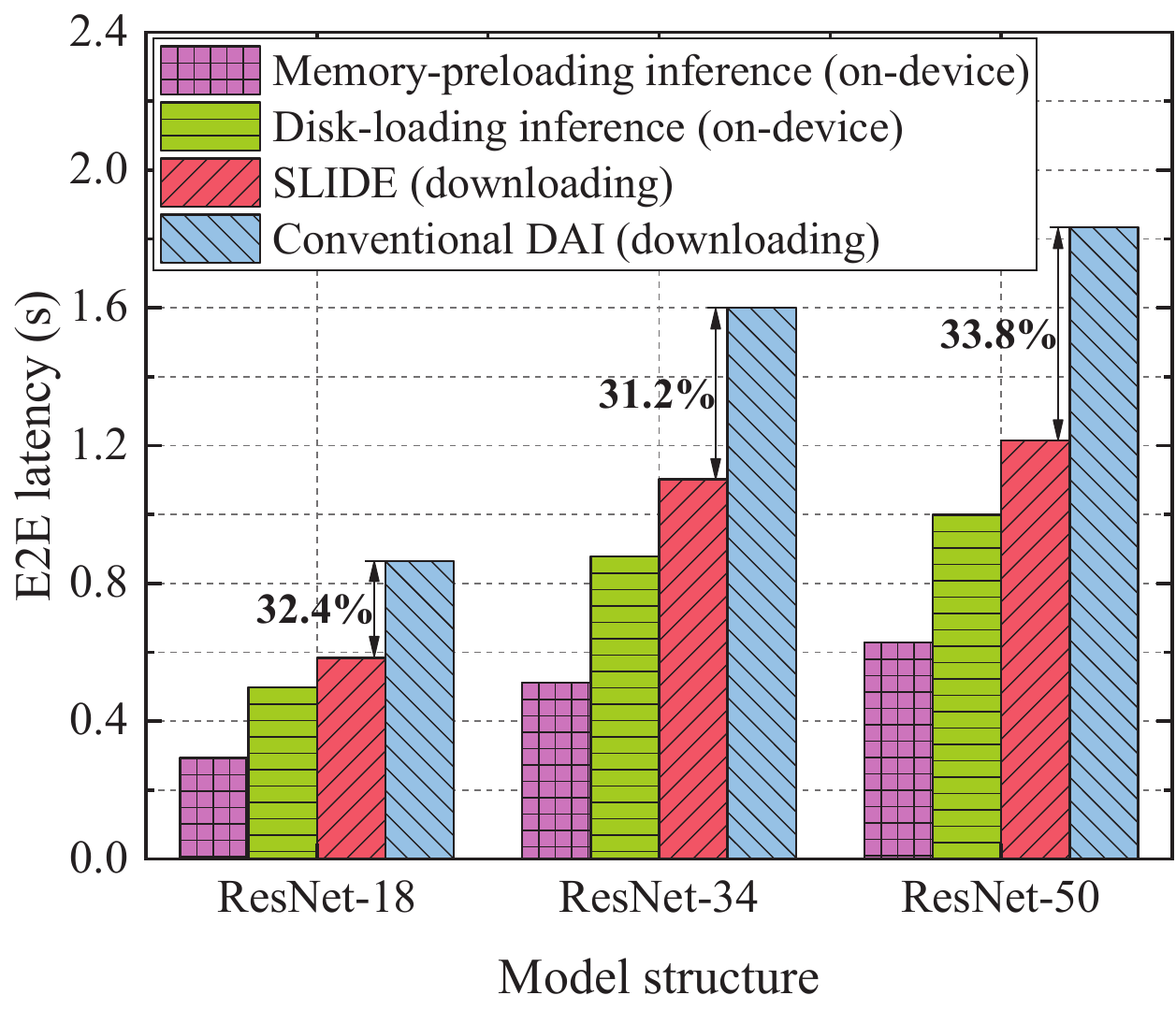}
\vspace{-0.25cm}
\caption{The E2E latency in wireless networks, where an end user downloads an AI model from a BS to perform local inference. The E2E latency comprises downloading and inference latency. 
We assume the BS downloads models with a transmit power spectral density of -29 dBm/Hz~\cite{3gpp.38.104} over a 30-MHz channel, and the distance is set to 100 m. The inference is executed on a Jetson Orin Nano with a GPU frequency of 624.75 MHz, using the CIFAR-10 dataset~\cite{krizhevsky2009learning}, a batch size of 1, and models from the ResNet family~\cite{he2016deep}. 
}
\vspace{-10pt}
\label{fig:intro_comp}
\end{figure}
With the innovative SLIDE paradigm, in this paper, we consider a multi-user downlink system where a BS delivers models to users for on-device inference with limited spectrum bandwidth. We formulate a joint model provisioning, spectrum bandwidth allocation, and computing resource allocation problem to maximize the number of completed AI tasks (task throughput) under E2E deadline constraints. The problem turns out to be a mixed-integer nonlinear programming problem. Note that the optimization problem is highly challenging, as the E2E latency in SLIDE follows a recursive expression across layers: Inference on a layer starts only after downloading the current layer and completing inference of the previous layer. Despite the challenges, we develop an efficient algorithm to find the optimal solution to SLIDE with polynomial time complexity. Concretely, we first decouple the computing resource allocation from the original joint problem. For any fixed bandwidth allocation, we then determine the corresponding model provisioning and computing resource allocation. Subsequently, we obtain the minimum feasible bandwidth allocation for each user under latency constraints, based on which we determine the optimal solution to maximize the number of served users. The contributions of this paper are summarized as follows.

\begin{enumerate}
    \item We propose the SLIDE framework, which enables users to start inference with the downloaded AI model layers while simultaneously receiving the remaining layers. Moreover, we formulate an optimization problem of model provisioning, spectrum bandwidth allocation, and computing resource allocation in a multi-user downlink wireless system to maximize the number of completed tasks under the latency requirements of users. The problem is a mixed-integer nonlinear programming problem.
    \item To facilitate efficient solution finding, we first derive the model provisioning and computing resource allocation for each user under a given spectrum bandwidth allocation. Then, we show that solving the original problem is equivalent to solving the spectrum bandwidth allocation problem based on the model provisioning and computing resource allocation decisions. Finally, we propose an iterative algorithm that leverages the problem properties to derive the minimum feasible bandwidth allocation for each user, followed by an efficient algorithm that incrementally constructs the optimal solution to the original problem. 
    \item Our experiments have demonstrated that the proposed SLIDE significantly boosts the system performance compared with conventional AI model downloading schemes.
\end{enumerate}

The rest of this paper is organized as follows. Section II reviews the related work. Section III introduces the proposed SLIDE framework and formulates the task throughput maximization problem. In Section IV, we present an algorithm to obtain the optimal solution. Section V provides the numerical results, and Section VI concludes this paper. 


\section{Related Work}

AI model downloading is an emerging field, attracting growing attention. We review the relevant resource allocation methods below.

\textbf{Model downloading}: In existing DAI frameworks, users download the required AI models from edge servers (e.g., BSs) for local inference~\cite{qu2024trimcachingICDCS}. To reduce model downloading latency, Wu et al. leveraged parameter sharing across models to enable the BS to broadcast shared model blocks to users~\cite{wu2023efficient}. Moreover, to better utilize computing resources of both user devices and edge servers, cooperative edge inference splits AI models into user-side and server-side sub-models~\cite{9843917}, 
allowing users to download and run the user-side sub-model and upload intermediate features to the server for further inference~\cite{9758628,9093772}. In these studies, on-device inference begins only after model downloading completes, which can hence be reduced to conventional resource allocation problems.
However, these approaches fall short for SLIDE since they do not consider the overlapping of model downloading and local inference. This overlapping introduces recursive dependencies across layers in the E2E latency of SLIDE, making the latency jointly influenced by bandwidth allocation, layer sizes, and computing workload. This indicates that SLIDE requires a tailored resource allocation strategy.

The idea of ``\textit{simultaneous delivery and processing}", similar to that of SLIDE, has been explored in prior work, yet in \textit{different fields}. We elaborate on why they fall short of supporting SLIDE at the wireless edge as follows.


\textbf{Video streaming}: Video streaming techniques divide videos into multiple segments, allowing users to download the video segments and consume the video simultaneously~\cite{seufert2014survey,5990550}.  
This progressive downloading enables users to start playback before downloading the full video. However, this approach is inapplicable to downloading and inference. Although AI models can similarly be divided into layers, users can only benefit from the models upon inference completion for the entire model. Moreover, unlike video segments, which are uniformly timed~\cite{10.1145/2557642.2557658,yin2015control}, computing workloads and data sizes vary across layers. These factors prevent the application of video streaming schemes to SLIDE.

\textbf{Layer-wise inference}: In computer architecture, layer-wise inference sequentially transfers model layers from disk/system memory to GPU memory, allowing the GPU to perform inference with each layer once being loaded~\cite{peng2024harnessing,bhattacharya2016sparsification}. Similar to SLIDE, loading of the subsequent layer can begin concurrently with the computation of the current layer~\cite{li2024flexnn}. Moreover, in wireless edge networks, AI models can be partitioned into user-side and server-side sub-models for split inference by uploading intermediate features from end devices to edge servers~\cite{11301737}. To improve efficiency, this framework executes pipelined inference, allowing user-side inference to proceed in parallel with server-side inference on previously uploaded features. However, these methods focus on on-device inference and overlook the challenges of model downloading in wireless edge networks, where users share limited spectrum. In such settings, downlink rates are typically limited to hundreds of Mbps~\cite{cmhk5g}, far below the several to tens of GB/s available for on-device model loading~\cite{4090rate}. Thus, joint spectrum and computing resource allocation must be judiciously determined to support model downloading and inference in multi-user scenarios in wireless edge networks.

\section{The SLIDE Framework and Problem Formulation}
This section introduces the SLIDE framework, including the multi-user wireless edge network scenario and the procedure of SLIDE, formulates the E2E latency and energy consumption, and presents the task throughput maximization problem.
\subsection{Network Scenario}
As shown in Fig. \ref{fig:framework}, we consider a wireless network where multiple users $\mathcal{K}=\left\{1, 2, \dots, K \right\}$ connect to a BS with a cache hosting a model library $\mathcal{I}=\left\{1, 2, \dots, I\right\}$. We consider that each user requests an inference service with an E2E latency requirement $\bar{T}_{k}$ and can be served by a subset of models in $\mathcal{I}$, denoted by $\mathcal{I}_{k}$, where both $\bar{T}_{k}$ and $\mathcal{I}_{k}
$ are known to the BS. For instance, a user performing a classification task with a given accuracy requirement might be served by a number of model architectures, such as ResNet models or Transformer models with different bit widths. Without loss of generality, we only consider users who do not pre-store the models in $\mathcal{I}_{k}$ locally and should download a model from the BS via wireless channels to serve their needs. Let a binary variable $x_{k,i}$ indicate the model provisioning of the BS, where $x_{k,i}=1$ represents that the BS provisions model $i$ to user $k$, and 0 otherwise. The variable $x_{k,i}$ satisfies the constraint
\begin{equation}\label{const_1}
    \sum\limits_{i\in\mathcal{I}_{k}}x_{k,i}\le1, \ \forall k \in\mathcal{K},
\end{equation}
which ensures that user $k$ is provisioned at most one model from $\mathcal{I}_{k}$.

The frequently used notations are summarized in Table~\ref{table_notation}.

\begin{remark}
    Although a user can perform inference using the downloaded models multiple times, our optimization framework focuses on optimizing the \textit{first-time} model downloading when the user does not store the model locally. 
    As discussed in the Introduction, it is generally infeasible to pre-store or persistently cache all potentially requested models. Consequently, model cache misses are unavoidable in practice, and SLIDE serves as an enabling step to efficiently deliver the missing model from the BS by overlapping model downloading with inference, after which model caching, prefetching, and reuse can take place.
\end{remark}

\begin{figure}[t]
\centering
\includegraphics[width=0.4\textwidth]{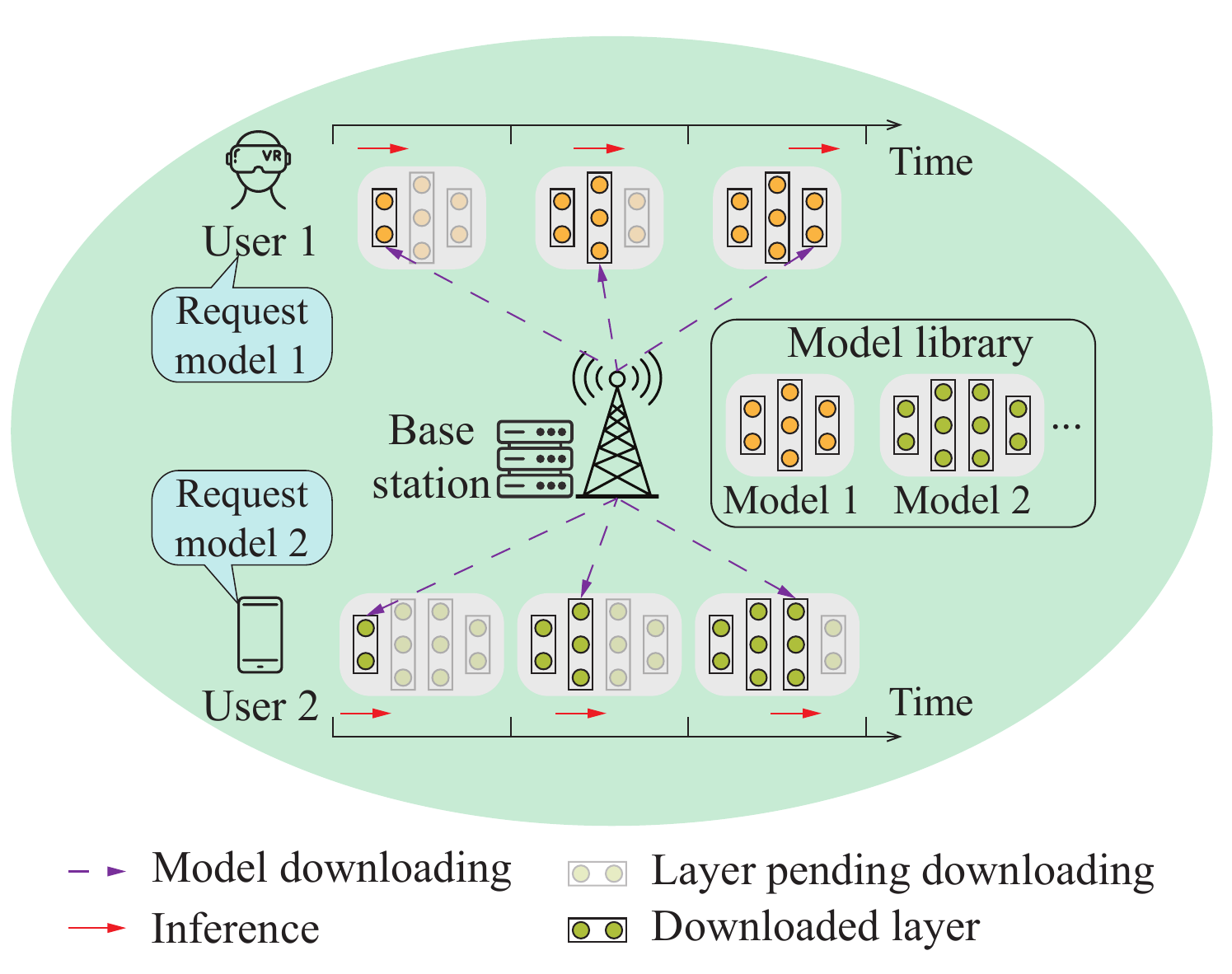}
\vspace{-0.25cm}
\caption{The proposed SLIDE framework, where users start inference with downloaded layers while simultaneously receiving the remaining layers.}
\vspace{-10pt}
\label{fig:framework}
\end{figure}

\begin{table}[t]
    \centering
	\caption{Frequently Used Notations}
	\label{table_notation}
	\begin{tabular}{|m{1.9cm}<{\centering}|m{6.1cm}<{\centering}|}
		\hline
		\textbf{Symbol} & \textbf{Description}\\ \hline
        $\mathcal{K}$ & Set of users. \\ \hline
        $\mathcal{I}$, $\mathcal{I}_{k}$ & Model library and set of models requested by user $k$. \\ \hline
        $K$, $I$, $L_{i}$ & Number of users, models, and layers of model $i$. \\ \hline
        $k$, $i$, $l_{i}$ & User index, model index, and layer index of model $i$. \\ \hline
        $x_{k,i}\in {\bf{X}}$ & Model provisioning indicator variable. \\ \hline
        $y_{k}\in{\bf{Y}}$ & Spectrum bandwidth allocation for user $k$. \\ \hline
        $z_{k,l_{i}}\in{\bf{Z}}$ & GPU frequency allocation scaling factor of user $k$ for layer $l_{i}$. \\ \hline
        ${\bf{Z}}_{k,i}$ & Vector of per-layer $z_{k,l_{i}}$ for user $k$ and model $i$.\\ \hline
        $S_{l_{i}}$ & Data size of layer $l_{i}$. \\ \hline
        $B$ & Total bandwidth of the BS. \\ \hline
        $R_{k}$, $\gamma_{k}$ & Spectral efficiency and channel gain between user $k$ and the BS. \\ \hline
        $P$, $N_{0}$ & Transmit power spectral density and noise power spectral density. \\ \hline
        $\bar{T}_{k}$, $Q_{k}$ & End-to-end latency requirement and maximum inference energy budget of user $k$. \\ \hline
        $\tau_{k,l_{i}}$ & Time for user $k$ to download layer $l_{i}$. \\ \hline
        $T_{k,i}\left(z_{k,l_{i}}\right)$ & Inference latency of user $k$ for layer $l_{i}$. \\ \hline
        $V_{1}\left(k,l_{i}\right)$ & Time required for user $k$ to transfer layer $l_{i}$ from system memory to GPU memory. \\ \hline
        $W_{l_{i}}$ & Computation workload of layer $l_{i}$ for processing one data sample. \\ \hline
        $b_{k}$, $\kappa_{k}$, $f_{k}$, $\Psi_{k}$ & Inference batch size, computing intensity, GPU clock frequency, and power coefficient of user $k$. \\ \hline
        $t_{k,l_{i}}\left(y_{k},{\bf{Z}}_{k,i}\right)$, $t_{k,L_{i}}\left(y_{k},{\bf{Z}}_{k,i}\right)$ & Time when user $k$ obtains the output of layer $l_{i}$ and the end-to-end latency of user $k$. \\ \hline
        $e_{k,i}\left({\bf{Z}}_{k,i}\right)$ & Inference energy consumption of user $k$ for model $i$. \\ \hline
        $e_{1}\left(k,i\right)$ & Energy consumption of user $k$ for model instantiation, data movement, and model transfer to GPU memory. \\ \hline
        $e_{k}\left({\bf{X}}_{k},{\bf{Z}}_{k}\right)$ & Total inference energy consumption of user $k$. \\ \hline
        $x^{*}_{k,i}\in{\bf{X}}^{*}$, $y^{*}_{k}\in{\bf{Y}}^{*}$, $z^{*}_{k,l_{i}}\in{\bf{Z}}^{*}$ & Optimal model provisioning, spectrum bandwidth allocation, and computing resource allocation. \\ \hline
        
	\end{tabular}
\end{table}

\subsection{The Procedure of SLIDE}
As shown in Fig.~\ref{fig:framework}, SLIDE overlaps model downloading and inference, allowing users to simultaneously perform inference with the downloaded layers while receiving the remaining layers of the model. This contrasts with conventional DAI (Fig.~\ref{fig:timeline}, top), where downloading and inference are separated.

\begin{figure}[t]
\centering
\includegraphics[width=0.48\textwidth]{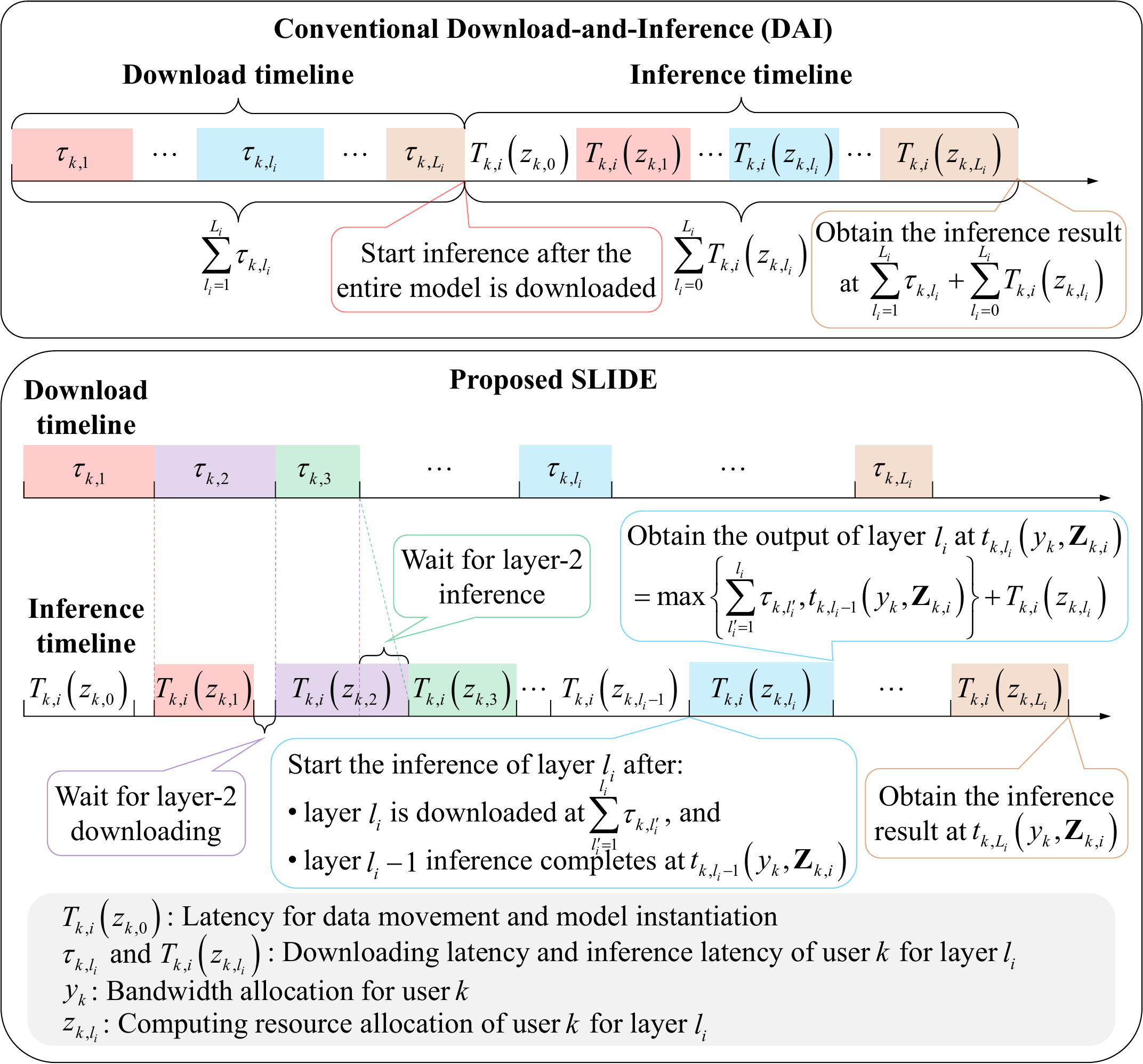}
\vspace{-0.25cm}
\caption{The procedures of conventional DAI and the proposed SLIDE framework, assuming that user $k$ is provisioned with model $i$. In conventional DAI, user $k$ begins inference only after downloading the entire model. In contrast, SLIDE enables user $k$ to start inference for each layer as soon as the layer is downloaded and the inference of the previous layer is completed.}
\vspace{-8pt}
\label{fig:timeline}
\end{figure}

The detailed procedure of SLIDE, as demonstrated in the lower part of Fig. \ref{fig:timeline}, is illustrated as follows. Suppose that user $k$ is provisioned with model $i$, with $\left\{1, 2, \dots, L_{i}\right\}$ being the ordered set of layers of model $i$. In SLIDE, a layer represents a model component (e.g., an individual model layer, a block, or a group of consecutive blocks), depending on the model architecture and chosen granularity. Starting from time 0, user $k$ continuously downloads the layers of model $i$ in the order of $l_{i}$ to the last layer $L_{i}$, while simultaneously performing layer-wise inference using the already downloaded layers. To further reduce the E2E latency, user $k$ moves task data to the GPU memory and executes model instantiation when downloading the first layer. Intuitively, the inference of user $k$ for layer $l_{i}$ begins at the latest of the following two events: (i) the completion of downloading layer $l_{i}$; and (ii) the completion of inference for the preceding layer $l_{i}-1$. 
As an exception, inference for layer~1 starts after the layer has been downloaded and the process of data movement and model instantiation is completed. Once the inference for layer $L_{i}$ is completed, user $k$ obtains the final inference result. Note that for model components whose internal computation is non-sequential (e.g., residual connections or parallel sub-module computation), SLIDE abstracts each such component as a single layer and incorporates its internal computation into the computation latency of that layer.

\begin{remark}
SLIDE supports feedforward models (e.g., convolutional neural networks (CNNs) like ResNets and Vision Transformers (ViTs)), autoregressive models (e.g., LLMs), and recurrent neural networks (RNNs). SLIDE enables parameter downloading for subsequent layers to overlap with the inference of the current layer. For autoregressive models, this overlap occurs during the prefilling stage, where input tokens are processed in a layer-wise feedforward manner before output generation. This is especially beneficial for latency reduction in long-context tasks, e.g., document summarization. For RNNs, this overlap applies at the first time step of inference. Moreover, while SLIDE focuses on full model downloads, it can extend to cases where users update models by downloading task-specific parameters (e.g., LoRA matrices for each LLM layer~\cite{hu2021lora}).
\end{remark}
\subsection{Latency Calculation}
Next, we provide the latency expression for SLIDE.
\subsubsection{Layer downloading latency}
We consider an orthogonal frequency-division multiple access (OFDMA) downlink transmission between the user and the BS. 
The time required for user $k$ to download layer $l_{i}$ alone is given by 
\begin{equation}\label{eq_tau}
    \tau_{k,l_{i}}=\frac{S_{l_{i}}}{y_{k}BR_{k}},
\end{equation}
where $S_{l_{i}}$ is the data size of layer $l_{i}$, and $B$ is the total bandwidth of the BS. $R_{k}={\rm{log}}_{2}\left(1+\frac{P\gamma_{k}}{N_{0}}\right)$, where $P$ is the transmit power spectral density of the BS, $\gamma_{k}$ is the channel gain between user $k$ and the BS, and $N_{0}$ is the spectral density of the additive white Gaussian noise. Moreover, $y_{k}\in\left[0,1\right]$ is the scaling factor of the bandwidth allocation\footnote{In OFDMA systems, channels are partitioned into time-frequency resource blocks (RBs). However, it is mathematically equivalent to consider bandwidth allocation for analysis, which can be mapped to RB allocation. For simplicity, we call it ``bandwidth allocation" in this paper.}, satisfying
\begin{equation}\label{const_2}
    \sum\limits_{k\in\mathcal{K}}y_{k}\le1.
\end{equation}

\subsubsection{Layer inference latency}
The inference latency of user $k$ with layer $l_{i}\in\left[1,L_{i}\right]$ is given by
\begin{equation}\label{eq_T}
    T_{k,i}\left(z_{k,l_{i}}\right)=V_{1}\left(k,l_{i}\right)+\frac{b_{k}W_{l_{i}}\kappa_{k}}{z_{k,l_{i}}f_{k}},
\end{equation}
where $V_{1}\left(k,l_{i}\right)$ denotes the time required for user $k$ to transfer layer $l_{i}$ from system memory to GPU memory\footnote{Layer- or block-wise parameter transfer and loading is supported by modern deep learning frameworks, e.g., PyTorch~\cite{pytorch_partial_Loading}. Without loss of generality, the GPU memory transfer time is represented by $V_{1}(k,l_i)$, typically approximated as the ratio of the data size of layer $l_i$ to the data rate between the system memory and GPU memory~\cite{xu2022igniter}.}. The second term in \eqref{eq_T} represents the time required for user $k$ to perform the forward propagation of layer $l_i$ on the GPU~\cite{lin2023efficient,zeng2021energy,yao2021evaluating}. Moreover, $b_{k}$ denotes the inference batch size of the input data samples of user $k$, $W_{l_{i}}$ represents the computation workload (in FLOPs) required by layer $l_{i}$ to process a single data sample, $\kappa_{k}$ denotes the computing intensity (in cycles/FLOP) of user $k$, and $f_{k}$ is the GPU clock frequency (in cycles/s) of user $k$. $z_{k,l_{i}}\in\left[0,1\right]$ is the GPU frequency allocation scaling factor of user $k$ for the forward propagation of layer $l_{i}$, satisfying 
\begin{equation}\label{const_3}
    z_{k,l_{i}}\le x_{k,i}\ \forall k \in\mathcal{K}, i\in\mathcal{I}_{k}, l_{i}\in\left[1,L_{i}\right],
\end{equation}
which ensures that computing resources are only allocated to the downloaded model.
Moreover, when $l_{i}=0$, $T_{k,i}\left(z_{k,0}\right)$ represents the latency of user $k$ for data movement and model instantiation, where $z_{k,0}=0$.


\subsubsection{The E2E latency}
By combining downloading and inference latency, we can obtain the E2E latency. First, as shown in Fig. \ref{fig:timeline}, the time at which user $k$ obtains the output of layer $l_{i}$ can be expressed as $t_{k,l_{i}}\left(y_{k},{\bf{Z}}_{k,i}\right)= \max\left\{\sum\limits_{l'_{i}=1}^{l_{i}}\frac{S_{l'_{i}}}{y_{k}BR_{k}},t_{k,l_{i}-1}\left(y_{k},{\bf{Z}}_{k,i}\right)\right\}+T_{k,i}\left(z_{k,l_{i}}\right)$, 
where ${\bf{Z}}_{k,i}$ is the vector of $z_{k,l_{i}}$, $\sum\limits_{l'_{i}=1}^{l_{i}}\frac{S_{l'_{i}}}{y_{k}BR_{k}}=\sum\limits_{l'_{i}=1}^{l_{i}}\tau_{k,l'_{i}}$ is the time for user $k$ to download the first $l_i$ layers of model $i$, $t_{k,l_{i}-1}\left(y_{k},{\bf{Z}}_{k,i}\right)$ denotes the time at which user $k$ obtains the output of layer $l_{i}-1$, and $t_{k,0}\left(y_{k},{\bf{Z}}_{k,i}\right)=T_{k,i}\left(z_{k,0}\right)$. 


Next, the E2E latency of user $k$, when provisioned with model $i$, is the time for user $k$ to obtain the output of the last layer of model $i$, which is hence given by
\begin{equation}\label{eq_final_e2e}
    \begin{aligned}
        &t_{k,L_{i}}\left(y_{k},{\bf{Z}}_{k,i}\right)\\ 
        &= \max\left\{\sum\limits_{l_{i}=1}^{L_{i}}\frac{S_{l_{i}}}{y_{k}BR_{k}},t_{k,L_{i}-1}\left(y_{k},{\bf{Z}}_{k,i}\right)\right\}+T_{k,i}\left(z_{k,L_{i}}\right),
    \end{aligned}
\end{equation}
which exhibits a recursive dependence across layers and is influenced by layer-wise downloading and inference latencies.

\subsection{Energy Consumption on Users}
Given that the process only involves downlink transmissions and the receiving power is typically small, we only consider the inference energy consumption on users. The inference energy consumption of user $k$ with model $i$ is given by
\begin{equation}\label{eq_e_k_i}
    e_{k,i}\left({\bf{Z}}_{k,i}\right)=e_{1}\left(k,i\right)+\sum\limits_{l_{i}=1}^{L_{i}}\Psi_{k}\kappa^{3}_{k} b_{k}W_{l_{i}}\left(\frac{z_{k,l_{i}}f_{k}}{\kappa_{k}}\right)^{2},
\end{equation}
where $e_{1}\left(k,i\right)$ denotes the energy consumption of user $k$ for the process of instantiating model $i$, moving input data, and transferring model $i$ to the GPU memory\footnote{For generality, we use a constant $e_{1}\left(k,i\right)$ to represent the energy consumption of these operations.}. The second term in \eqref{eq_e_k_i} denotes the energy consumption of user $k$ for forward propagation using model $i$ \cite{zeng2021energy,liu2012power,yao2021evaluating}, 
where $\Psi_{k}$ is the power coefficient (in Watt/$\text{(cycle/s)}^3$). Furthermore, the total inference energy consumption of user $k$ is expressed as 
\begin{equation}
    e_{k}\left({\bf{X}}_{k},{\bf{Z}}_{k}\right)=\sum\limits_{i\in\mathcal{I}_{k}}x_{k,i}e_{k,i}\left({\bf{Z}}_{k,i}\right),
\end{equation}
where ${\bf{X}}_{k}$ is the vector of $x_{k,i}$. Moreover, 
$e_{k}\left({\bf{X}}_{k},{\bf{Z}}_{k}\right)$ is subject to
\begin{equation}\label{const_4}
    e_{k}\left({\bf{X}}_{k},{\bf{Z}}_{k}\right)\le Q_{k},\ \forall k \in\mathcal{K},
\end{equation}
where $Q_{k}$ is the maximum inference energy budget of user $k$.
\subsection{Task Throughput Maximization Problem}
By considering latency and energy constraints, SLIDE aims to maximize the number of served users by jointly optimizing model provisioning $x_{k,i} \in {\bf{X}}$, spectrum bandwidth allocation $y_{k} \in {\bf{Y}}$, and computing resource allocation $z_{k,l_{i}} \in {\bf{Z}}$. The problem formulation is given as follows.
\begin{subequations}
	\begin{equation}
		{\mathcal{P}1}:\mathop{\max}\limits_{{\bf{X}},{\bf{Y}},{\bf{Z}}}\ U\left({\bf{X}},{\bf{Y}},{\bf{Z}}\right)=\sum\limits_{k\in\mathcal{K}}\sum\limits_{i\in\mathcal{I}}x_{k,i}\mathbb{I}_{\left\{t_{k,L_{i}}\left(y_{k},{\bf{Z}}_{k,i}\right)\le \bar{T}_{k}\right\}}
	\end{equation}	
        \begin{equation}
		{\rm{s.t.}} \ \eqref{const_1},\eqref{const_2},\eqref{const_3},\eqref{const_4},
	\end{equation}	
        \begin{equation}\label{const_8}
		x_{k,i}\le \mathbb{I}_{\left\{t_{k,L_{i}}\left(y_{k},{\bf{Z}}_{k,i}\right)\le \bar{T}_{k}\right\}},\ \forall k \in\mathcal{K},i\in\mathcal{I}_{k}, 
	\end{equation}	
        \begin{equation}\label{const_5}
		x_{k,i}\in\left\{0,1\right\},\ \forall k \in\mathcal{K}, i\in\mathcal{I}_{k},
	\end{equation}	
	\begin{equation}\label{const_6}
		y_{k}\in\left[0,1\right],\ \forall k \in\mathcal{K},
	\end{equation}	
        \begin{equation}\label{const_7}
		z_{k,l_{i}}\in\left[0,1\right],\ \forall k \in\mathcal{K}, l_{i}\in\left[1, L_{i}\right].
	\end{equation}	
\end{subequations}
Here, constraint \eqref{const_1} guarantees that user $k$ downloads at most one model from the BS, and constraint \eqref{const_2} limits the total bandwidth usage at the BS. Constraint \eqref{const_3} enforces that user $k$ allocates computing resources only to the downloaded model. Constraint \eqref{const_4} ensures user $k$'s energy consumption is within the budget. Constraint \eqref{const_8} ensures model $i$ can be provisioned to user $k$ if user $k$ can obtain the inference result using model $i$ within $\bar{T}_k$, where the binary indicator function $\mathbb{I}_{\left\{t_{k,L_{i}}\left(y_{k},{\bf{Z}}_{k,i}\right)\le \bar{T}_{k}\right\}}=1$ if and only if $t_{k,L_{i}}\left(y_{k},{\bf{Z}}_{k,i}\right)\le \bar{T}_{k}$.
\begin{remark}
    $z_{k,l_{i}}$ determines the per-layer GPU frequency of user $k$ for layer $l_{i}$ by considering the varying layer sizes and workloads. To avoid idle time between the completion of inference of the current layer and that of downloading the next layer, per-layer computing resource allocation and spectrum bandwidth allocation must be jointly optimized to maximize system capacity under latency and energy budgets.
\end{remark}

Note that $\mathcal{P}1$ is formulated using a static ``snapshot" of user locations and ignores user mobility, similar to prior work~\cite{9843917,9619857}. However, our framework can be easily extended to mobile scenarios by considering the worst-case channel gain within the latency deadline, which can be estimated by predicting user trajectories and corresponding channel conditions~\cite{9007469,7981536}, when calculating \eqref{eq_tau}. We will provide simulation results in Section V-C to demonstrate that our algorithm is robust under user mobility. 

In addition, although $\mathcal{P}1$ is formulated for a single decision time slot, long-term user fairness across multiple time slots can be incorporated without changing the core procedure of SLIDE. This can be achieved by introducing time-varying user weights that increase for previously unserved users~\cite{7976338}, and by sequentially solving the per-slot optimization problem. Since the user weights are known at the beginning of each decision time slot, the per-slot optimization problem preserves the same problem property as $\mathcal{P}1$, without affecting the subsequent algorithm design.

$\mathcal{P}1$ is a mixed-integer nonlinear programming (MINLP) problem involving both integer and continuous decision variables, making it challenging to solve efficiently. Conventional methods for MINLP problems, such as the Branch-and-Bound algorithm, have exponential time complexity for obtaining optimal solutions. To address this issue, we propose an efficient algorithm to solve $\mathcal{P}1$, detailed in the next section.

\section{Algorithm Design}
This section develops a polynomial-time optimal solution approach to $\mathcal{P}1$. We first determine the model provisioning and computing resource allocation for each user given the bandwidth allocation. Based on this, we \textit{equivalently} transform $\mathcal{P}1$ into a bandwidth allocation problem $\mathcal{P}3$, followed by the determination of model provisioning and computing resource allocation. To solve $\mathcal{P}3$, we design an iterative algorithm to determine each user's minimum feasible bandwidth allocation, and then propose an algorithm that efficiently yields the optimal solution to $\mathcal{P}1$.
\subsection{Model Provisioning and Computing Resource Allocation}
\subsubsection{Computing resource allocation for user $k$ with model $i$ given $y_{k}$}
To conquer Problem $\mathcal{P}1$, we begin by deriving the computing resource allocation for user $k$ with model $i\in\mathcal{I}_{k}$, assuming that the spectrum bandwidth allocation is given for user $k$. The details are summarized in the following proposition. 
\begin{proposition}\label{proposition_1}
    Given spectrum bandwidth allocation $y_{k}\in{\bf{Y}}$, the optimal computing resource allocation for user $k$ with model $i$, denoted by $\hat{z}^{*}_{k,l_{i}}\in\hat{{\bf{Z}}}^{*}_{k,i}$, can be obtained by solving the following problem $\mathcal{P}2$, where $\hat{z}_{k,l_{i}}\in\hat{{\bf{Z}}}_{k,i}$.
\end{proposition}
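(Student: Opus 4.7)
The plan is to prove Proposition 1 by a decoupling-and-replacement argument. First, I would observe that for a fixed user $k$, a fixed candidate model $i \in \mathcal{I}_k$, and a fixed bandwidth allocation $y_k$, the variables $\{z_{k,l_i}\}_{l_i=1}^{L_i}$ enter $\mathcal{P}1$ only through (i) the user's own E2E latency $t_{k,L_i}(y_k, {\bf{Z}}_{k,i})$ in \eqref{eq_final_e2e}, which drives \eqref{const_8}, and (ii) the user's own inference energy $e_{k,i}({\bf{Z}}_{k,i})$ in \eqref{eq_e_k_i}, which drives \eqref{const_4}. They do not appear in \eqref{const_1}, \eqref{const_2}, or in any constraint indexed by another user $k' \neq k$. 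Consequently, once ${\bf{X}}$ and ${\bf{Y}}$ are fixed, $\mathcal{P}1$ fully decomposes into $K$ independent per-user sub-problems, justifying a per-$(k,i)$ treatment of ${\bf{Z}}_{k,i}$.

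Second, I would argue that within each per-user sub-problem, the only objective contribution associated with the pair $(k,i)$ is the indicator $\mathbb{I}_{\{t_{k,L_i}(y_k, {\bf{Z}}_{k,i}) \le \bar{T}_k\}}$. Since this indicator is non-increasing in $t_{k,L_i}$, any ${\bf{Z}}_{k,i}$ that minimizes $t_{k,L_i}$ subject to the energy budget \eqref{const_4} and the box constraints \eqref{const_7} is a best response: reducing $t_{k,L_i}$ can only make \eqref{const_8} easier to satisfy, while the energy bound $Q_k$ is already imposed as a hard constraint. This is precisely what $\mathcal{P}2$ should encode.

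The replacement argument is then as follows: given any feasible $({\bf{X}}, {\bf{Y}}, {\bf{Z}})$ of $\mathcal{P}1$ with $x_{k,i} = 1$, substituting ${\bf{Z}}_{k,i}$ by $\hat{{\bf{Z}}}^{*}_{k,i}$ preserves feasibility of \eqref{const_4} (since $\hat{{\bf{Z}}}^{*}_{k,i}$ was feasible for $\mathcal{P}2$) and preserves feasibility of \eqref{const_8} (since $t_{k,L_i}$ can only decrease), while leaving all other users untouched. Hence the aggregate solution produced by solving $\mathcal{P}2$ for each $(k,i)$ is, without loss of generality, optimal conditional on $({\bf{X}}, {\bf{Y}})$.

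The hard part will be the recursive $\max$ inside \eqref{eq_final_e2e}: although decoupling across users is clean, $t_{k,L_i}$ itself is a non-smooth, recursively defined function of $(z_{k,1}, \ldots, z_{k,L_i})$, which makes the well-posedness of $\mathcal{P}2$ non-obvious. I plan to address this by unrolling the recursion to express $t_{k,L_i}$ as a maximum over $l_i$ of a partial download term $\sum_{l'_i=1}^{l_i} S_{l'_i}/(y_k B R_k)$ plus the tail of per-layer inference times $\sum_{l'_i=l_i}^{L_i} T_{k,i}(z_{k,l'_i})$, and then invoking continuity of each $T_{k,i}(\cdot)$, strict convexity of the energy \eqref{eq_e_k_i} in each $z_{k,l_i}$, and compactness of the feasible set to guarantee that $\mathcal{P}2$ admits a well-defined minimizer, as Proposition 1 asserts.
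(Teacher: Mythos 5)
Your argument is correct and follows essentially the same route as the paper: for fixed $y_{k}$ the download terms are constants and ${\bf{Z}}_{k,i}$ affects only user $k$'s own latency and energy, so minimizing $t_{k,L_{i}}$ under the energy budget is a best response because the objective depends on ${\bf{Z}}_{k,i}$ only through the indicator $\mathbb{I}_{\left\{t_{k,L_{i}}\left(y_{k},{\bf{Z}}_{k,i}\right)\le \bar{T}_{k}\right\}}$, which is non-increasing in the latency. The only point you skip is the degenerate case $y_{k}=0$, which the paper handles via constraint \eqref{const_p2_3} forcing $\hat{z}_{k,l_{i}}=0$ to match \eqref{const_3}; otherwise your per-user decoupling, replacement step, and well-posedness remarks are a more explicit rendering of the paper's brief proof.
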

\vspace{-15pt}
\begin{subequations}
	\begin{equation}
		{\mathcal{P}2}:\ t_{k,L_{i}}\left(y_{k},\hat{{\bf{Z}}}^{*}_{k,i}\right)=\mathop{\min}\limits_{\hat{{\bf{Z}}}_{k,i}}\ t_{k,L_{i}}\left(y_{k},\hat{{\bf{Z}}}_{k,i}\right)
	\end{equation}	
        \begin{equation}\label{const_p2_1}
		{\rm{s.t.}} \ e_{1}\left(k,i\right)+\sum\limits_{l_{i}=1}^{L_{i}}\Psi_{k}\kappa^{3}_{k} b_{k}W_{l_{i}}\left(\frac{\hat{z}_{k,l_{i}}f_{k}}{\kappa_{k}}\right)^{2}\le Q_{k},
	\end{equation}	
        \begin{equation}
            \hat{z}_{k,l_{i}}\in\left[0,1\right],\ \forall k \in\mathcal{K}, i\in\mathcal{I}_{k},
        \end{equation}
        \begin{equation}\label{const_p2_3}
            \hat{z}_{k,l_{i}}\le \mathbb{I}_{\left\{y_{k}>0\right\}},\ \forall k \in\mathcal{K}, i\in\mathcal{I}_{k},
        \end{equation}
    \end{subequations}   
where \eqref{const_p2_3} ensures that no computing resources of user $k$ are allocated to model $i$ when bandwidth allocation $y_{k}=0$, where the binary indicator function $\mathbb{I}_{\left\{y_{k}>0\right\}}=1$ if and only if $y_{k}>0$.
\begin{proof}
    The proof is provided in Appendix \myref{proof_proposition_1}.
\end{proof}

Next, we elaborate on how to solve $\mathcal{P}2$.
%
We begin by deriving the following proposition for $\mathcal{P}2$, where $e_{k,i}\left({\bf{1}}\right)$ denotes the energy consumption of user $k$ to perform forward propagation using model $i$ with $\hat{z}_{k,l_{i}}=1$ for all layers.


\begin{proposition}\label{lemma_0}
    For the optimal solution $\hat{{\bf{Z}}}^{*}_{k,i}$ to $\mathcal{P}2$, $t_{k,l_{i}-1}\left(y_{k},\hat{{\bf{Z}}}^{*}_{k,i}\right)\ge \sum\limits_{l'_{i}=1}^{l_{i}}\tau_{k,l'_{i}}$ holds when $y_{k}\ne 0$ and $e_{k,i}\left({\bf{1}}\right)\le Q_{k}$, $\forall l_{i}\in\left[2,L_{i}\right]$.
\end{proposition}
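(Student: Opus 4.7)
The plan is to prove the claim constructively: starting from a latency-optimal solution to $\mathcal{P}2$, I would build another optimal solution $\hat{{\bf Z}}^{*}_{k,i}$ that satisfies the stated inequality layer-by-layer, without altering the objective or violating any constraint. The key qualitative insight is that, whenever inference at layer $l_{i}-1$ finishes strictly before the download of layer $l_{i}$ completes, there is slack in the inference schedule that can be absorbed by decreasing $\hat{z}_{k,l_{i}-1}$ (slowing down layer-$(l_{i}-1)$ inference) without impacting the end-to-end latency.

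First, I would observe that under the hypotheses $y_{k}\ne 0$ and $e_{k,i}\left({\bf 1}\right)\leq Q_{k}$, setting $\hat{z}_{k,l_{i}}=1$ for all $l_{i}\in\left[1,L_{i}\right]$ is feasible for $\mathcal{P}2$: constraint \eqref{const_p2_1} reduces to $e_{k,i}\left({\bf 1}\right)\leq Q_{k}$ and \eqref{const_p2_3} is trivially satisfied since $\mathbb{I}_{\left\{y_{k}>0\right\}}=1$. Because $T_{k,i}(\hat{z})$ in \eqref{eq_T} is continuous and strictly decreasing in $\hat{z}$ on $(0,1]$ and $t_{k,L_{i}}$ is monotone in each per-layer $T_{k,i}$ via the recursion in \eqref{eq_final_e2e}, this all-ones vector is a latency-optimal solution.

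Next, iterating $l_{i}$ from $2$ to $L_{i}$, whenever $t_{k,l_{i}-1}\left(y_{k},\hat{{\bf Z}}_{k,i}\right)<\sum_{l'_{i}=1}^{l_{i}}\tau_{k,l'_{i}}$ I would decrease $\hat{z}_{k,l_{i}-1}$ to the unique value in $(0,1]$ that makes $t_{k,l_{i}-1}$ equal to $\sum_{l'_{i}=1}^{l_{i}}\tau_{k,l'_{i}}$. Such a value exists by the intermediate value theorem, since $T_{k,i}(\hat{z}_{k,l_{i}-1})$ ranges continuously from its minimum at $\hat{z}_{k,l_{i}-1}=1$ to $+\infty$ as $\hat{z}_{k,l_{i}-1}\to 0^{+}$. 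Feasibility is preserved because the quadratic energy term in \eqref{const_p2_1} is non-increasing when any $\hat{z}$ is decreased. The pivotal algebraic identity is that, both before and after this update, \eqref{eq_final_e2e} evaluated at layer $l_{i}$ yields $t_{k,l_{i}}=\sum_{l'_{i}=1}^{l_{i}}\tau_{k,l'_{i}}+T_{k,i}(\hat{z}_{k,l_{i}})$; propagating this invariance through the recursion shows that every subsequent $t_{k,l'_{i}}$ for $l'_{i}\geq l_{i}$ is unchanged, and in particular $t_{k,L_{i}}$ remains at its minimum value.

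The main obstacle is the bookkeeping across iterations: I must ensure that adjustments at larger indices $l_{i}$ do not invalidate the property secured at smaller indices nor alter the baseline download times. This is resolved by the forward dependency structure of \eqref{eq_final_e2e}, since modifying $\hat{z}_{k,l_{i}-1}$ only affects $t_{k,l'_{i}}$ with $l'_{i}\geq l_{i}-1$ and leaves earlier time instants intact; combined with the latency-invariance identity above, a straightforward induction on $l_{i}$ confirms that the resulting $\hat{{\bf Z}}^{*}_{k,i}$ is still optimal for $\mathcal{P}2$ and satisfies $t_{k,l_{i}-1}\left(y_{k},\hat{{\bf Z}}^{*}_{k,i}\right)\geq\sum_{l'_{i}=1}^{l_{i}}\tau_{k,l'_{i}}$ for every $l_{i}\in\left[2,L_{i}\right]$.
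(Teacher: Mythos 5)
There is a genuine gap here, rooted in which regime you work in. You take the hypothesis $e_{k,i}\left({\bf{1}}\right)\le Q_{k}$ at face value and argue by perturbing the all-ones allocation. Within that regime your construction is internally consistent, but it only shows that \emph{some} optimal solution satisfies the inequality: the all-ones vector is itself optimal (the objective of $\mathcal{P}2$ is non-increasing in every $\hat{z}_{k,l_{i}}$, so the componentwise-maximal feasible point minimizes it), and whenever downloading is the bottleneck it violates $t_{k,l_{i}-1}\ge\sum_{l'_{i}=1}^{l_{i}}\tau_{k,l'_{i}}$. So under the condition as printed, the proposition read as a property of ``the optimal solution'' is not established by your argument --- you have quietly replaced it with an existence statement about a specially constructed optimum.

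More importantly, the regime in which the paper proves and actually uses this proposition is the complementary one, $e_{k,i}\left({\bf{1}}\right)>Q_{k}$: the condition in the statement appears to be a typo (compare the opening line of the paper's proof and the case split in Proposition \ref{lemma_2}, where the no-idle-time property is what licenses rewriting $\mathcal{P}2$ as $\mathcal{P}2'$ with constraint \eqref{const_p4_1}). In that energy-constrained regime your proof has no starting point, since the all-ones vector violates \eqref{const_p2_1} and there is no canonical optimum to perturb. The paper instead argues by contradiction with an exchange step: if inference of layer $l_{i}-1$ finished strictly before layer $l_{i}$ finished downloading, one could slow layer $l_{i}-1$ down just enough to close the gap --- which, as in your argument, delays nothing --- and then reinvest the saved energy to speed up a compute-bound layer, strictly reducing $t_{k,L_{i}}$ and contradicting optimality. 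That energy-reallocation step, which is precisely where the binding energy budget does its work, is the idea missing from your proposal.
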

\begin{proof}
    The proof is shown in Appendix \myref{proof_lemma_0}.
\end{proof}

\textbf{Takeaway for Proposition \ref{lemma_0}}: Proposition \ref{lemma_0} shows that, given $y_{k}\ne 0$ and $e_{k,i}\left({\bf{1}}\right)\le Q_{k}$, the inference completion time of layer $l_{i}-1$ is no earlier than the downloading completion time of layer $l_{i}$, $\forall l_{i}\in\left[2,L_{i}\right]$. In other words, the above conditions indicate that inference of layer $l_{i}$ starts right after inference completion of layer $l_{i}-1$, ensuring continuous forward propagation without idle time between successive layers. 

With Proposition \ref{lemma_0}, we have the following result for $\hat{z}^{*}_{k,l_{i}}$.
\begin{figure*}[b]
\begin{equation}\label{eq_update}
    \begin{aligned}
        &
        \rho^{\left(m\right)}_{l_{i}}=1-\sum\limits_{l'_{i}=l_{i}}^{L_{i}}\mu^{\left(m\right)}_{l'_{i}},\quad
        \eta^{\left(m\right)}=\chi\left(\mu^{\left(m\right)}_{l_{i}},Q'_{k}\right),\quad
        \hat{z}^{\left(m\right)}_{k,l_{i}}= \min\left\{1,\max\left\{0,\sqrt[3]{\frac{\rho^{\left(m\right)}_{l_{i}}}{2\eta^{\left(m\right)}}}\right\}\right\},\\
        &\mu^{\left(m+1\right)}_{l_{i}}
        =\begin{cases}
                    \max\left\{0,\mu^{\left(m\right)}_{l_{i}}+\delta^{\left(m\right)}_{\mu_{l_{i}}}\left[ \sum\limits_{l_{i}'=1}^{l_{i}+1}\tau_{k,l'_{i}}-s_{k,1}-\sum\limits_{l'_{i}=1}^{l_{i}}V_{1}\left(k,l'_{i}\right)-\sum\limits_{l'_{i}=1}^{l_{i}}\frac{\Gamma_{k,l'_{i}}}{\tilde{z}^{\left(m\right)}_{k,l'_{i}}}\right]\right\},\ \text{if }1\le l_{i}\le L_{i}-1,\\
				0,\ \text{if }l_{i}=L_{i}.
        \end{cases}\\
    \end{aligned}
    \end{equation}
\vspace{-10pt}
\end{figure*}
\begin{proposition}\label{lemma_2}
    The optimal solution $\hat{z}^{*}_{k,l_{i}}$ to $\mathcal{P}2$ is given by
    \begin{equation}\label{eq_z_2}
        \hat{z}^{*}_{k,l_{i}}=\\
        \begin{cases}
            0, \ \text{if }y_{k}=0,\\
            1, \ \text{if }y_{k}\ne0 \text{ and }e_{k,i}\left({\bf{1}}\right)\le Q_{k},\\
            \begin{aligned}
            \dot{z}_{k,l_{i}}, 
            &\text{if }y_{k}\ne0 ,e_{k,i}\left({\bf{1}}\right)> Q_{k}, e_{k,i}\left(\dot{{\bf{Z}}}_{k,i}\right)\le Q_{k}, \\
            &\text{ and }{\bf{0}}\preceq\dot{{\bf{Z}}}_{k,i} \preceq{\bf{1}},
            \end{aligned}
            \\
            \min\left\{1,\max\left\{0,\sqrt[3]{\frac{\rho^{*}_{l_{i}}}{2\eta^{*}}}\right\}\right\}, \ \text{otherwise}.\\
        \end{cases}
    \end{equation}
    Here, $\dot{z}_{k,l_{i}}\in \dot{{\bf{Z}}}_{k,i}$ is defined in \eqref{eq_dot_z}, where $\Gamma_{k,l_{i}}=\frac{b_{k}W_{l_{i}}\kappa_{k}}{f_{k}}$ and $Q'_{k}=\frac{Q_{k}-e_{1}\left(k,i\right)}{\Psi_{k}f^{3}_{k}}$. Moreover, $\rho^{*}_{l_{i}}=1-\sum\limits_{l'_{i}=l_{i}}^{L_{i}}\mu^{*}_{l'_{i}}$, and $\mu^{*}_{l_{i}}$ and $\eta^{*}$ denote the optimal Lagrange multipliers, which can be determined via iteration using the update rules in \eqref{eq_update} \cite{1664999}, with $\mu^{*}_{L_{i}}=0$. In \eqref{eq_update}, $\mu^{\left(m\right)}_{l_{i}}$, $\eta^{\left(m\right)}$,  $\tilde{z}^{\left(m\right)}_{k,l_{i}}$, and $\rho^{\left(m\right)}_{l_{i}}$ represent the values of $\mu_{l_{i}}$, $\eta$, $\tilde{z}_{k,l_{i}}$, and $\rho_{l_{i}}$ in the $m$-th iteration, respectively, and $\delta^{\left(m\right)}_{\mu_{l_{i}}}$ is the step size for updating $\mu^{\left(m\right)}_{l_{i}}$.
    Additionally, $\chi\left(\mu^{\left(m\right)}_{l_{i}},Q'_{k}\right)$ is the inverse function that determines $\eta^{\left(m\right)}$ from $\sum\limits_{l_{i}=1}^{L_{i}}\Gamma_{k,l_{i}}\min\left\{1,\max\left\{0,\left(\frac{\rho^{\left(m\right)}_{l_{i}}}{2\eta^{\left(m\right)}}\right)^{\frac{2}{3}}\right\}\right\}=Q'_{k}$ in each iteration.
\end{proposition}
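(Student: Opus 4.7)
The plan is to dispose of the first two branches of \eqref{eq_z_2} by direct arguments and then treat Cases 3--4 via Lagrangian duality on a convex reformulation. Case 1 ($y_k=0$) is immediate: constraint \eqref{const_p2_3} together with $\hat{z}_{k,l_i}\ge 0$ forces $\hat{z}^{*}_{k,l_i}=0$ for every layer. For Case 2 ($y_k\ne 0$ and $e_{k,i}({\bf{1}})\le Q_k$), I would invoke Proposition \ref{lemma_0} to collapse the recursive max in \eqref{eq_final_e2e}: since every inference completion dominates the corresponding cumulative download time, the objective simplifies to $\max\{\tau_{k,1},T_{k,i}(z_{k,0})\}+\sum_{l_i=1}^{L_i}T_{k,i}(\hat{z}_{k,l_i})$, which is separable and strictly decreasing in each $\hat{z}_{k,l_i}$. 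The assumption $e_{k,i}({\bf{1}})\le Q_k$ makes $\hat{z}_{k,l_i}=1$ feasible, and hence optimal.

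For the remaining cases the energy budget is binding and Proposition \ref{lemma_0} no longer applies automatically. I would reformulate $\mathcal{P}2$ by introducing epigraph variables $s_{l_i}$ subject to the constraints $s_{l_i}\ge s_{l_i-1}+T_{k,i}(\hat{z}_{k,l_i})$ and $s_{l_i}\ge\sum_{l'=1}^{l_i}\tau_{k,l'}+T_{k,i}(\hat{z}_{k,l_i})$, yielding a convex program because $T_{k,i}$ is convex in $\hat{z}_{k,l_i}$ on $(0,1]$ and the energy term is convex quadratic. I would then form the Lagrangian with multiplier $\eta$ for the energy constraint \eqref{const_p2_1} and multipliers $\mu_{l_i}$ (for $l_i=1,\dots,L_i-1$) for the ``no-idle'' constraints $s_{l_i}\ge\sum_{l'=1}^{l_i+1}\tau_{k,l'}$ that, at equality, force layer-$(l_i+1)$ download to complete no later than layer-$l_i$ inference. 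Computing $\partial\mathcal{L}/\partial\hat{z}_{k,l_i}=0$ uses the fact that $\partial t_{k,l'_i}/\partial\hat{z}_{k,l_i}=-\Gamma_{k,l_i}/\hat{z}_{k,l_i}^{2}$ for $l'_i\ge l_i$ and is zero otherwise; the telescoping contributions from the objective and the $\mu$-terms produce the cubic $2\eta\hat{z}_{k,l_i}^{3}=\rho^{*}_{l_i}$ with $\rho^{*}_{l_i}=1-\sum_{l'_i\ge l_i}\mu^{*}_{l'_i}$, which after projection onto $[0,1]$ yields the fourth branch of \eqref{eq_z_2}.

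Case 3 is then the situation in which the interior closed-form $\dot{z}_{k,l_i}$ from \eqref{eq_dot_z}, obtained by equating consecutive-layer download and inference completion times, already lies in $[0,1]^{L_i}$ and satisfies the energy budget; complementary slackness immediately certifies that $\dot{{\bf{Z}}}_{k,i}$ meets all KKT conditions, so no iterative refinement is needed. Case 4 covers the complementary setting, where at least one projection onto $[0,1]$ is active or some ordering constraint binds only loosely, and the optimum must be located by dual ascent. The subgradient update for $\mu^{(m)}_{l_i}$ in \eqref{eq_update} is exactly the projected-subgradient step on the concave dual, while the inversion $\eta^{(m)}=\chi(\mu^{(m)}_{l_i},Q'_k)$ enforces the binding energy constraint in each iteration; convergence to $(\mu^{*}_{l_i},\eta^{*})$ follows from standard convex-duality results~\cite{1664999}, since Slater's condition is easily verified.

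The principal obstacle I foresee is the reformulation and decoupling step: the recursive max in \eqref{eq_final_e2e} must be rewritten in a form that is provably convex so that strong duality and KKT sufficiency apply, while at the same time cleanly separating the contribution of each $\hat{z}_{k,l_i}$ across the ordering multipliers $\{\mu_{l'_i}\}_{l'_i\ge l_i}$ so that the telescoping sum defining $\rho^{*}_{l_i}$ emerges naturally. A secondary difficulty is the case analysis itself, where one must show that Case 3 exhausts precisely those configurations in which the unprojected interior root $\dot{z}_{k,l_i}$ is already feasible, so that the iterative Case 4 machinery is invoked only when strictly necessary.
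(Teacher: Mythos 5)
Your proposal is correct and follows essentially the same route as the paper's proof: the same case split on $y_k$ and the energy budget, a convex reformulation of the binding-energy cases, a partial Lagrangian with multipliers $\mu_{l_i}$ on the download-ordering constraints and $\eta$ on the energy constraint, the cubic stationarity condition $2\eta^{*}\left(\hat{z}^{*}_{k,l_i}\right)^{3}=\rho^{*}_{l_i}$ projected onto $[0,1]$, the $\eta^{*}=0$ branch yielding $\dot{z}_{k,l_i}$, and the projected-subgradient iteration for the multipliers. The only (immaterial) divergence is that the paper justifies the convex reformulation $\mathcal{P}2'$ by invoking Proposition~\ref{lemma_0} precisely in the binding-energy regime to resolve the recursive max and then imposing the download-completion inequalities as explicit constraints, whereas you obtain the same structure via an epigraph reformulation.
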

\begin{equation}\label{eq_dot_z}
    \dot{z}_{k,l_{i}}=
    \begin{cases}
        \frac{\Gamma_{k,1}}{\tau_{k,1}+\tau_{k,2}-s_{k,1}-V_{1}\left(k,1\right)}, \text{ if } l_{i}=1,\\
        \frac{\Gamma_{k,l_{i}}}{\tau_{k,l_{i}+1}-V_{1}\left(k,l_{i}\right)},\text{ if } 2\le l_{i}\le L_{i}-1,\\
        1, \text{ if } l_{i}=L_{i}.
    \end{cases}
\end{equation}
\begin{proof}
    The proof is shown in Appendix \myref{proof_lemma_2}.
\end{proof}
Based on Proposition \ref{lemma_2}, we have the following corollary for the computing resource allocation and the energy consumption in the optimal solution to $\mathcal{P}1$.
\begin{corollary}\label{remark_2}
    For the optimal solution to $\mathcal{P}1$, if user $k$ is provisioned with model $i$ and assigned a positive $y_{k}$, then the following results hold.
    \begin{itemize}
        \item If $e_{k,i}\left({\bf{1}}\right)\le Q_{k}$, then user $k$ should perform forward propagation using model $i$ with the maximum GPU clock frequency.
        \item If both $e_{k,i}\left({\bf{1}}\right)$ and $e_{k,i}\left(\dot{{\bf{Z}}}_{k,i}\right)$ exceed $Q_{k}$, user $k$ will consume the entire energy budget to complete forward propagation with model $i$.
    \end{itemize}
\end{corollary}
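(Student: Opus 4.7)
The plan is to derive both bullets directly from the characterization of $\hat{{\bf{Z}}}^*_{k,i}$ in Proposition~\ref{lemma_2}. By Proposition~\ref{proposition_1}, for any user $k$ who is provisioned with some model $i$ and assigned a positive $y_k$ in the optimal solution to $\mathcal{P}1$, the per-layer scaling factors $\hat{z}^*_{k,l_i}$ must coincide with a minimizer of $\mathcal{P}2$ evaluated at that $y_k$. It therefore suffices to identify which branch of \eqref{eq_z_2} is active under each hypothesis and to interpret it in physical terms.

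For the first bullet, the hypothesis $e_{k,i}({\bf{1}})\le Q_k$ together with $y_k > 0$ falls squarely in the second branch of \eqref{eq_z_2}, so $\hat{z}^*_{k,l_i} = 1$ for every $l_i \in [1,L_i]$. Substituting this into \eqref{eq_T} gives an effective clock of $z_{k,l_i}f_k = f_k$, i.e., user $k$ runs the forward pass at the maximum GPU frequency. I would briefly justify this choice by noting that it is simultaneously feasible (by hypothesis on the budget) and layer-wise latency-minimizing, consistent with the no-idle-gap structure guaranteed by Proposition~\ref{lemma_0}.

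For the second bullet, the hypotheses $e_{k,i}({\bf{1}}) > Q_k$ and $e_{k,i}(\dot{{\bf{Z}}}_{k,i}) > Q_k$ rule out the second and third branches of \eqref{eq_z_2}, so $\mathcal{P}2$ is solved by the ``otherwise'' branch $\hat{z}^*_{k,l_i} = \min\{1,\max\{0,\sqrt[3]{\rho^*_{l_i}/(2\eta^*)}\}\}$. I would then invoke the KKT conditions of $\mathcal{P}2$ with Lagrange multiplier $\eta$ for the energy constraint \eqref{const_p2_1}: since the unconstrained latency-minimizer $\hat{{\bf{Z}}}_{k,i} = {\bf{1}}$ is infeasible whenever $e_{k,i}({\bf{1}}) > Q_k$, the constraint must be active at the optimum, giving $\eta^* > 0$ and, by complementary slackness, $e_{k,i}(\hat{{\bf{Z}}}^*_{k,i}) = Q_k$. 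Hence user $k$ exhausts the entire energy budget when completing forward propagation with model $i$.

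The main obstacle I anticipate is in the second bullet: one must rigorously argue that $\eta^*>0$ rather than simply asserting the constraint binds. The cleanest route is a feasibility/infeasibility contrast---$\hat{{\bf{Z}}}_{k,i}={\bf{1}}$ minimizes the unconstrained latency but violates the budget, so the optimizer of $\mathcal{P}2$ must push against the constraint---which, combined with the strict monotonicity of each summand of $e_{k,i}$ in $z_{k,l_i}$, yields both positivity of $\eta^*$ and the equality $e_{k,i}(\hat{{\bf{Z}}}^*_{k,i}) = Q_k$.
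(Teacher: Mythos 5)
Your proposal is correct and follows essentially the same route as the paper's proof: both reduce the corollary to the case structure of \eqref{eq_z_2} in Proposition~\ref{lemma_2} (via Proposition~\ref{proposition_1} to tie $\mathcal{P}2$ back to $\mathcal{P}1$), read off $\hat{z}^{*}_{k,l_{i}}=1$ for the first bullet, and use the complementary slackness condition \eqref{eq_eta_kkt}--\eqref{eq_kkt_energy} to conclude the energy budget is exhausted in the second. If anything, you are slightly more careful than the paper on the one delicate step---justifying that the energy constraint actually binds (rather than asserting $\eta^{*}\ne 0$ directly from \eqref{eq_eta_kkt}) via the infeasibility of both $\mathbf{1}$ and $\dot{\mathbf{Z}}_{k,i}$ together with the monotonicity of $e_{k,i}$ in each $\hat{z}_{k,l_{i}}$.
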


\begin{proof}
    The proof is shown in Appendix \myref{proof_remark_2}.    
\end{proof}
\subsubsection{Model provisioning and computing resource allocation for user $k$ given $y_{k}$}
Based on $\hat{z}^{*}_{k,l_{i}}$ derived in Proposition~\ref{lemma_2}, we establish the following proposition.
\begin{proposition}\label{theorem_1}
    Given spectrum bandwidth allocation $y_{k}\in{\bf{Y}}$, the optimal model provisioning for user $k$ is 
    \begin{equation}\label{eq_x_optimal}
        \tilde{x}^{*}_{k,i}
        =\left\{ {\begin{array}{*{20}{c}}
                    \begin{aligned}
				&1,\ \text{if }y_{k}\ne 0\text{ and }i=i^{*} ,\\
                    &0,\ \text{otherwise},
                    \end{aligned}
		\end{array}} \right.
    \end{equation}
    and the optimal computing resource allocation is 
    \begin{equation}\label{eq_z_optimal}
        \tilde{z}^{*}_{k,l_{i}}
        =\left\{ {\begin{array}{*{20}{c}}
                    \begin{aligned}
				&\hat{z}^{*}_{k,l_{i}},\ \text{if }y_{k}\ne 0 \text{ and } i=i^{*},\\
                    &0,\ \text{otherwise},
                    \end{aligned}
		\end{array}} \right.
    \end{equation}
    where $i^{*}=\mathop{\arg\min}\limits_{i\in\mathcal{I}_{k}}\left\{t_{k,L_{i}}\left(y_{k},\hat{{\bf{Z}}}^{*}_{k,i}\right)\  \middle| \ t_{k,L_{i}}\left(y_{k},\hat{{\bf{Z}}}^{*}_{k,i}\right)\le \bar{T}_{k}\right\}$,
    and $t_{k,L_{i}}\left(y_{k},\hat{{\bf{Z}}}^{*}_{k,i}\right)$ is the minimum E2E latency of user $k$ with model $i$ under $y_{k}$, which can be obtained by solving $\mathcal{P}2$. 
    
\end{proposition}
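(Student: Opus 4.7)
The plan is to exploit the fact that once $\mathbf{Y}$ is fixed, the objective $U$ of $\mathcal{P}1$ decomposes over users: by \eqref{const_1} each user $k$ contributes at most one to the throughput, and its contribution depends only on $(\mathbf{X}_k,\mathbf{Z}_k,y_k)$. So for each user $k$ I would argue separately that the pair $(\tilde{x}^*_{k,\cdot},\tilde{z}^*_{k,\cdot})$ in \eqref{eq_x_optimal}--\eqref{eq_z_optimal} attains the maximum possible contribution of one whenever that contribution is achievable, and zero otherwise. Propositions \ref{proposition_1} and \ref{lemma_2} already supply, for every candidate model $i$, the energy-feasible computing allocation $\hat{\mathbf{Z}}^{*}_{k,i}$ that minimizes $t_{k,L_{i}}$, so the inner optimization over $(\mathbf{X}_k,\mathbf{Z}_k)$ reduces to checking which models respect the deadline under this minimizer.

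First I would dispose of the case $y_k=0$: from \eqref{eq_tau} the per-layer downloading time is unbounded, so $t_{k,L_{i}}(0,\mathbf{Z}_{k,i})>\bar{T}_{k}$ for every $\mathbf{Z}_{k,i}$; constraint \eqref{const_8} then forces $x_{k,i}=0$ for all $i\in\mathcal{I}_{k}$ and \eqref{const_3} forces $z_{k,l_{i}}=0$, matching the ``otherwise'' branch of \eqref{eq_x_optimal}--\eqref{eq_z_optimal}. For $y_k>0$, because $\hat{\mathbf{Z}}^{*}_{k,i}$ is the latency-minimizing energy-feasible allocation for model $i$, assigning $x_{k,i}=1$ is consistent with \eqref{const_4} and \eqref{const_8} if and only if $t_{k,L_{i}}(y_{k},\hat{\mathbf{Z}}^{*}_{k,i})\le\bar{T}_{k}$. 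Let $\mathcal{F}_k$ denote the set of such feasible models. If $\mathcal{F}_k=\emptyset$, the maximum contribution is zero and the all-zero assignment matches the proposition; if $\mathcal{F}_k\neq\emptyset$, then choosing any $i\in\mathcal{F}_k$ together with $\tilde{z}^{*}_{k,l_i}=\hat{z}^{*}_{k,l_i}$ satisfies \eqref{const_1}, \eqref{const_3}, \eqref{const_4}, and \eqref{const_8} and delivers contribution one, matching the per-user upper bound.

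The only subtlety I foresee is that \eqref{eq_x_optimal} singles out the specific argmin $i^{*}=\arg\min_{i\in\mathcal{F}_k}t_{k,L_{i}}(y_{k},\hat{\mathbf{Z}}^{*}_{k,i})$, whereas the throughput objective is indifferent among the members of $\mathcal{F}_k$. I would resolve this by casting the argmin as a canonical tie-breaking rule: it is a valid maximizer of the inner problem and, because it attains the smallest E2E latency among all feasible models, it preserves the largest deadline slack for user $k$, which is exactly the property exploited by the subsequent bandwidth-reduction step that transforms $\mathcal{P}1$ into $\mathcal{P}3$. The core technical work is therefore the two-case feasibility argument above; the specific choice $i^{*}$ is enforced to guarantee consistency with the downstream algorithm rather than for the inner optimality of $U$ alone.
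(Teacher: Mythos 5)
Your proposal is correct and follows essentially the same route as the paper's proof: a per-user two-case analysis (infeasible when $y_k=0$ or no model meets the deadline, hence all-zero assignment; otherwise provision the latency-minimizing feasible model with the allocation $\hat{\mathbf{Z}}^{*}_{k,i}$ from $\mathcal{P}2$). Your explicit observation that the throughput objective is indifferent among all deadline-feasible models and that the specific choice of $i^{*}$ is a tie-breaking rule serving the downstream reduction to $\mathcal{P}3$ is a point the paper's proof states only implicitly ("achieves the minimum E2E latency, thus preserving the optimality"), but it does not change the argument.
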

\begin{proof}
    On the one hand, if $y_{k}= 0$ or no model in $\mathcal{I}_{k}$ satisfies the latency constraint, then no feasible model provisioning or computing resource allocation exists that enables user $k$ to satisfy $\mathbb{I}_{\left\{t_{k,L_{i}}\left(y_{k},{\bf{Z}}_{k,i}\right)\le \bar{T}_{k}\right\}}=1$. In this case, both $\tilde{x}^{*}_{k,i}$ and $\tilde{z}^{*}_{k,l_{i}}$ are set to 0. On the other hand, if $y_{k}\ne 0$ and there exists at least one model in $\mathcal{I}_{k}$ satisfying the latency constraint, then provisioning user $k$ with model $i^{*}$ and setting $\tilde{z}^{*}_{k,l_{i^{*}}}=\hat{z}^{*}_{k,l_{i^{*}}}$ achieves the minimum E2E latency, thus preserving the optimality. This completes the proof.
\end{proof}

\subsection{Equivalent Problem Transformation}
To simplify solving $\mathcal{P}1$, we leverage Proposition~\ref{lemma_2} to decouple both model provisioning and computing resource allocation from the original problem, leading to the following proposition.
\begin{proposition}\label{theorem_2}
    Solving $\mathcal{P}1$ is equivalent to first solving the following problem $\mathcal{P}3$ on bandwidth allocation and then determining the model provisioning and computing resource allocation using \eqref{eq_x_optimal} and \eqref{eq_z_optimal}, respectively.
\end{proposition}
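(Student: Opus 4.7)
The plan is to establish the equivalence by a standard two-direction argument, using Proposition~\ref{proposition_1} to decouple computing resources from $\mathcal{P}1$ and Proposition~\ref{theorem_1} to decouple model provisioning. The key observation is that the objective of $\mathcal{P}1$ is the sum $\sum_k \sum_i x_{k,i}\mathbb{I}_{\{t_{k,L_i}(y_k,\mathbf{Z}_{k,i})\le \bar T_k\}}$, so each user contributes at most one unit to throughput (by constraint \eqref{const_1}), and the contribution of user $k$ depends only on whether some model in $\mathcal{I}_k$ can be downloaded-and-inferred within $\bar T_k$ under the allocated $y_k$ and a feasible $\mathbf{Z}_{k,i}$.

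First I would formally take $\mathcal{P}3$ to be the pure bandwidth-allocation problem that maximizes $\sum_{k\in\mathcal{K}} \mathbb{I}_{\{\min_{i\in\mathcal{I}_k} t_{k,L_i}(y_k,\hat{\mathbf{Z}}^*_{k,i})\le \bar T_k\}}$ over $\mathbf{Y}$ subject to \eqref{const_2} and \eqref{const_6}, where $\hat{\mathbf{Z}}^*_{k,i}$ is the energy-feasible latency-minimizer obtained in Proposition~\ref{proposition_1}. For the forward direction, let $(\mathbf{X}^*,\mathbf{Y}^*,\mathbf{Z}^*)$ be optimal for $\mathcal{P}1$. For every served user $k$ with $x^*_{k,i}=1$ and $t_{k,L_i}(y^*_k,\mathbf{Z}^*_{k,i})\le \bar T_k$, the optimality in Proposition~\ref{proposition_1} gives $t_{k,L_i}(y^*_k,\hat{\mathbf{Z}}^*_{k,i})\le t_{k,L_i}(y^*_k,\mathbf{Z}^*_{k,i})\le \bar T_k$, and then Proposition~\ref{theorem_1} ensures the existence of $i^*$ with $t_{k,L_{i^*}}(y^*_k,\hat{\mathbf{Z}}^*_{k,i^*})\le \bar T_k$. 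Hence $\mathbf{Y}^*$ is feasible for $\mathcal{P}3$ and achieves at least the same objective value, so it is optimal for $\mathcal{P}3$.

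For the reverse direction, take any optimal $\mathbf{Y}^*$ for $\mathcal{P}3$, and construct $(\tilde{\mathbf{X}}^*,\tilde{\mathbf{Z}}^*)$ via \eqref{eq_x_optimal} and \eqref{eq_z_optimal}. I would verify feasibility constraint by constraint: \eqref{const_1} follows since at most one $i^*$ is chosen per user; \eqref{const_3} holds because $\tilde z^*_{k,l_i}=0$ whenever $\tilde x^*_{k,i}=0$; \eqref{const_4} is inherited from \eqref{const_p2_1} in $\mathcal{P}2$; and \eqref{const_8} follows from the very definition of $i^*$ in Proposition~\ref{theorem_1}. The $\mathcal{P}1$ objective under this solution equals the $\mathcal{P}3$ objective by construction, closing the loop. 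Combining the two directions gives that the two problems attain the same optimum and that $(\tilde{\mathbf{X}}^*,\mathbf{Y}^*,\tilde{\mathbf{Z}}^*)$ is optimal for $\mathcal{P}1$.

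The main obstacle is handling the ``ties'' in the reverse direction: an optimal $\mathcal{P}1$ solution could serve user $k$ with some $i\ne i^*$ or with $\mathbf{Z}_{k,i}\ne\hat{\mathbf{Z}}^*_{k,i}$, and I must argue that replacing these by the quantities in \eqref{eq_x_optimal}--\eqref{eq_z_optimal} does not reduce throughput nor violate feasibility. This reduces to showing that $i^*$ in Proposition~\ref{theorem_1} remains latency-feasible whenever any $i\in\mathcal{I}_k$ is latency-feasible under the same $y_k$, which is exactly the argmin characterization of $i^*$ restricted to the feasible set. No separate continuity or differentiability argument is needed, since the objective is integer-valued and the per-user decision collapses to a feasibility test parameterized by $y_k$ alone.
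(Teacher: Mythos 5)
Your proposal is correct and follows essentially the same route as the paper: both decouple $\mathbf{Z}$ via Proposition~\ref{proposition_1} and $\mathbf{X}$ via Proposition~\ref{theorem_1}, substitute the resulting $y_k$-parameterized optima into $\mathcal{P}1$, and conclude the reduced problem over $\mathbf{Y}$ is equivalent. The only difference is presentational: you spell out the two-direction (feasibility and value-preservation) argument and the tie-handling for $i^*$ explicitly, whereas the paper compresses this into a one-paragraph substitution argument.
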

\begin{subequations}
	\begin{equation}
		{\mathcal{P}3}:\ \mathop{\max}\limits_{{\bf{Y}}}\ U\left({\tilde{\bf{X}}}^{*},{\bf{Y}},\tilde{{\bf{Z}}}^{*}\right)
	\end{equation}	
        \begin{equation}
		{\rm{s.t.}} \ \eqref{const_2},\eqref{const_8},\eqref{const_6},
	\end{equation}	
\end{subequations}
where $\tilde{\bf{X}}^{*}$ and ${\tilde{{\bf{Z}}}}^{*}$ are the vectors of $\tilde{x}^{*}_{k,i}$ and $\tilde{z}^{*}_{k,l_{i}}$, respectively. They are functions of ${\bf{Y}}$, which can be obtained from \eqref{eq_x_optimal} and \eqref{eq_z_optimal}, respectively.
\begin{proof}
    From Proposition \ref{theorem_1}, the optimal model provisioning and computing resource allocation for user $k$, under any given $y_{k}$, can be derived from \eqref{eq_x_optimal} and \eqref{eq_z_optimal}, respectively. Substituting $t_{k,L_{i}}\left(y_{k},\tilde{\bf{Z}}_{k,i}\right)$ into $\mathcal{P}1$ and removing the constraints on $\bf{X}$ and $\bf{Z}$ lead to an equivalent problem $\mathcal{P}3$, which depends solely on $\bf{Y}$. Therefore, solving $\mathcal{P}3$ for the spectrum bandwidth allocation, followed by determining the model provisioning and computing resource allocation based on \eqref{eq_x_optimal} and \eqref{eq_z_optimal}, respectively, obtains the optimal solution to $\mathcal{P}1$, which completes the proof.
\end{proof}
\subsection{Optimal Solution Approach}
Based on Proposition \ref{theorem_2}, this subsection derives the optimal solution to $\mathcal{P}1$. We begin by determining the minimum feasible bandwidth allocation $\check{y}_k$ for user $k$, as defined in Definition~\ref{definition_1}. Based on $\check{y}_k$, we then derive the optimal bandwidth allocation for $\mathcal{P}3$ and obtain the optimal solution to $\mathcal{P}1$.
\begin{definition}\label{definition_1}
     \textbf{Minimum feasible bandwidth allocation}: Given a constant $\check{y}_{k}\in\left[0,1\right]$, the minimum feasible bandwidth allocation for user $k$ is the smallest scaling factor of its bandwidth allocation such that no feasible model provisioning and computing resource allocation can enable user $k$ to complete its inference task within $\bar{T}_{k}$ if the allocated bandwidth is less than $\check{y}_{k}$.
\end{definition}

\subsubsection{Minimum feasible bandwidth allocation for a user}\label{sec_minum}
We propose an iterative algorithm to derive $\check{y}_{k}$ along with the corresponding model provisioning $\check{x}^{*}_{k,i}$ and computing resource allocation $\check{z}^{*}_{k,l_{i}}$ for user $k$, as detailed in Algorithm~\ref{algorithm_bi}. The algorithm iteratively searches $\check{y}_{k}$ within the range $\left[\check{y}^{\left(\min\right)}_{k}, \check{y}^{\left(\max\right)}_{k}\right]$ until the desired error bound $\epsilon$ is satisfied. The initial values of $\check{y}^{\left(\min\right)}_{k}$ and $\check{y}^{\left(\max\right)}_{k}$ are set to $\frac{\min\limits_{i\in\mathcal{I}_{k}}\sum\limits_{l_{i}=1}^{L_{i}}S_{l_{i}}}{\bar{T}_{k}BR_{k}}$ and 1, respectively, in Line~\ref{line:bi_initialization}, where $\min\limits_{i\in\mathcal{I}_{k}}\sum\limits_{l_{i}=1}^{L_{i}}S_{l_{i}}$ is the minimum model size among models in $\mathcal{I}_{k}$. Then, Algorithm~\ref{algorithm_bi} adopts the bisection search method to determine $\check{y}_{k}$ and the corresponding $\check{x}^{*}_{k,i}$ and $\check{z}^{*}_{k,l_{i}}$. 
Specifically, in each iteration of the while loop from Line~\ref{line:bi_while_start} to \ref{line:bi_while_end}, $\check{y}_{k}$ is first updated to the midpoint of the interval $\left[\check{y}^{\left(\min\right)}_{k}, \check{y}^{\left(\max\right)}_{k}\right]$. Next, from Line~\ref{line:bi_for_start} to \ref{line:bi_for_end}, Algorithm~\ref{algorithm_bi} calculates $\hat{z}^{*}_{k,l_{i}}$ for each model in $\mathcal{I}_{k}$ under current $\check{y}_{k}$ based on \eqref{eq_z_2}, and obtains the corresponding minimum E2E latency $t_{k,L_{i}}\left(\check{y}_{k},\hat{{\bf{Z}}}_{k,i}^{*}\right)$. Then, in Line \ref{line:bi_i_*}, the algorithm identifies model $i^{*}$ for user $k$ with the minimum $t_{k,L_{i}}\left(\check{y}_{k},\hat{{\bf{Z}}}_{k,i}^{*}\right)$ among the models satisfying the latency requirement $\bar{T}_{k}$. If such a model $i^{*}$ exists, then $\check{y}^{\left(\max\right)}_{k}$ is decreased to $\check{y}_{k}$; otherwise, $\check{y}^{\left(\min\right)}_{k}$ is increased to $\check{y}_{k}$. Upon convergence, if $i^{*}$ has been identified in Line~\ref{line:bi_i_*}, then the algorithm determines $\check{y}_{k}$ in Line~\ref{line:bi_output_y}, calculates $\tilde{x}^{*}_{k,i}$ and $\tilde{z}^{*}_{k,l_{i}}$ in Line \ref{line:bi_tilde_x_z}, and obtains $\check{x}^{*}_{k,i}$ and $\check{z}^{*}_{k,l_{i}}$ corresponding to $\check{y}_{k}$ in Line~\ref{line:bi_check_z}. Otherwise, $\check{y}_{k}$, $\check{x}^{*}_{k,i}$, and $\check{z}^{*}_{k,l_{i}}$ are set to 0.
\begin{algorithm}[!t]
	\caption{Minimum Feasible Bandwidth Allocation Algorithm} 
	\label{algorithm_bi}
	\LinesNumbered
	\KwIn{$k$ and $\epsilon$.}
	\KwOut{$\check{y}_{k}$, $\check{x}^{*}_{k,i}$, and $\check{z}^{*}_{k,l_{i}}$.} 
            {\bf Initialize:} $\check{y}_{k}=0$, $\check{x}^{*}_{k,i}=0$, and $\check{z}^{*}_{k,l_{i}}=0$. $\check{y}^{\left(\min\right)}_{k}=\frac{\min\limits_{i\in\mathcal{I}_{k}}\sum\limits_{l_{i}=1}^{L_{i}}S_{l_{i}}}{\bar{T}_{k}BR_{k}}$ and $\check{y}^{\left(\max\right)}_{k}=1$.\label{line:bi_initialization}\\
            \If{$\check{y}^{\left(\min\right)}_{k}>1$}
            {
                \textbf{Return}.\\
            }
            \While{$\left|\check{y}^{\left(\min\right)}_{k}-\check{y}^{\left(\max\right)}_{k}\right|\ge \epsilon$}
            {\label{line:bi_while_start}
                $\check{y}_{k}=\frac{\check{y}^{\left(\min\right)}_{k}+\check{y}^{\left(\max\right)}_{k}}{2}$.\\
                \For{$i\in\mathcal{I}_{k}$}
                {\label{line:bi_for_start}
                    Calculate $\hat{z}^{*}_{k,l_{i}}$ from \eqref{eq_z_2} by solving $\mathcal{P}2$ under $\check{y}_{k}$ and obtain the corresponding $t_{k,L_{i}}\left(\check{y}_{k},\hat{{\bf{Z}}}_{k,i}^{*}\right)$.\label{line:bi_solve_p2}\\
                }\label{line:bi_for_end}
                $i^{*}=\mathop{\arg\min}\limits_{i \in \mathcal{I}_{k}}\left\{t_{k,L_{i}}\left(\check{y}_{k},\hat{{\bf{Z}}}_{k,i}^{*}\right)\  \middle| \ t_{k,L_{i}}\left(\check{y}_{k},\hat{{\bf{Z}}}_{k,i}^{*}\right)\le \bar{T}_{k}\right\}$.\label{line:bi_i_*}\\
                \eIf{$i^{*}$ exists}
                {\label{line:bi_if_start}
                    $\check{y}^{\left(\max\right)}_{k}=\check{y}_{k}$.\\
                }
                {
                    $\check{y}^{\left(\min\right)}_{k}=\check{y}_{k}$.\\
                }\label{line:bi_if_end}
            }\label{line:bi_while_end}
            \eIf{$i^{*}$ exists}
            {\label{line:bi_if_i_start}
                $\check{y}_{k}=\check{y}^{\left(\max\right)}_{k}$. \label{line:bi_output_y}\\
                Calculate $\tilde{x}^{*}_{k,i}$ and $\tilde{z}^{*}_{k,l_{i}}$ under $\check{y}_{k}$ from \eqref{eq_x_optimal} and \eqref{eq_z_optimal}, respectively. \label{line:bi_tilde_x_z}\\
                $\check{x}^{*}_{k,i}=\tilde{x}^{*}_{k,i}$, and $\check{z}^{*}_{k,l_{i}}=\tilde{z}^{*}_{k,l_{i}}$. \label{line:bi_check_z}\\
            }
            {
                $\check{y}_{k}=0$, $\check{x}^{*}_{k,i}=0$, and $\check{z}^{*}_{k,l_{i}}=0$.\\
            }\label{line:bi_if_i_end}
\end{algorithm}

\subsubsection{Optimal solution to $\mathcal{P}1$}
\begin{algorithm}[!t]
	\caption{Task Throughput Maximization Algorithm} 
	\label{algorithm_greedy}
	\LinesNumbered
	\KwIn{$\mathcal{K}$ and $\mathcal{I}_{k}$.}
	\KwOut{${\bf{X}}^{*}$, ${\bf{Y}}^{*}$, ${\bf{Z}}^{*}$, and $U^{*}$.} 
	{\bf Initialize:} $\mathcal{K}'=\mathcal{K}$ and $U^{*}=0$. Set $\mathcal{K}^{*}=\emptyset$, and set ${\bf{X}}^{*}$, ${\bf{Y}}^{*}$, and ${\bf{Z}}^{*}$ as ${\bf{0}}$.\\
        Calculate $\check{y}_{k}$, $\check{x}^{*}_{k,i}$, and $\check{z}^{*}_{k,l_{i}}$ for $k\in\mathcal{K}$ from Algorithm~\ref{algorithm_bi}. \label{line:greedy_call}\\
        \While{$\sum\limits_{k\in\mathcal{K}^{*}}y^{*}_{k}\le 1$ and $\exists k$, $\check{y}_{k}>0$}
        {\label{line:greedy_while_start}
            $k^{*}=\mathop{\arg\min}\limits_{k \in \mathcal{K}'}\left\{\check{y}_{k}\right\}$. \label{line:greedy_k}\\
            \If{$\sum\limits_{k\in\mathcal{K}^{*}}y^{*}_{k} + y^{*}_{k^{*}}> 1$}
            {
            \textbf{Break}.\\
            }
            
            $y^{*}_{k^{*}}=\check{y}_{k^{*}}$, $x^{*}_{k^{*},i}=\check{x}^{*}_{k^{*},i}$, and $z^{*}_{k^{*},l_{i}}=\check{z}^{*}_{k^{*},l_{i}}$. \label{line:greedy_z}\\ 
            $U^{*}=U^{*}+1$. $\mathcal{K}^{*}=\mathcal{K}^{*}\cup \left\{k^{*}\right\}$. $\mathcal{K}'=\mathcal{K}'\setminus \left\{k^{*}\right\}$.\label{line:greedy_u}\\
        }     \label{line:greedy_while_end}   
\end{algorithm}
Based on $\check{y}_{k}$ calculated from Algorithm \ref{algorithm_bi}, we propose an efficient algorithm to obtain the maximum task throughput $U^{*}$ and determine the optimal $x^{*}_{k,i}\in{\bf{X}}^{*}$, $y^{*}_{k}\in{\bf{Y}}^{*}$, and $z^{*}_{k,l_{i}}\in{\bf{Z}}^{*}$ to $\mathcal{P}1$. The proposed algorithm is outlined in Algorithm \ref{algorithm_greedy} and detailed below. Algorithm \ref{algorithm_greedy} begins by invoking Algorithm \ref{algorithm_bi} in Line \ref{line:greedy_call} to calculate $\check{y}_{k}$, $\check{x}^{*}_{k,i}$, and $\check{z}^{*}_{k,l_{i}}$ for all users. Then, in each iteration of the while loop from Line~\ref{line:greedy_while_start} to \ref{line:greedy_while_end}, the algorithm first identifies user $k^{*}$ with the minimum value of $\check{y}_{k}$ among the users in $\mathcal{K}'$, which is the set of users yet to be served, in Line \ref{line:greedy_k}. Then, in Line~\ref{line:greedy_z}, it assigns $\check{y}_{k^{*}}$, $\check{x}^{*}_{k^{*},i}$, and $\check{z}^{*}_{k^{*},l_{i}}$ as $y^{*}_{k^{*}}$, $x^{*}_{k^{*},i}$, and $z^{*}_{k^{*},l_{i}}$, respectively. Finally, user $k^{*}$ is moved from $\mathcal{K}'$ to $\mathcal{K}^{*}$, which is the set of served users. The procedure is repeated until the total bandwidth is used up.

We next state the following theorems for Algorithm~\ref{algorithm_greedy}.
\begin{theorem}\label{theorem_3}
    The proposed Algorithm \ref{algorithm_greedy} obtains the optimal solution to $\mathcal{P}1$.
\end{theorem}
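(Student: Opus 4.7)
The plan is to exploit the equivalence established in Proposition~\ref{theorem_2}, which reduces $\mathcal{P}1$ to solving the bandwidth allocation problem $\mathcal{P}3$ and then recovering $\tilde{\mathbf{X}}^{*}$ and $\tilde{\mathbf{Z}}^{*}$ via \eqref{eq_x_optimal} and \eqref{eq_z_optimal}. Under this reduction, it suffices to show that Algorithm~\ref{algorithm_greedy} yields an optimal bandwidth allocation that maximizes the number of users whose latency constraint is met. The centerpiece of the argument is the lower bound guaranteed by Definition~\ref{definition_1}: for any feasible solution of $\mathcal{P}1$ and any user $k$ with $\mathbb{I}_{\left\{t_{k,L_{i}}\left(y_{k},{\bf{Z}}_{k,i}\right)\le \bar{T}_{k}\right\}}=1$, the allocated bandwidth must satisfy $y_{k}\ge \check{y}_{k}>0$, otherwise no feasible model provisioning and computing resource allocation would serve $k$ within $\bar{T}_{k}$.

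Given this lower bound, I would first show that $\mathcal{P}3$ is equivalent to the problem of selecting a maximum-cardinality subset $\mathcal{S}\subseteq\mathcal{K}$ with $\check{y}_{k}>0$ for all $k\in\mathcal{S}$ and $\sum_{k\in\mathcal{S}}\check{y}_{k}\le 1$. On one hand, any feasible solution whose served user set is $\mathcal{S}^{\mathrm{opt}}$ satisfies this knapsack-type constraint because $\sum_{k\in\mathcal{S}^{\mathrm{opt}}}\check{y}_{k}\le \sum_{k\in\mathcal{S}^{\mathrm{opt}}}y_{k}\le 1$ by \eqref{const_2}. On the other hand, for any subset $\mathcal{S}$ satisfying the knapsack constraint, assigning $y_{k}=\check{y}_{k}$ for $k\in\mathcal{S}$ and $y_{k}=0$ otherwise, combined with the induced $\tilde{x}^{*}_{k,i}$ and $\tilde{z}^{*}_{k,l_{i}}$, yields a feasible solution of $\mathcal{P}1$ whose objective equals $|\mathcal{S}|$.

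Next, I would complete the proof with an exchange argument. Let $\mathcal{K}^{*}$ be the user set returned by Algorithm~\ref{algorithm_greedy} and let $\mathcal{S}^{\mathrm{opt}}$ be any optimal served set. Because Algorithm~\ref{algorithm_greedy} iteratively picks, in Line~\ref{line:greedy_k}, the user with the smallest positive $\check{y}_{k}$ among those not yet served, the elements of $\mathcal{K}^{*}$ are exactly the users with the $|\mathcal{K}^{*}|$ smallest $\check{y}_{k}$ values among $\{k:\check{y}_{k}>0\}$. Suppose for contradiction $|\mathcal{S}^{\mathrm{opt}}|>|\mathcal{K}^{*}|$. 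I would pair each $u\in\mathcal{S}^{\mathrm{opt}}\setminus\mathcal{K}^{*}$ with some $v\in\mathcal{K}^{*}\setminus\mathcal{S}^{\mathrm{opt}}$ where $\check{y}_{v}\le\check{y}_{u}$; such a pairing exists by the minimality ordering. Iteratively swapping $u$ for $v$ transforms $\mathcal{S}^{\mathrm{opt}}$ into a superset $\mathcal{S}'\supseteq\mathcal{K}^{*}$ of the same cardinality while maintaining $\sum_{k\in\mathcal{S}'}\check{y}_{k}\le 1$. Since $|\mathcal{S}'|>|\mathcal{K}^{*}|$, there exists at least one user $k'\in\mathcal{S}'\setminus\mathcal{K}^{*}$ with $\sum_{k\in\mathcal{K}^{*}}\check{y}_{k}+\check{y}_{k'}\le 1$, which contradicts the termination of the while loop in Algorithm~\ref{algorithm_greedy} that exits only when no remaining user fits in the residual bandwidth.

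The main obstacle will be handling the termination condition of Algorithm~\ref{algorithm_greedy} rigorously: I must argue that when greedy stops because the next candidate exceeds the budget, no alternative schedule could squeeze in additional users. This is precisely where the exchange step above is needed, and it hinges on the monotone ordering of $\check{y}_{k}$. A secondary subtlety is showing that Algorithm~\ref{algorithm_bi} returns the infimum bandwidth required by Definition~\ref{definition_1} up to tolerance $\epsilon$, so the claimed optimality of Algorithm~\ref{algorithm_greedy} is understood in the limit $\epsilon\to 0$; this follows from the monotonicity of $t_{k,L_{i}}(y_{k},\hat{\mathbf{Z}}^{*}_{k,i})$ in $y_{k}$ (smaller bandwidth can only enlarge the minimum achievable E2E latency obtained from $\mathcal{P}2$), which legitimizes the bisection search in Lines~\ref{line:bi_while_start}--\ref{line:bi_while_end}.
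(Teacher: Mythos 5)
Your proposal is correct and follows essentially the same route as the paper's proof: reduce $\mathcal{P}1$ to the bandwidth allocation problem via Propositions~\ref{theorem_1} and \ref{theorem_2}, justify the bisection in Algorithm~\ref{algorithm_bi} by the monotonicity of $t_{k,L_{i}}\left(y_{k},\hat{{\bf{Z}}}^{*}_{k,i}\right)$ in $y_{k}$, and establish optimality of the ascending-$\check{y}_{k}$ greedy selection by an exchange argument against a hypothetical larger served set. Your version is in fact slightly tighter than the paper's, since you explicitly close the contradiction against the while-loop termination condition (the paper's swap argument leaves that last step implicit), but this is a refinement of the same idea rather than a different proof.
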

\begin{proof}
    The proof is shown in Appendix \myref{proof_theorem_3}.
\end{proof}
\begin{theorem}\label{theorem_4}
     The proposed Algorithm \ref{algorithm_greedy} has a polynomial time complexity $O\left(K^{2}+KIL_{\max}\right)$, where $L_{\max}=\mathop{\max}\limits_{i \in \mathcal{I}}\left\{L_{i}\right\}$.
\end{theorem}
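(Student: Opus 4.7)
The plan is to bound the running time of Algorithm~\ref{algorithm_greedy} by separately analyzing its two main phases: (i) the initial call to Algorithm~\ref{algorithm_bi} for every user in Line~\ref{line:greedy_call}, which produces $\check{y}_{k}$, $\check{x}^{*}_{k,i}$, and $\check{z}^{*}_{k,l_{i}}$; and (ii) the greedy while-loop from Line~\ref{line:greedy_while_start} to Line~\ref{line:greedy_while_end} that selects users in non-decreasing order of $\check{y}_{k}$.

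For phase (i), I would analyze Algorithm~\ref{algorithm_bi} on a single user. The outer bisection search over $[\check{y}^{(\min)}_{k},\check{y}^{(\max)}_{k}]\subseteq[0,1]$ terminates after $\lceil\log_{2}(1/\epsilon)\rceil$ iterations, which is a constant independent of $K$, $I$, and $L_{\max}$. In each bisection iteration, the algorithm loops over at most $|\mathcal{I}_{k}|\le I$ candidate models; for every candidate, it computes $\hat{z}^{*}_{k,l_{i}}$ from \eqref{eq_z_2} and the corresponding minimum E2E latency. Using Proposition~\ref{lemma_2}, each closed-form branch (including $\dot{z}_{k,l_{i}}$) takes $O(L_{i})$ operations over the layers, and the dual-variable iterations in \eqref{eq_update} terminate within a constant number of iterations per invocation, each costing $O(L_{i})$. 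Hence solving $\mathcal{P}2$ for one model is $O(L_{i})=O(L_{\max})$, and Algorithm~\ref{algorithm_bi} for a single user runs in $O(I L_{\max})$. Applying this to all $K$ users in Line~\ref{line:greedy_call} yields a cost of $O(K I L_{\max})$ for phase~(i).

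For phase (ii), the while loop in Lines~\ref{line:greedy_while_start}--\ref{line:greedy_while_end} removes one user from $\mathcal{K}'$ per iteration, so it executes at most $K$ times. Each iteration performs an $\arg\min$ over $\mathcal{K}'$ in Line~\ref{line:greedy_k}, which costs $O(|\mathcal{K}'|)=O(K)$, plus $O(1)$ constant-time bookkeeping to assign $y^{*}_{k^{*}}$, $x^{*}_{k^{*},i}$, $z^{*}_{k^{*},l_{i}}$ and to update $\mathcal{K}^{*}$, $\mathcal{K}'$, and $U^{*}$. The crucial observation is that \emph{no} re-invocation of Algorithm~\ref{algorithm_bi} is needed inside the loop because $\check{y}_{k}$ and the accompanying $\check{x}^{*}_{k,i}$, $\check{z}^{*}_{k,l_{i}}$ are cached from phase~(i) and depend only on per-user parameters, not on the allocations made to other users. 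Summing across iterations gives $O(K^{2})$ for phase~(ii).

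Combining the two phases yields the overall complexity $O(K^{2}+K I L_{\max})$, as claimed. The main subtlety I expect to address carefully is justifying that the bisection search in Algorithm~\ref{algorithm_bi} and the Lagrangian subgradient updates behind Proposition~\ref{lemma_2} both terminate in a number of iterations that can be treated as an absolute constant (depending only on the tolerance $\epsilon$ and fixed dual step sizes), rather than growing with the problem-size parameters $K$, $I$, or $L_{\max}$; once this is accepted, the per-layer arithmetic in \eqref{eq_z_2}--\eqref{eq_update} is closed-form and linear in the number of layers, so the stated bound follows directly from the above accounting.
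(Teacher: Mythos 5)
Your proposal is correct and follows essentially the same route as the paper's proof: it bounds Algorithm~\ref{algorithm_bi} per user by $O\left(\log\frac{1}{\epsilon}\cdot\sum_{i\in\mathcal{I}_{k}}L_{i}\right)=O\left(IL_{\max}\right)$ (treating the bisection and subgradient iteration counts as tolerance-dependent constants, just as the paper does via the $O\left(L_{i}/\hat{\epsilon}^{2}\right)$ bound for solving $\mathcal{P}2$), then adds $O\left(K\right)$ while-loop iterations each costing $O\left(K\right)$ for the $\arg\min$, yielding $O\left(K^{2}+KIL_{\max}\right)$. The decomposition, the per-line accounting, and the observation that Algorithm~\ref{algorithm_bi} is not re-invoked inside the loop all match the paper's argument.
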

\begin{proof}
The proof is shown in Appendix \myref{proof_theorem_4}.
\end{proof}
\section{Numerical Results}

This section evaluates the performance of SLIDE through device measurements and extensive simulations. We describe the experimental setup, evaluate SLIDE under varying settings, investigate the impact of user mobility and computing capability, conduct ablation studies, and compare the running time of the proposed algorithm with general MINLP algorithms.
\subsection{Experimental Setup}\label{sec_setup}
\begin{figure}[t]
\centering
\includegraphics[width=0.33\textwidth]{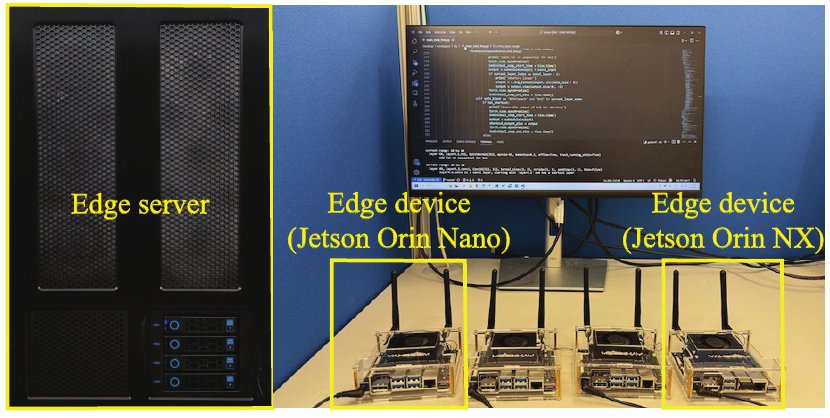}
\vspace{-0.25cm}
\caption{Experimental hardware system with an edge server (functioning as a BS) and multiple edge devices, including Jetson Orin Nano with 4 GB RAM and Jetson Orin NX with 16 GB RAM.}
\label{fig:hardware_system}
\vspace{-10pt}
\end{figure}
We adopt a hybrid testing approach for our experiments: Communication latency is assessed through simulations to consider large-scale cellular users, while computing latency is measured using real devices, as shown in Fig. \ref{fig:hardware_system}. In our simulation settings, the coverage radius and transmit power spectral density of the BS are set to 200 m and $\text{-29 dBm/Hz}$~\cite{3gpp.38.104}, respectively, with $K=$ \{60, 70, 80, 90, 100\} users uniformly distributed within the coverage area. Rayleigh fading channels are considered. The total bandwidth $B$ of the BS is \{200, 300, 400, 500, 600\} MHz, and the E2E latency requirement $\bar{T}_{k}$ ranges from 600 to 1000 ms~\cite{3gpp.22.874}. Moreover, the BS hosts $I=$ 48 AI models, which comprise CNNs and ViTs, with full-precision, 16-bit, and 8-bit variants of ResNet-18, ResNet-34, ResNet-50~\cite{he2016deep}, and DeiT-S~\cite{pmlr-v139-touvron21a}. The inference accuracy of models in $\mathcal{I}$, which varies with model structure and precision, ranges from approximately 75\% to 95\%, with lower-precision variants generally achieving lower accuracy. Each user generates one inference task, selected from 10 task types, with an inference accuracy requirement ranging from 80\% to 90\%. Each task can be served by models with one to four of the aforementioned model structures. For each user, $\mathcal{I}_{k}$ is constructed by selecting the set of models that can serve user $k$, with the desired accuracy requirements. To reflect the heterogeneity in computing capabilities, we consider two types of user devices: more powerful Jetson Orin NX devices and less powerful Jetson Orin Nano devices, with maximum GPU frequencies of 918 MHz and 624.75 MHz, respectively. The proportion of Jetson Orin Nano users is set to $\theta= \{\text{0\%, 20\%, 40\%, 60\%, 80\%, 100\%}\}$. Inference is performed with the image sample from the CIFAR-10 dataset~\cite{krizhevsky2009learning} with a batch size of 1. The inference energy budget $Q_{k}$ of user $k$ is given by $Q_{k}=\beta_{k}\bar{P}_{k}\bar{T}_{k}$, where $\bar{P}_{k}$ denotes the power, set to 10 Watt for Jetson Orin NX~\cite{nvidianx} and 5 Watt for Jetson Orin Nano~\cite{nvidianano}, and $\beta_{k}= \{\text{22\%, 24\%, 26\%, 28\%, 30\%}\}$ is the power scaling factor of user $k$.


We compare our SLIDE with the following baselines:
\begin{itemize}
    \item \textbf{Conventional DAI}: the conventional downloading and inference approach, where users start inference only after the entire model is downloaded~\cite{qu2024trimcachingICDCS}. To ensure a fair comparison, this baseline adopts the same user selection, bandwidth allocation, and computing resource allocation procedures as SLIDE by solving problem $\mathcal{P}1$ using Algorithm~\ref{algorithm_greedy}, with the only difference lying in the E2E latency model. Specifically, according to \eqref{eq_final_e2e}, the E2E latency of user $k$ in conventional DAI is given by $t_{k,L_{i}}\left(y_{k},{\bf{Z}}_{k,i}\right)= \sum\limits_{l_{i}=1}^{L_{i}}\frac{S_{l_{i}}}{y_{k}BR_{k}}+\sum\limits_{l_{i}=1}^{L_{i}}T_{k,i}\left(z_{k,l_{i}}\right)$, which is the sum of model downloading and inference latencies, corresponding to the sequential model downloading and inference.
    \item \textbf{B\&B (Branch-and-Bound)}: a general algorithm for solving MINLP problems. 
    Since both SLIDE and B\&B obtain the optimal solution to $\mathcal{P}1$, we only compare them in terms of algorithm running time.
\end{itemize}

To compare the algorithm performance, we use the served user ratio as the evaluation metric, defined as the ratio of the number of users successfully served within the latency requirements to the total number of users.
\subsection{Performance Evaluation}
This subsection evaluates the performance of SLIDE by varying $B$, $K$, $\bar{T}_{k}$, $\theta$, and $\beta_{k}$.

\begin{figure*}[t]
    \centering
	\subfigure[Served user ratio vs. $B$.]{\includegraphics[height=2.9cm, keepaspectratio]{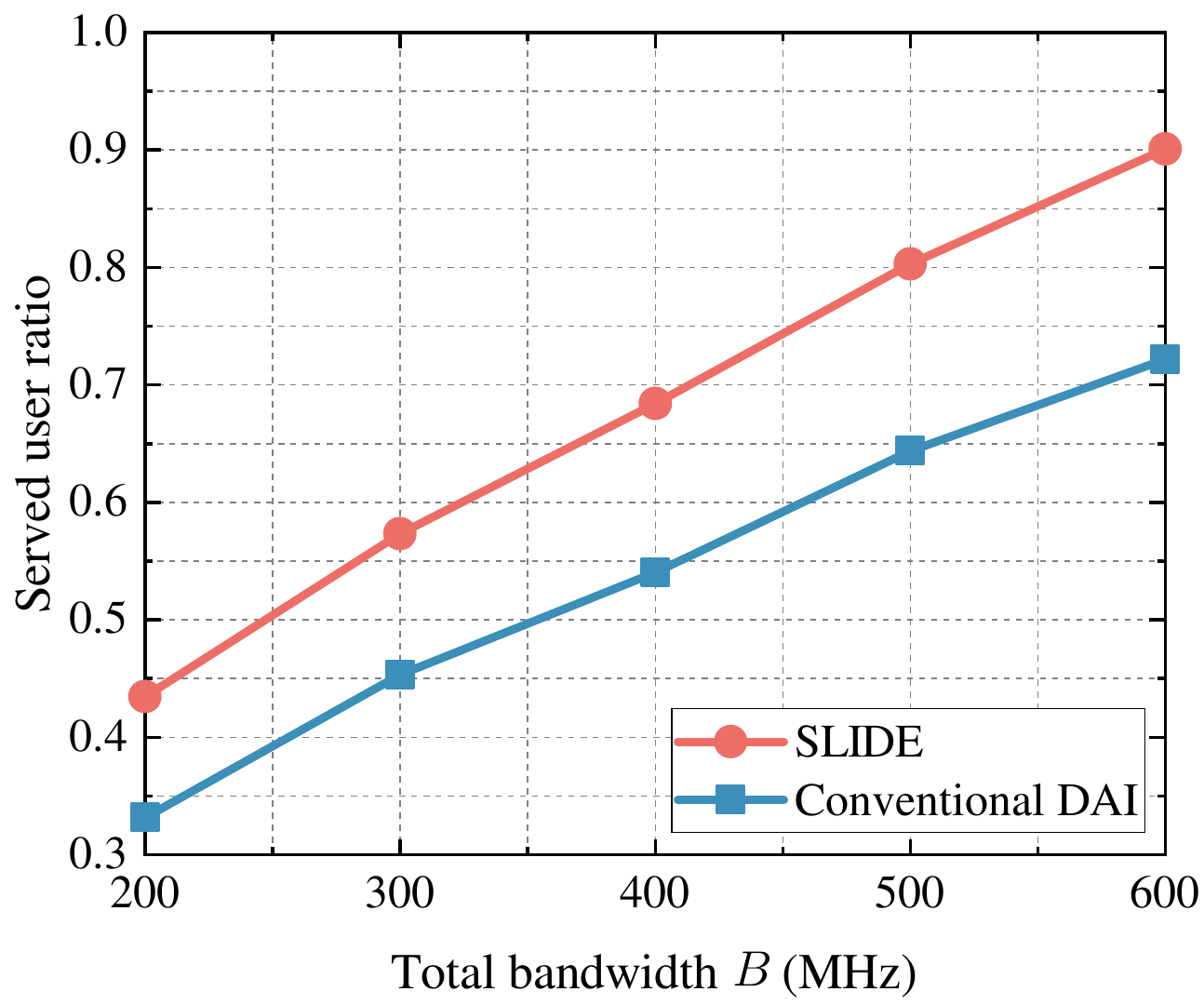}\label{fig:bw}}
	\subfigure[Served user ratio vs. $K$.]{\includegraphics[height=2.9cm, keepaspectratio]{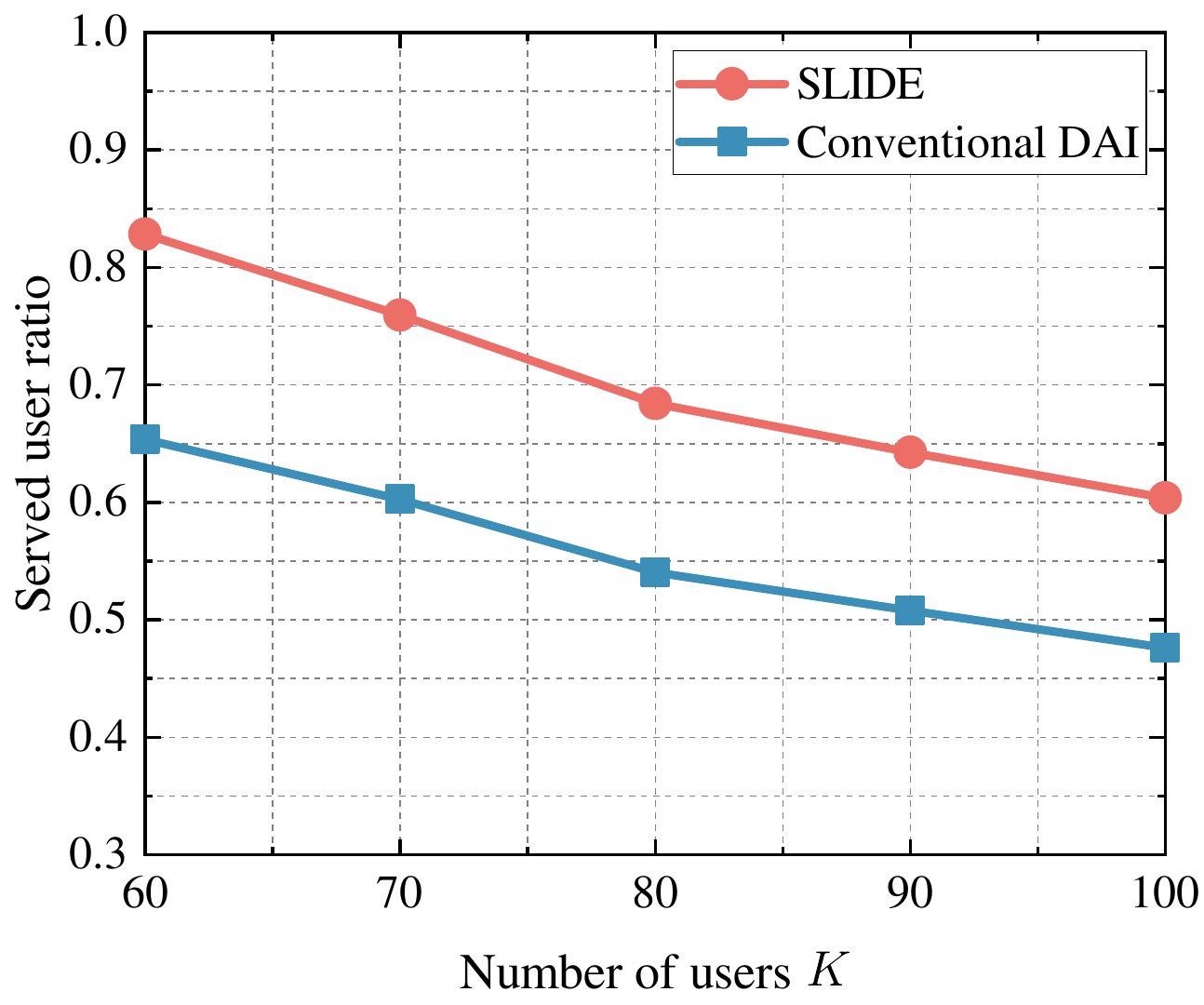}\label{fig:user}}
	\subfigure[Served user ratio vs. $\bar{T}_{k}$.]{\includegraphics[height=2.9cm, keepaspectratio]{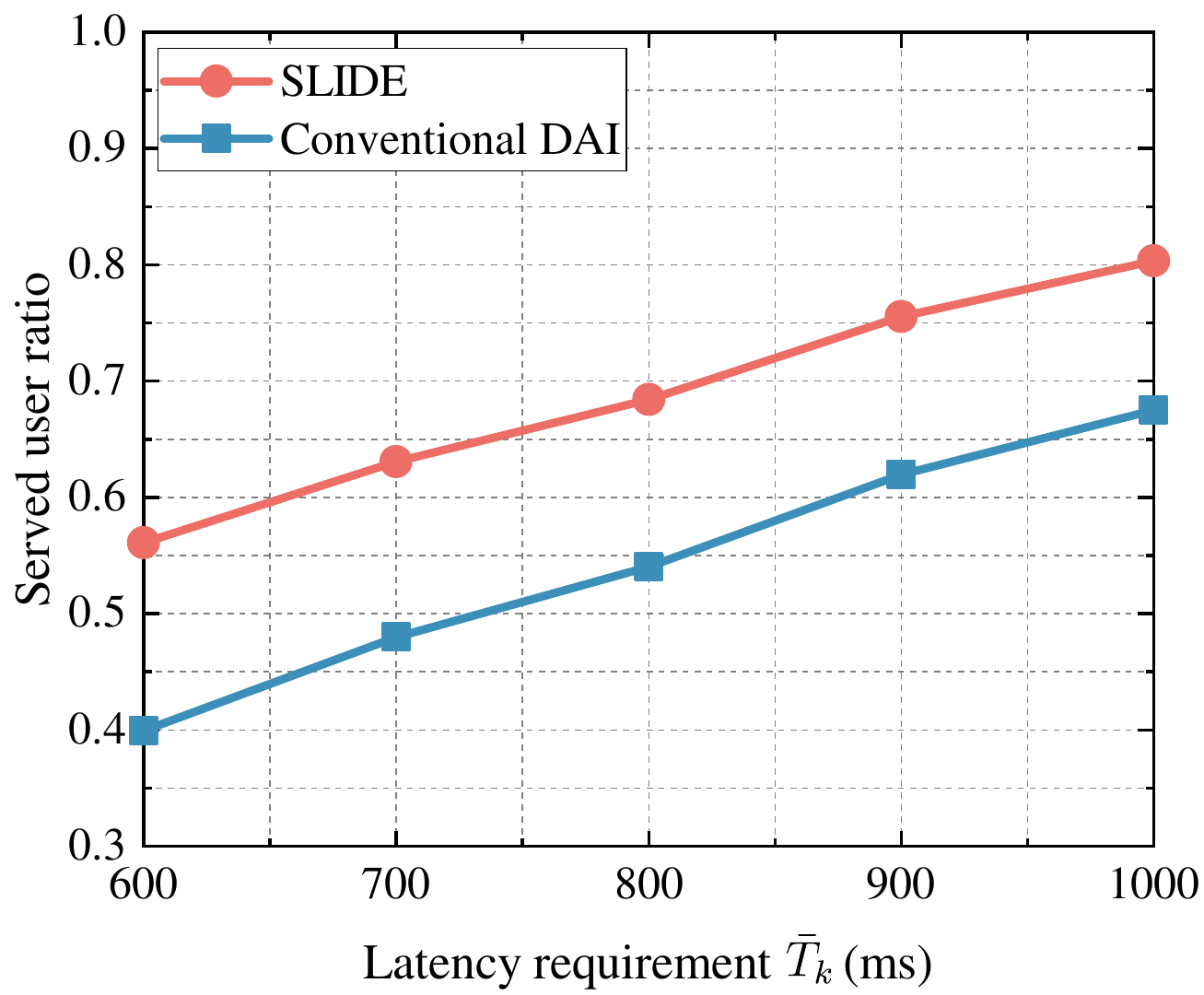}\label{fig:ddl}}
	\subfigure[Served user ratio vs. $\theta$.]{\includegraphics[height=2.9cm, keepaspectratio]{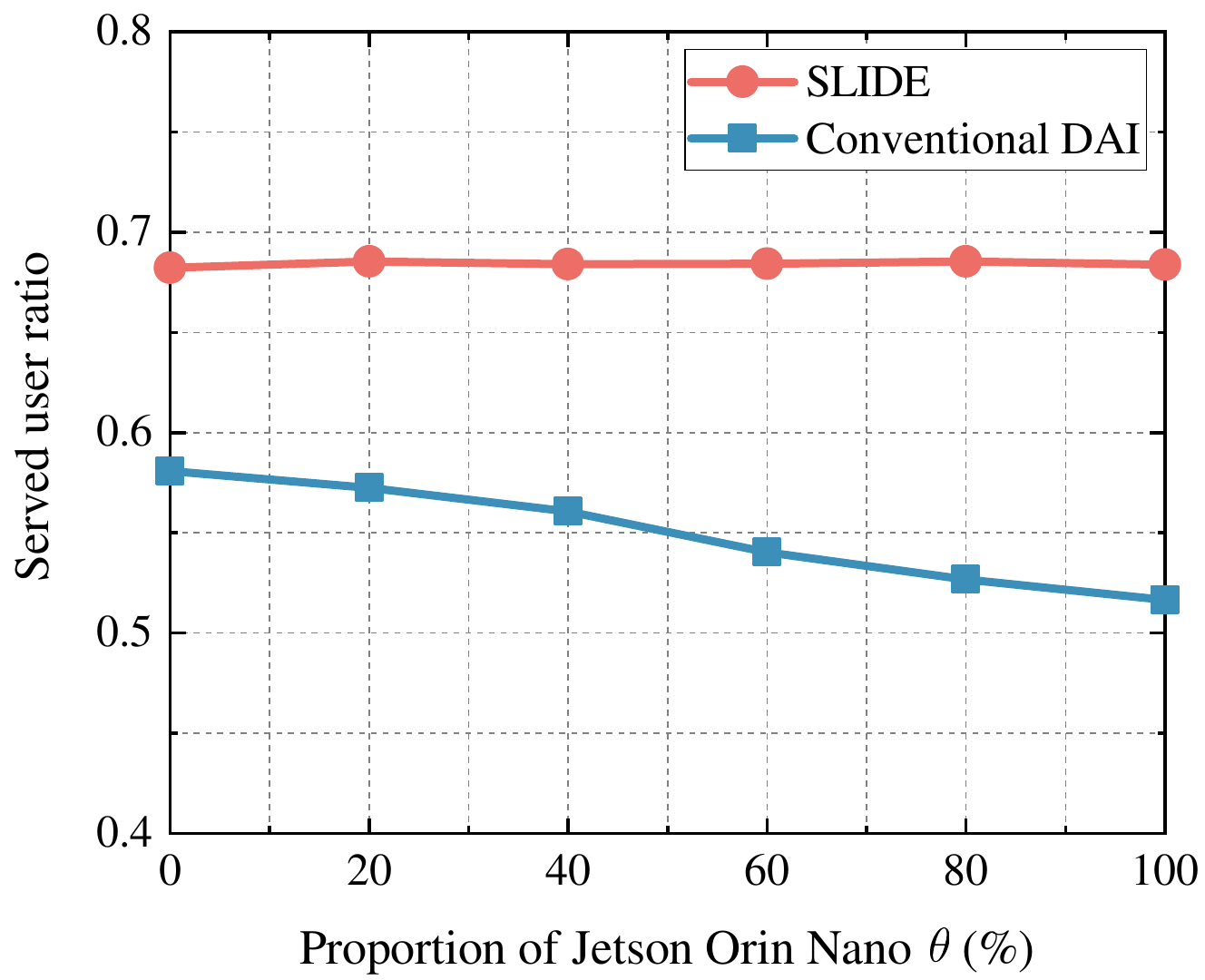}\label{fig:ratio}}
        \subfigure[Served user ratio vs. $\beta_{k}$.]{\includegraphics[height=2.9cm, keepaspectratio]{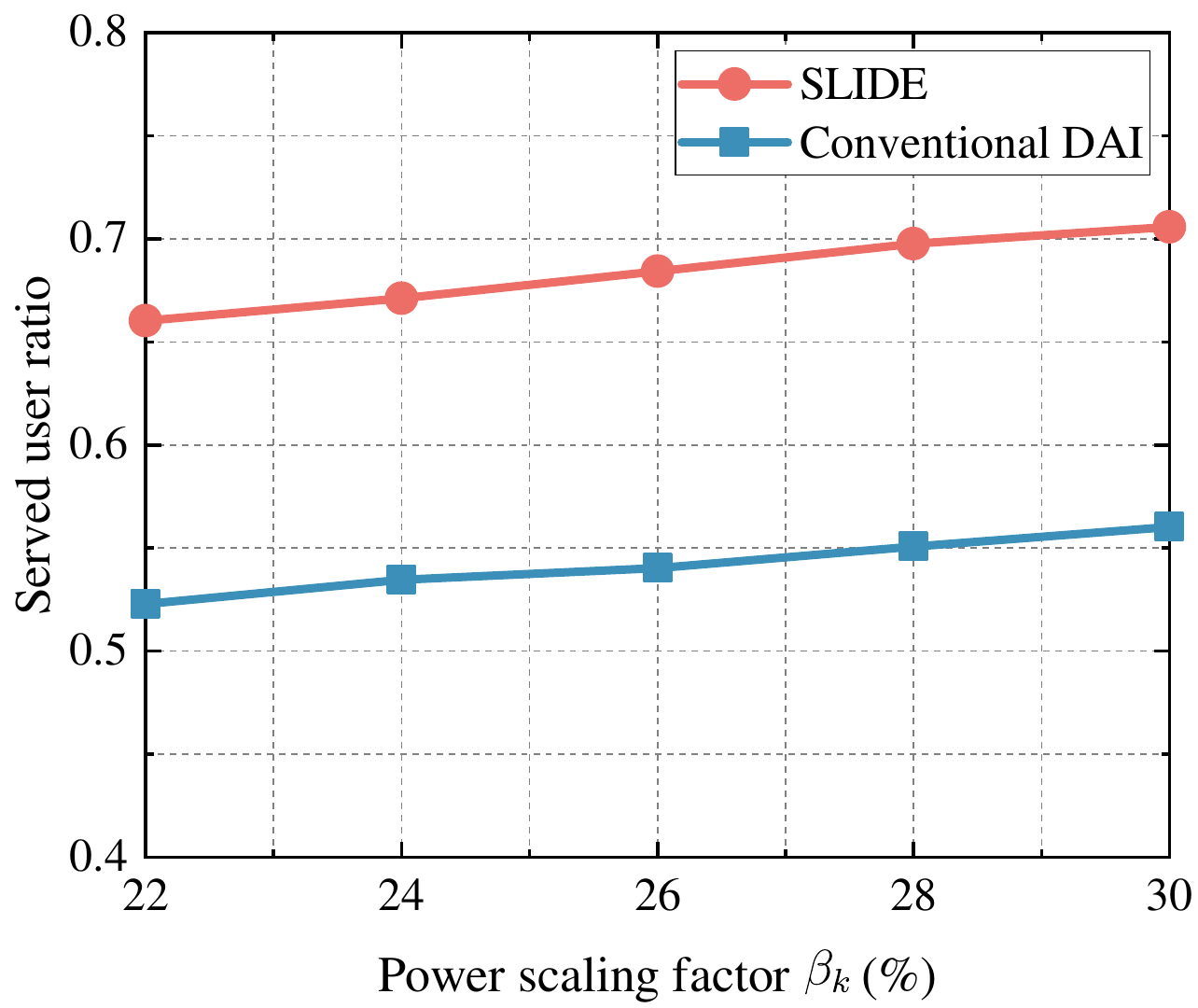}\label{fig:beta}}
        \vspace{-3pt}
 \caption{Served user ratio of SLIDE, evaluated on the Jetson Orin Nano and Jetson Orin NX running at GPU frequencies of 624.75 MHz and 918 MHz, respectively. The default values of $B$, $K$, $\bar{T}_{k}$, $\theta$, and $\beta_{k}$, are set to 400 MHz, 80, 800 ms, 60\%, and 26\%, respectively.}
 \vspace{-10pt}\label{fig:result}
\end{figure*}

Fig. \ref{fig:bw} illustrates the performance of SLIDE under varying $B$. The served user ratio of both SLIDE and conventional DAI increases almost linearly as $B$ increases. Moreover, SLIDE consistently outperforms conventional DAI, with an average 14.1\% higher served user ratio across varying $B$. Fig. \ref{fig:user} examines the performance of SLIDE as $K$ varies. The served user ratios of both SLIDE and conventional DAI decline with increasing $K$. However, SLIDE consistently outperforms conventional DAI, achieving a 14.7\% higher served user ratio on average. Notably, SLIDE can still serve over 60.4\% of users even when $K=$ 100, demonstrating its effectiveness under high user density. The performance gain of SLIDE is also evident in Fig. \ref{fig:ddl}. With varying $\bar{T}_{k}$, SLIDE achieves an average served user ratio that is 14.4\% higher than that of conventional DAI. 

Fig. \ref{fig:ratio} shows the performance of SLIDE as $\theta$ varies. On average, SLIDE improves the served user ratio by 13.5\%. As $\theta$ increases, the performance of conventional DAI degrades significantly, whereas SLIDE maintains stably. Notably, when all users are Jetson Orin Nano devices with limited computing capabilities, SLIDE can serve 68.4\% of users, which is 16.7\% higher than conventional DAI. These results demonstrate that SLIDE sustains high task throughput even with more users equipped with limited computing capabilities. Fig. \ref{fig:beta} illustrates the impact of $\beta_k$, which controls the energy budget, on the served user ratio. As anticipated, increasing $\beta_k$ (equivalent to raising $Q_{k}$) enhances the served user ratio. On average, SLIDE achieves a 14.2\% higher served user ratio than conventional DAI as $\beta_{k}$ varies.

Fig. \ref{fig:quantization} evaluates the performance of SLIDE under different model libraries with various precision levels, including full precision, 16-bit quantization, 8-bit quantization, and a mix of them (which is the default setting in our simulations). With varying $B$ and $K$, both Fig.~\ref{fig:quantization_bw} and Fig.~\ref{fig:quantization_user} demonstrate that SLIDE with mixed-precision models significantly enhances the served user ratio compared with only full-precision or 16-bit models. Moreover, the performance advantage over the 8-bit model library becomes especially notable when total bandwidth is sufficient. These results demonstrate that SLIDE can effectively utilize a model library with diverse precision versions, enabling efficient model provisioning for users.
\begin{figure}[t]
    \centering
        \quad\includegraphics[width=0.4\textwidth]{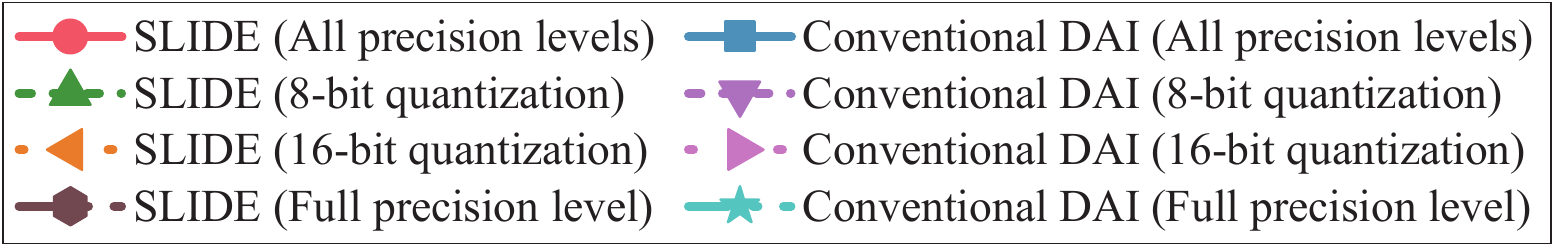}\\
	\subfigure[Served user ratio of SLIDE vs. $B$ under different model libraries.]{\includegraphics[height=3.4cm, keepaspectratio]{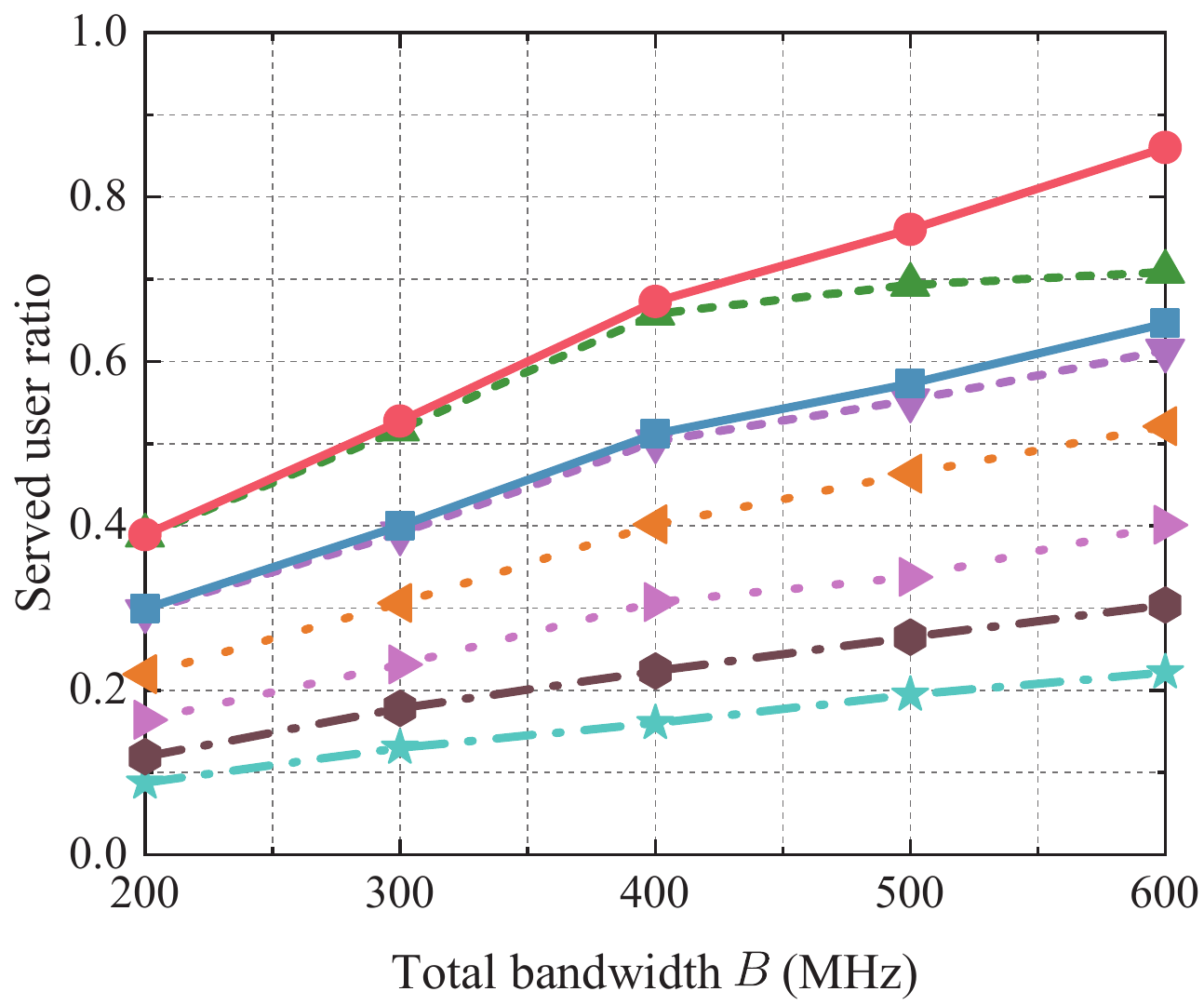}\label{fig:quantization_bw}}
	\quad
	\subfigure[Served user ratio of SLIDE vs. $K$ under different model libraries.]{\includegraphics[height=3.4cm, keepaspectratio]{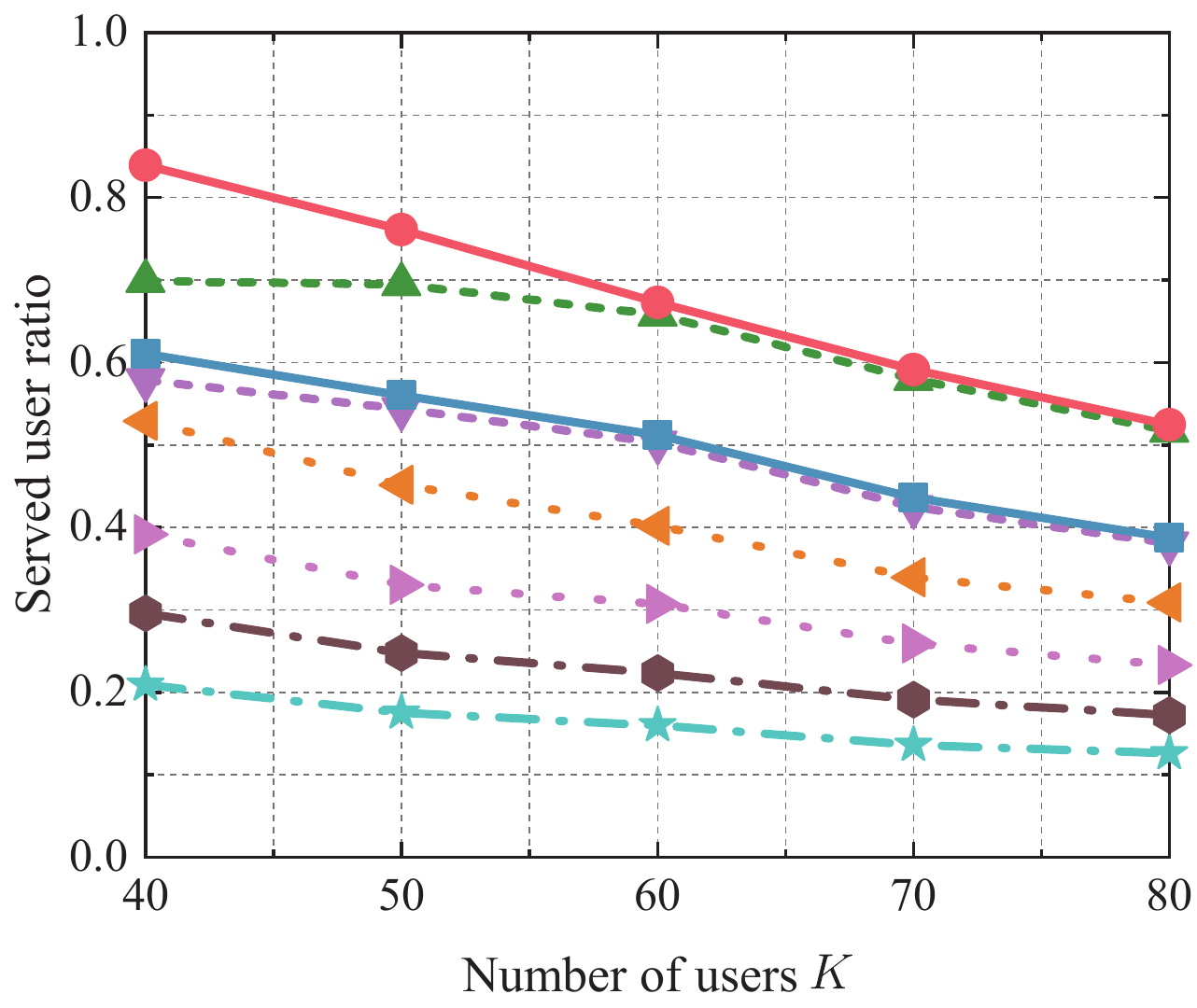}\label{fig:quantization_user}}
 \caption{Performance of SLIDE under different model libraries, where the default for $K$ is set to 60, and the inference accuracy requirement of users ranges from 87\% to 95\%. The default values of GPU frequencies, $B$, $\bar{T}_{k}$, $\theta$, and $\beta_{k}$ are consistent with Fig.~\ref{fig:result}. }\label{fig:quantization}
\end{figure}

\subsection{Impact of User Mobility}
\begin{figure}[t]
    \centering
	\subfigure[Served user ratio vs. $B$ in mobile scenarios.]{\includegraphics[height=3.4cm, keepaspectratio]{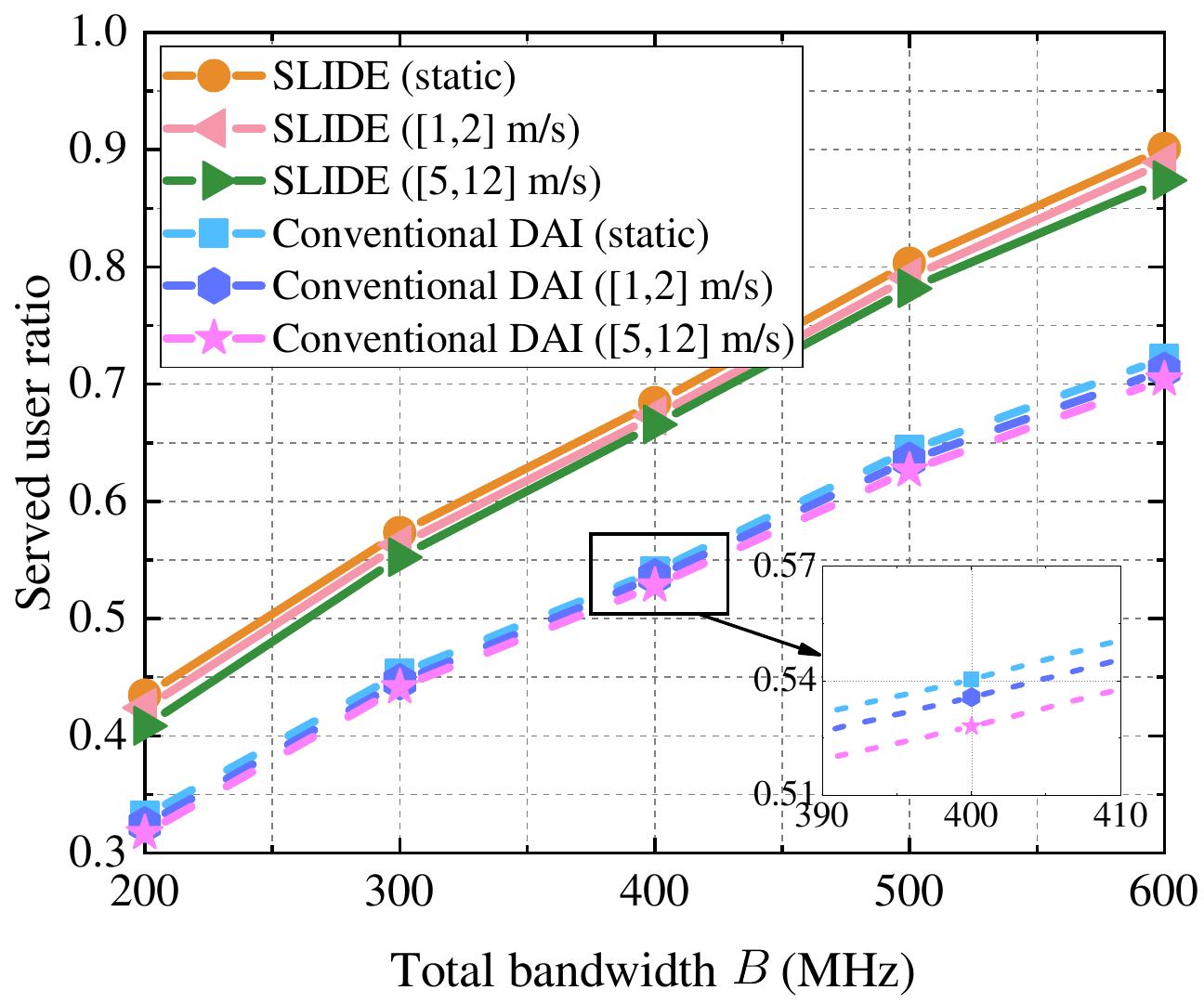}\label{fig:mobility_bw}}
	\quad
	\subfigure[Served user ratio vs. $K$ in mobile scenarios.]{\includegraphics[height=3.4cm, keepaspectratio]{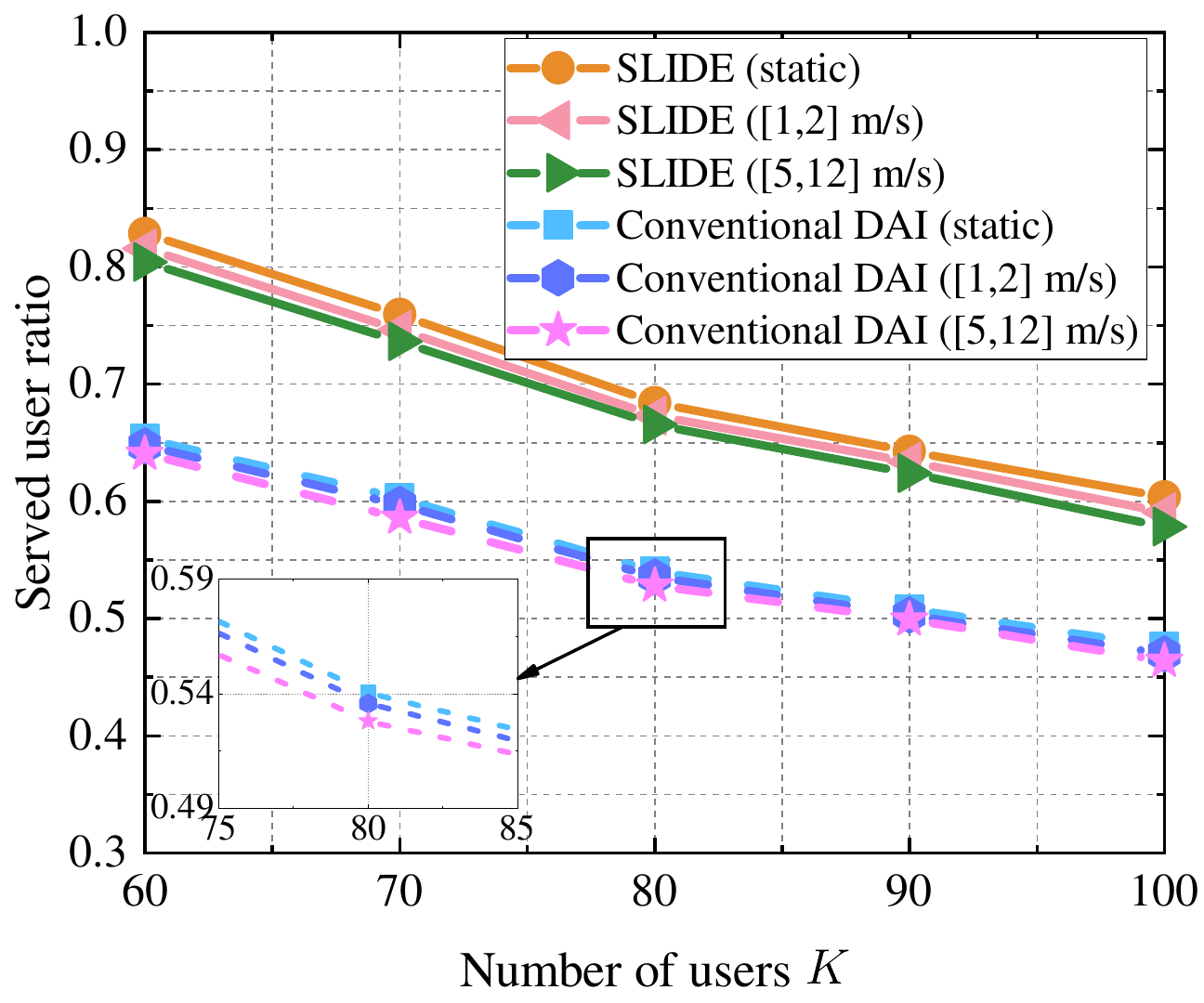}\label{fig:mobility_user}}
    \vspace{-3pt}
 \caption{Performance of SLIDE in mobile scenarios, where the default values of GPU frequencies, $B$, $K$, $\bar{T}_{k}$, $\theta$, and $\beta_{k}$ are consistent with Fig. \ref{fig:result}.}
 \vspace{-10pt}
 \label{fig:mobility}
\end{figure}
Fig. \ref{fig:mobility} analyzes the impact of user mobility on the served user ratio in both static and dynamic scenarios with slow (1–2~m/s) and fast (5–12 m/s) user mobility. As shown in Fig.~\ref{fig:mobility_bw}, the served user ratio of SLIDE decreases by 1.1\% and 2.3\% under slow and fast mobility, respectively; in Fig.~\ref{fig:mobility_user}, the reductions are 1.2\% and 2.2\%, respectively. These results demonstrate that our algorithm remains robust under user mobility, although the resource allocation is determined using the initial user locations before they move.


\subsection{Impact of Device Computing Capability}
Fig. \ref{fig:freq} investigates the impact of user device computing capability on the performance of SLIDE. We evaluate both conventional DAI and SLIDE with Jetson Orin NX under two GPU frequency settings: 306 MHz and 918 MHz. When the GPU frequency is reduced from 918 MHz to 306 MHz, the served user ratio of conventional DAI declines by 4.8\%, 4.5\%, and 4.5\% in Figs. \ref{fig:freq_bw}, \ref{fig:freq_user}, and \ref{fig:freq_beta}, respectively, while the corresponding degradations of SLIDE are only 1.3\%, 1.3\%, and 1.4\%. This implies that SLIDE, by overlapping communications and computing, can effectively utilize computing resources of devices and maintain robust performance across devices with heterogeneous computing capabilities. 


\begin{figure*}[t]
    \centering
	\subfigure[Served user ratio vs. $B$ at different GPU frequencies.]{\includegraphics[height=3.4cm, keepaspectratio]{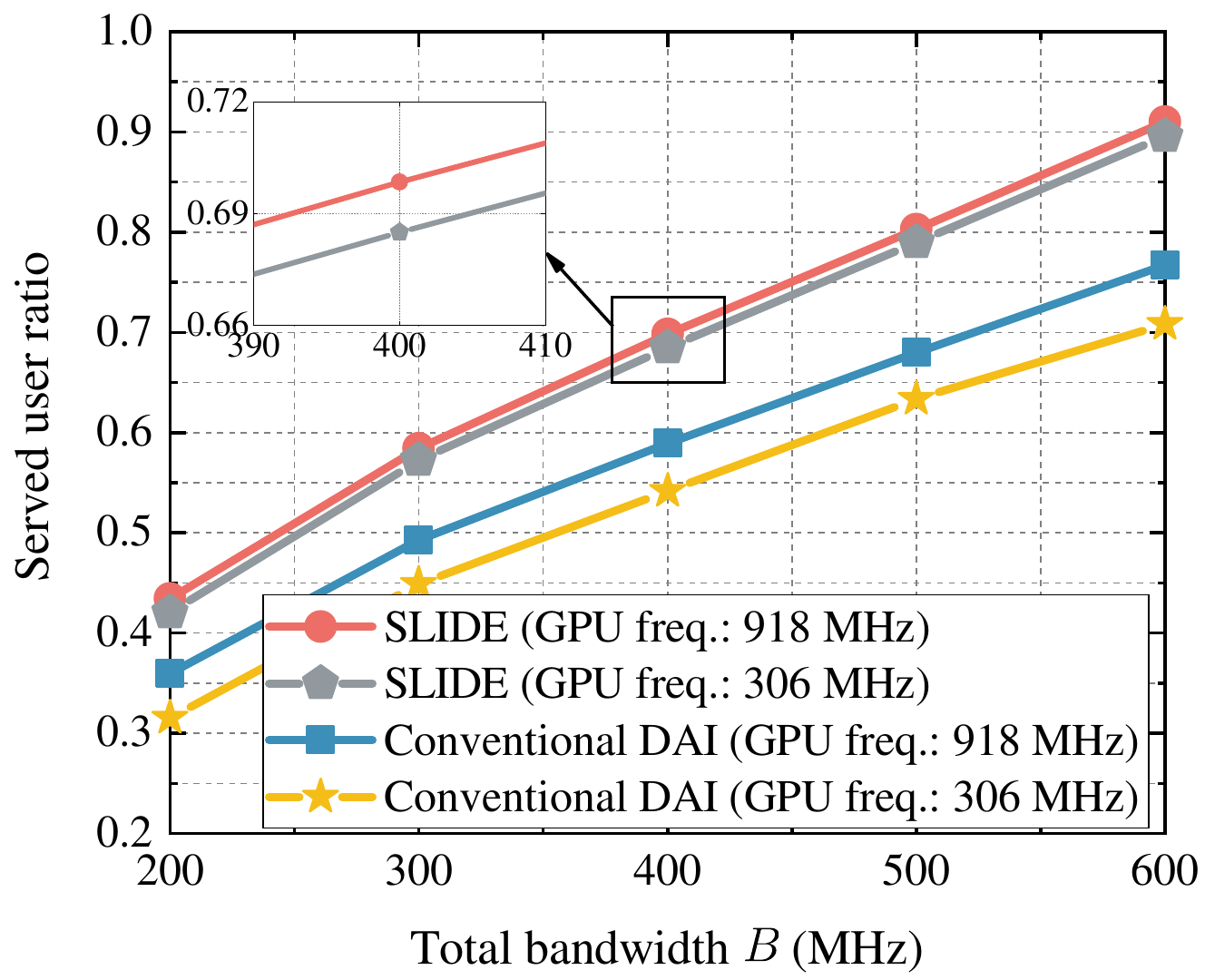}\label{fig:freq_bw}}
        \quad
	\subfigure[Served user ratio vs. $K$ at different GPU frequencies.]{\includegraphics[height=3.4cm, keepaspectratio]{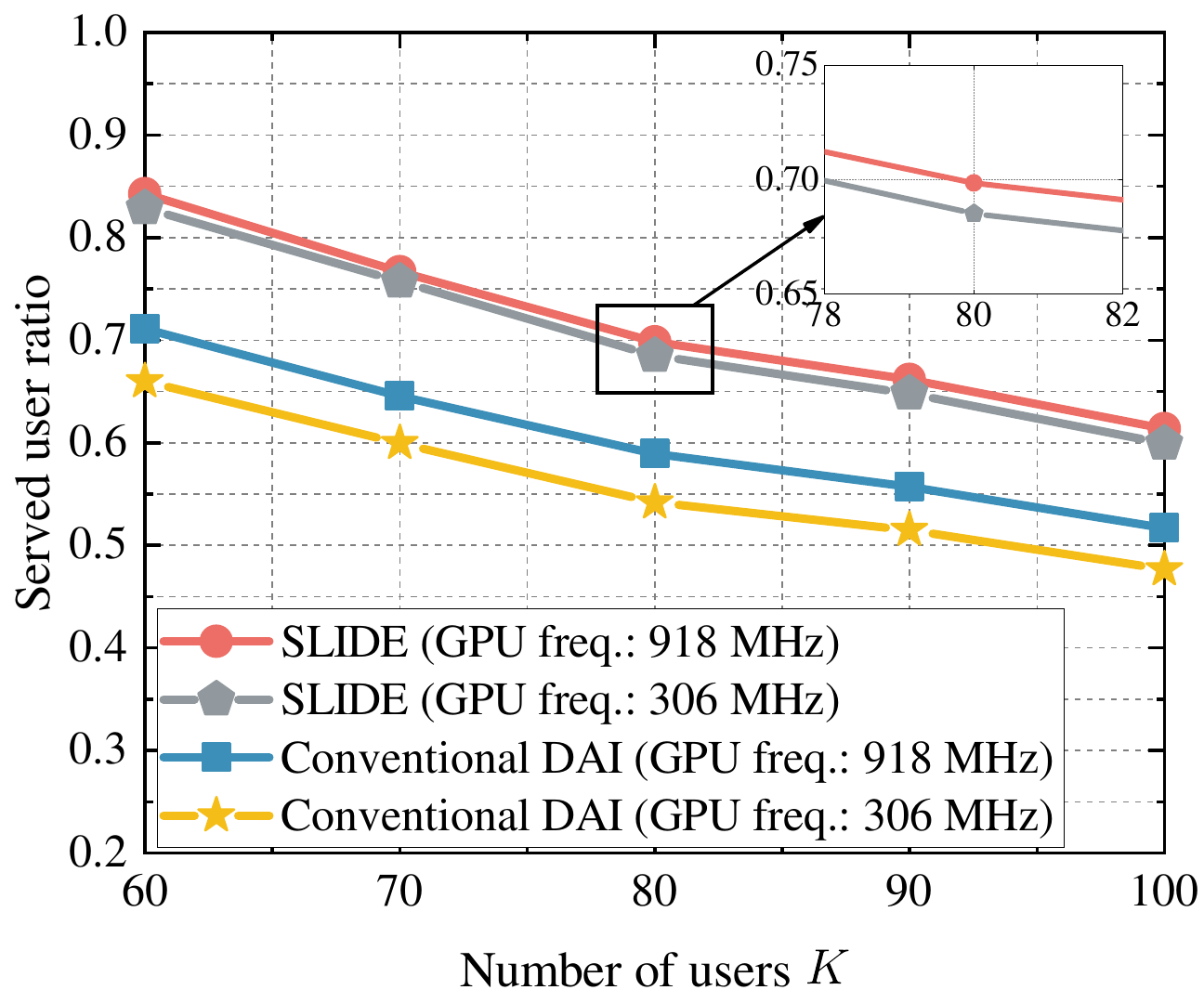}
    \label{fig:freq_user}}
	\quad
	\subfigure[Served user ratio vs. $\beta_{k}$ at different GPU frequencies.]{\includegraphics[height=3.4cm, keepaspectratio]{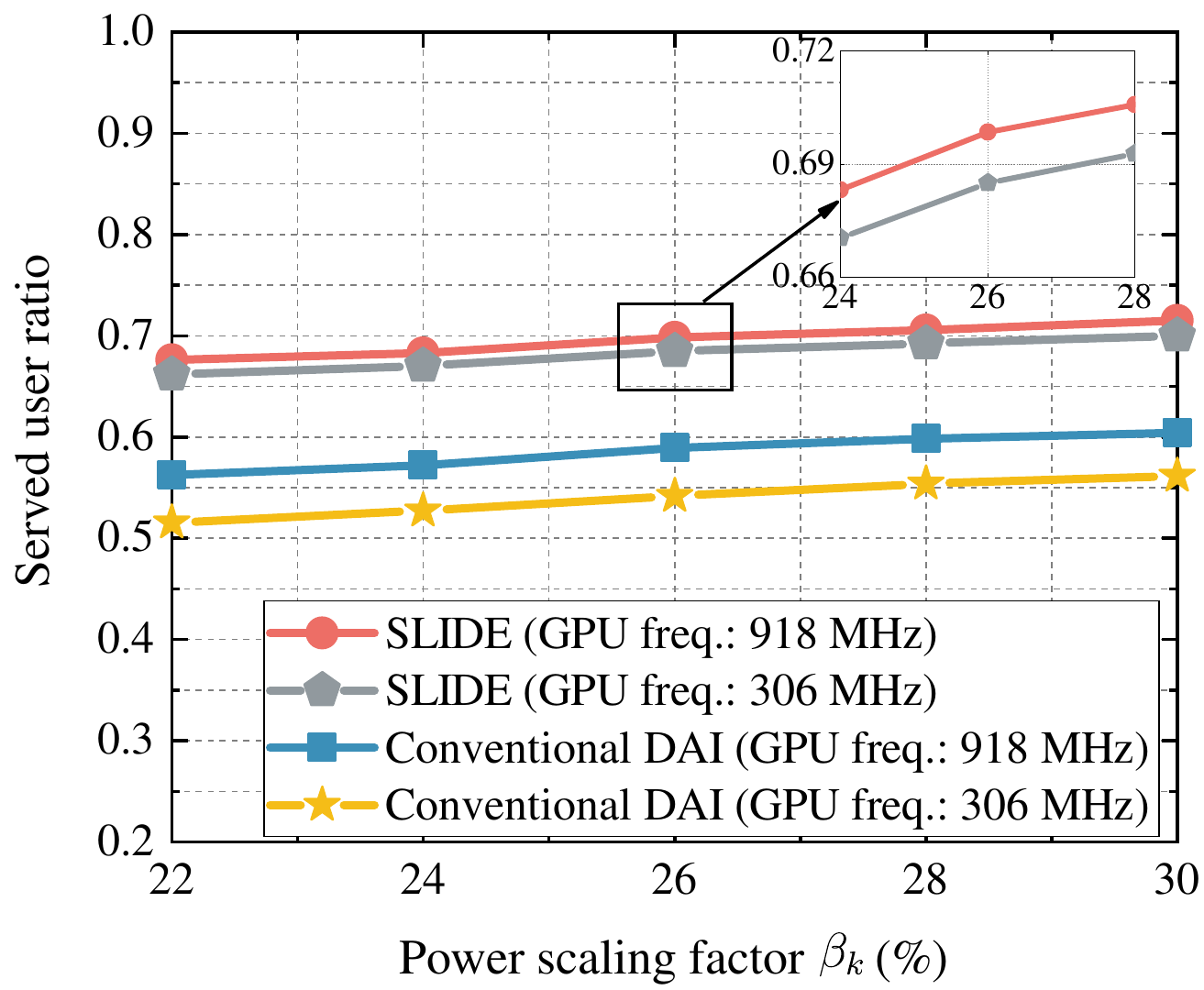}
    \label{fig:freq_beta}}
    \vspace{-3pt}
 \caption{Performance comparison of SLIDE and conventional DAI on Jetson Orin NX ($\theta = 0$) under different computing capabilities, with the GPU frequency set to 306 MHz and 918 MHz, respectively. The default values of $B$, $K$, $\bar{T}_{k}$, and $\beta_{k}$ follow those in Fig. \ref{fig:result}.}\label{fig:freq}
 \vspace{-10pt}
\end{figure*}

\subsection{Ablation Study}
This subsection analyzes the impact of model provisioning, bandwidth allocation, and computing resource allocation in SLIDE through ablation experiments. Fig. \ref{fig:ablation} compares SLIDE with the counterparts under equal bandwidth allocation, greedy-based model provisioning, and equal-energy computing resource allocation. In equal bandwidth allocation, the total bandwidth $B$ is evenly distributed to $K$ sub-channels, with each sub-channel assigned to a single user with bandwidth $\frac{B}{K}$. In greedy-based model provisioning, the BS provisions user $k$ with the model with the minimum model size in $\mathcal{I}_{k}$. In equal-energy computing resource allocation, each served user fully utilizes their energy budget, and an equal amount of energy is allocated to each layer.
\begin{figure*}[t]
    \centering
	\subfigure[Served user ratio vs. $B$ in the ablation study.]{\includegraphics[height=3.4cm, keepaspectratio]{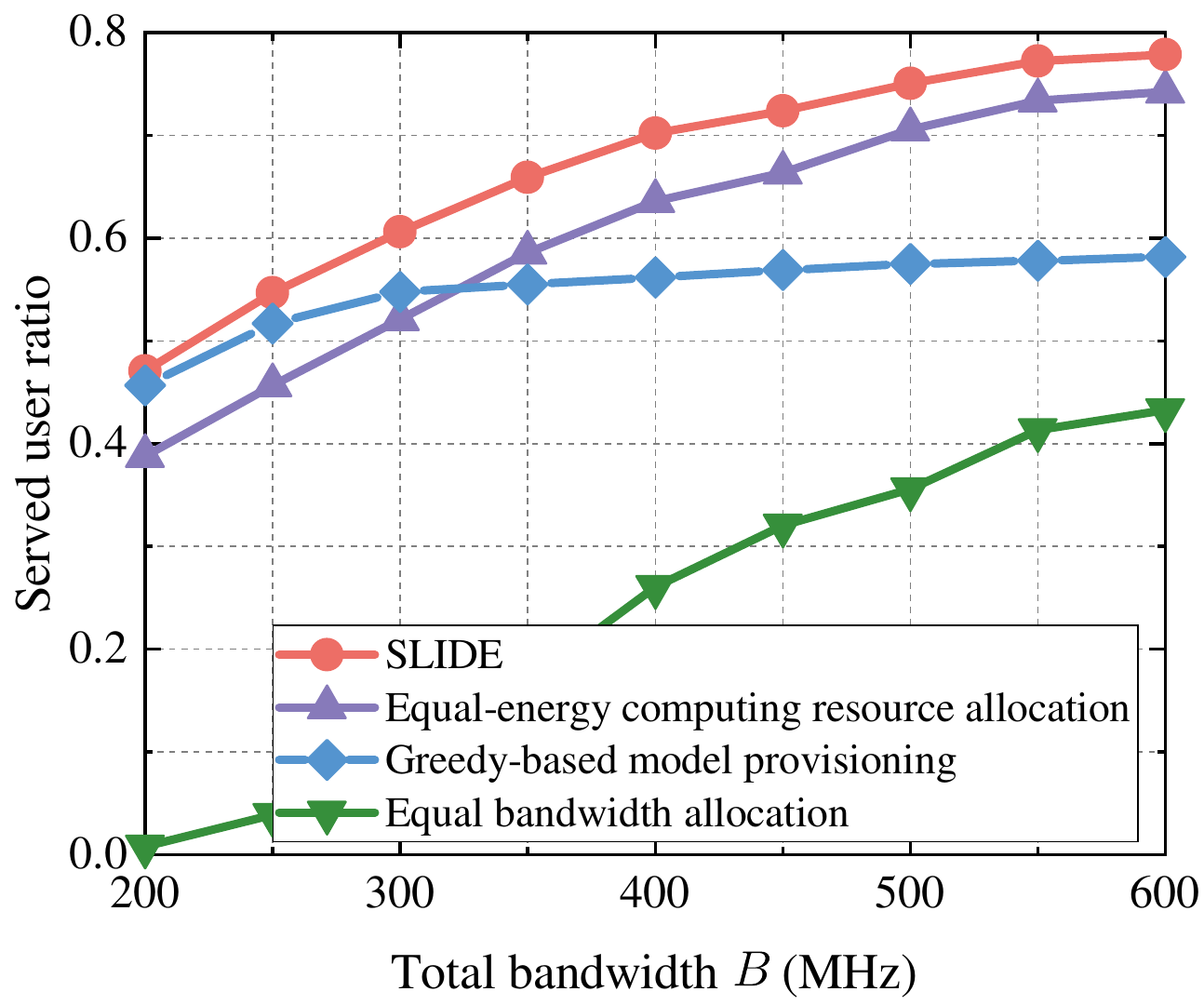}\label{fig:ablation_bw}}
	\quad
    \subfigure[Served user ratio vs. $K$ in the ablation study.]{\includegraphics[height=3.4cm, keepaspectratio]{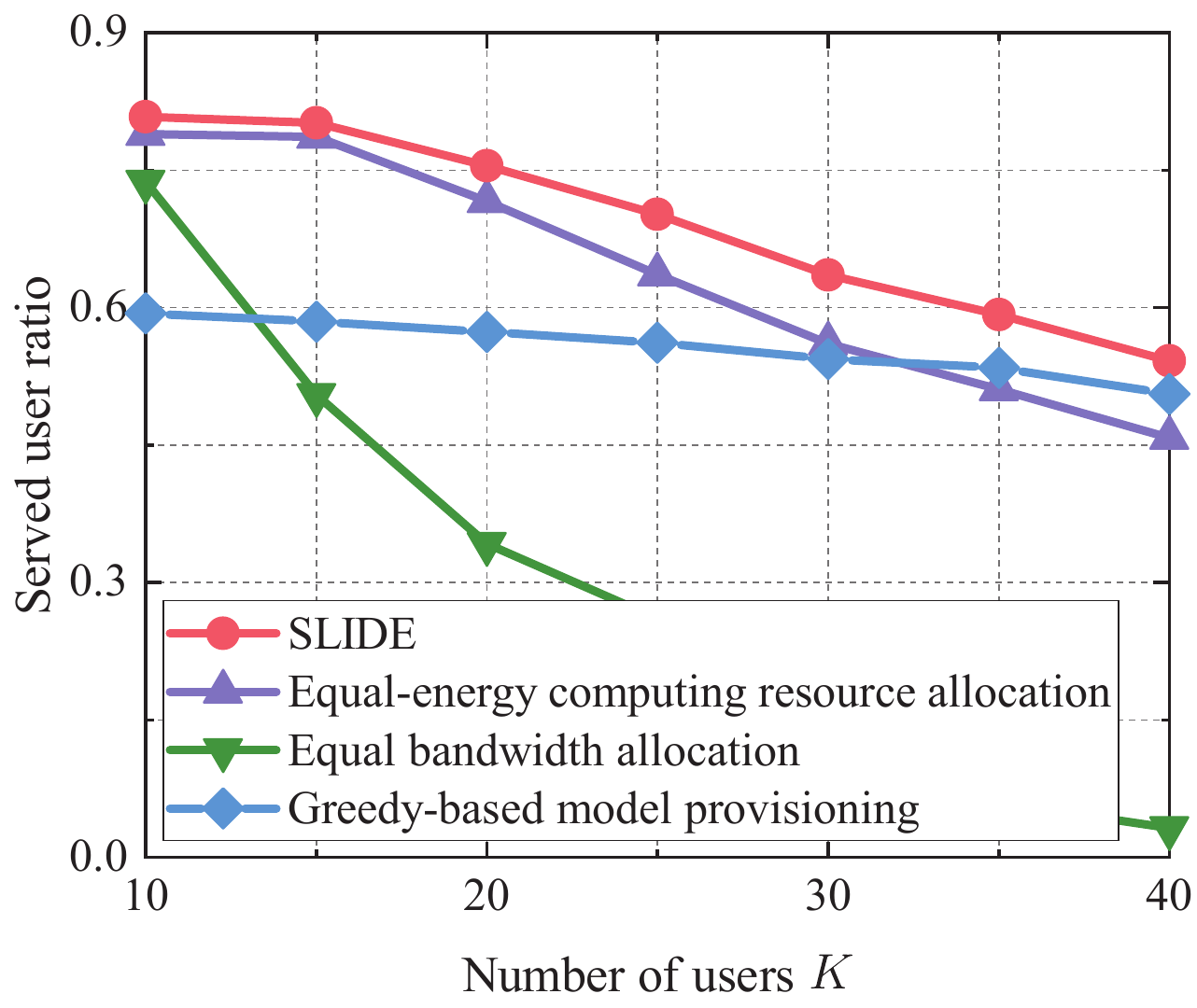}\label{fig:ablation_user}}
	\quad
	\subfigure[Served user ratio vs. $\beta_{k}$ in the ablation study.]{\includegraphics[height=3.4cm, keepaspectratio]{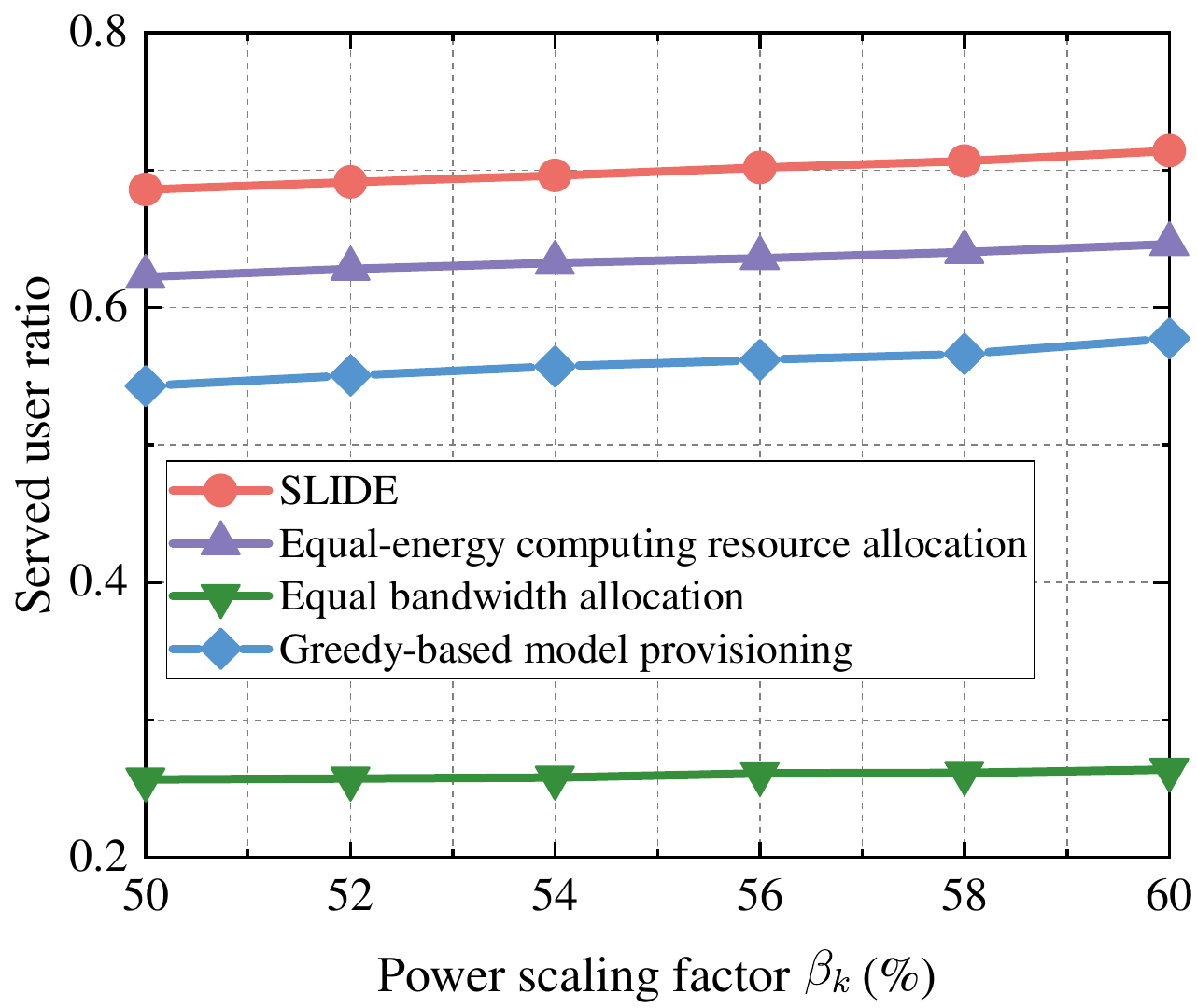}\label{fig:ablation_beta}}
    \vspace{-3pt}
 \caption{Ablation study on spectrum bandwidth allocation, model provisioning, and computing resource allocation. The default values of GPU frequencies, $B$, and $\theta$ follow those in Fig.~\ref{fig:result}, and the defaults for $K$, $\bar{T}_{k}$, and $\beta_{k}$ are set to 25, 300 ms, and 56\%, respectively.}
 \vspace{-10pt}
 \label{fig:ablation}
\end{figure*}


Fig. \ref{fig:ablation_bw} demonstrates that SLIDE consistently surpasses the equal-bandwidth-allocation (EBA), greedy-based-model-provisioning (GBMP), and equal-energy-computing-resource-allocation (EECRA) algorithms across varying $B$. Specifically, SLIDE improves the served user ratio by 43.6\%, 11.9\%, and 6.4\% on average compared with the EBA, GBMP, and EECRA methods, respectively. Moreover, at $B =$ 200 MHz, SLIDE provides a 46.3\% higher served user ratio than the EBA algorithm, which nearly fails to serve any users. This indicates that SLIDE can significantly enhance the served user ratio compared with the EBA method when the communication resource is limited. At $B =$ 600 MHz, SLIDE provides a 19.7\% improvement over GBMP. This is because SLIDE enhances the served user ratio by balancing the communication cost and the computing workload of models to perform more efficient model provisioning, whereas GBMP only minimizes communication cost without considering computing workload.

A similar performance benefit can also be observed in Fig.~\ref{fig:ablation_user}. On average, SLIDE improves the served user ratio by 39.7\%, 13.4\%, and 5.4\% over the EBA, GBMP, and EECRA methods, respectively. Fig. \ref{fig:ablation_beta} also highlights the superior performance advantage of SLIDE as $\beta_{k}$ increases. Compared with the EBA, GBMP, and EECRA methods, SLIDE achieves an average served user ratio improvement of 44.0\%, 14.0\%, and 6.5\%, respectively. At $\beta_k$ = 60\%, SLIDE outperforms the EECRA method by 6.8\%. These results underscore the importance of dedicated computing resource allocation during inference.

\subsection{Algorithm Running Time Comparisons}
\begin{figure}[t]
    \centering
	\subfigure[Algorithm running time vs. $K$.]{\includegraphics[height=3.4cm, keepaspectratio]{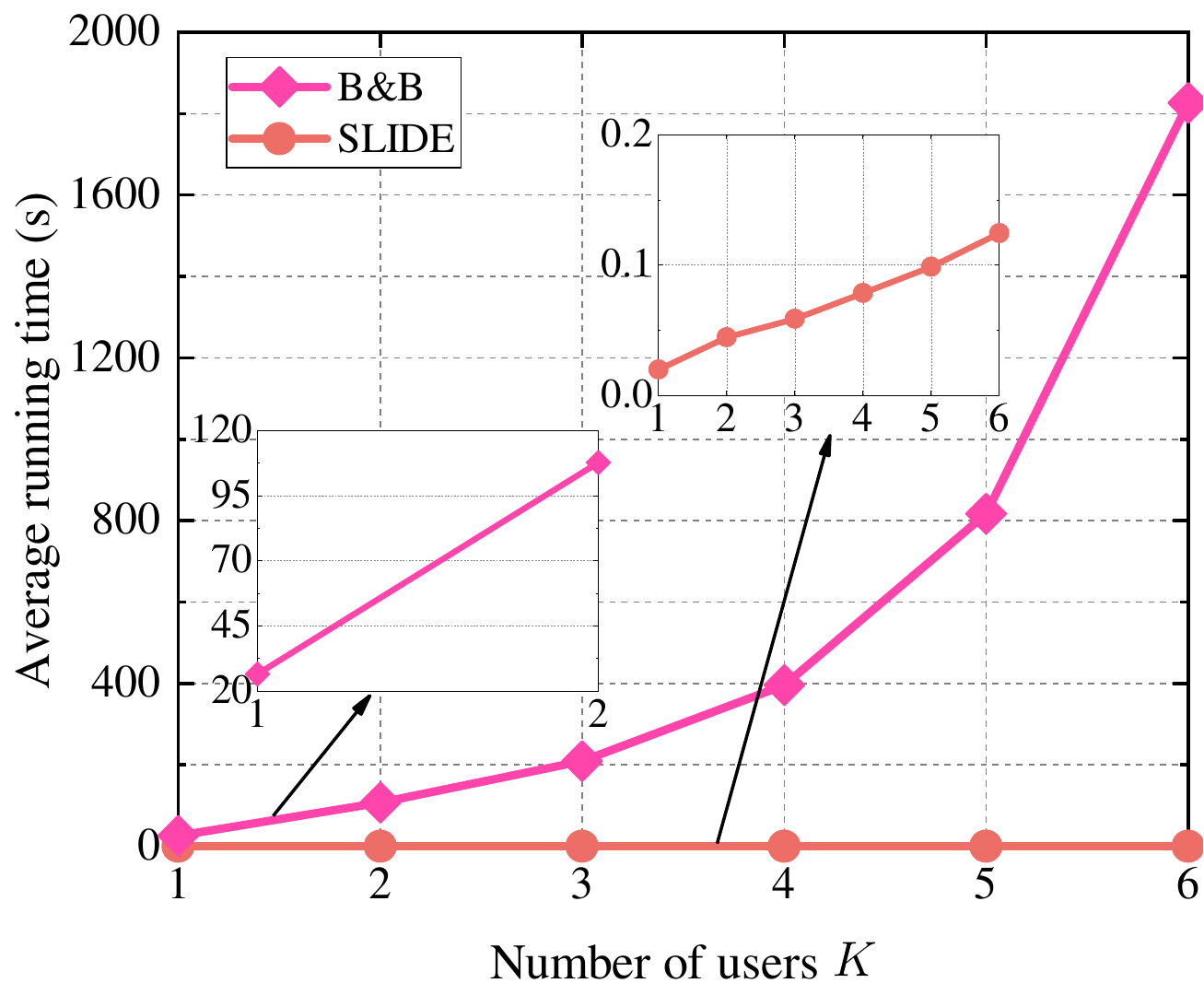}\label{fig:running_user}}
        \quad
	\subfigure[Algorithm running time vs. $I$.]{\includegraphics[height=3.4cm, keepaspectratio]{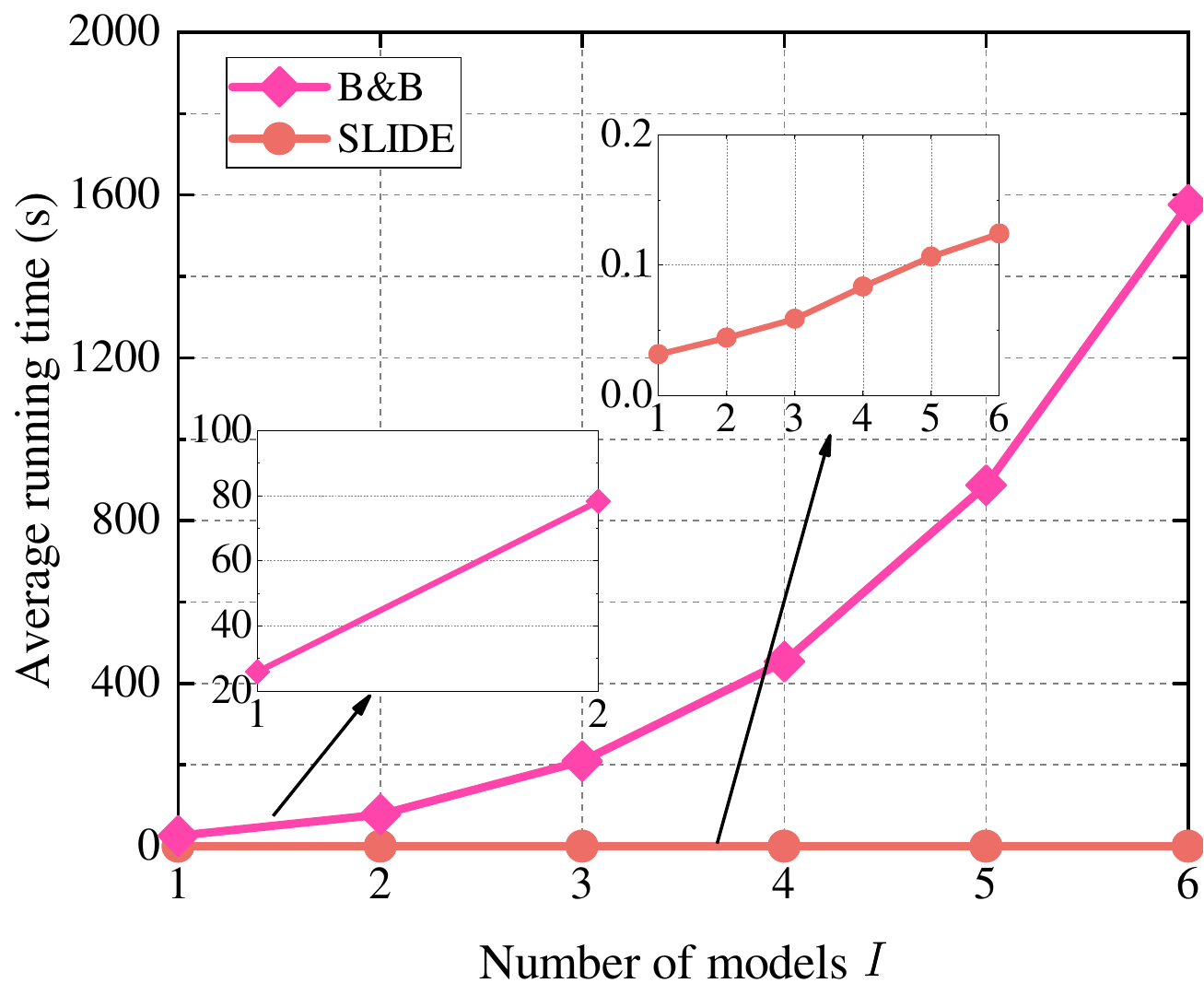}\label{fig:running_model}}
    \vspace{-7pt}
 \caption{Running time comparison between the proposed algorithm and the B\&B algorithm using Jetson Orin Nano under varying $K$, with $B =$ 200 MHz, $\bar{T}_{k}=$ 800 ms, and $\theta=1$. Both the default values of $K$ and $I$ are set to 3, and the defaults for the GPU frequency of the Jetson Orin Nano and $\beta_{k}$ are the same as those in Fig. \ref{fig:result}.}
 \vspace{-10pt}
 \label{fig:running}
\end{figure}

Fig.~\ref{fig:running} compares the running time between the proposed algorithm and the B\&B algorithm. To run the B\&B algorithm efficiently, both $K$ and $I$ are set to \{1, 2, 3, 4, 5, 6\}. As shown in Fig.~\ref{fig:running}, the proposed algorithm significantly outperforms the B\&B algorithm in terms of running time. The running time of the proposed algorithm increases nearly linearly as $K$ and $I$ grow, while the B\&B algorithm exhibits exponential growth. On average, the proposed algorithm is 5,893 times and 5,442 times faster than the B\&B algorithm in Fig.~\ref{fig:running_user} and Fig.~\ref{fig:running_model}, respectively. Specifically, when $K=$ 6 and $I=$ 6, the proposed algorithm is 14,689 times and 12,697 times faster than the B\&B algorithm. These results indicate the efficiency of the proposed algorithm in handling large-scale optimization problems.
\section{Conclusions}
In this paper, we proposed a simultaneous model downloading and inference (SLIDE) framework to maximize inference task throughput for users in wireless networks. This framework allows end users to perform inference with the downloaded model layers while concurrently receiving the remaining layers. To optimize model provisioning, spectrum bandwidth allocation, and computing resource allocation for multi-user SLIDE systems, we formulated a task throughput maximization problem under end-to-end latency, communication resource, and energy constraints. To efficiently solve this problem, we transformed the original problem into solving the spectrum bandwidth allocation problem, followed by determining the optimal model provisioning and computing resource allocation based on its derived solution. We then designed an efficient algorithm that obtains the optimal solution in polynomial time. The proposed algorithm demonstrates a significant performance gain compared with the conventional DAI framework without overlapping model downloading and inference in wireless networks.

\nocite{nesterov2013introductory}

\bibliographystyle{IEEEtran}
\bibliography{IEEEabrv,reference}

\newpage
\clearpage
\setcounter{page}{1}


\begin{appendices}

\section{Proof of Proposition \ref{proposition_1}}\label{proof_proposition_1}
When $y_k = 0$, constraint~\eqref{const_p2_3} in $\mathcal{P}2$ enforces $\hat{z}_{k,l_i} = 0$, which aligns with the computing resource allocation in $\mathcal{P}1$ under the same condition. When $y_k > 0$, the model downloading latency becomes constant, and the E2E latency depends solely on $\hat{z}_{k,l_i}$. Therefore, solving $\mathcal{P}2$ under the energy constraint to minimize the E2E latency yields the same computing resource allocation as in $\mathcal{P}1$ when $y_{k}$ and the provisioned model $i$ are given. This completes the proof.

\section{Proof of Proposition \ref{lemma_0}}\label{proof_lemma_0}
Suppose in the optimal solution $\hat{{\bf{Z}}}^{*}_{k,i}$ to $\mathcal{P}2$ under $y_{k}\ne 0$ and $e_{k,i}\left({\bf{1}}\right)> Q_{k}$, there exists a layer $l_{i}\in\left[2,L_{i}\right]$ such that $t_{k,l_{i}-1}\left(y_{k},\hat{{\bf{Z}}}^{*}_{k,i}\right)< \sum\limits_{l'_{i}=1}^{l_{i}}\tau_{k,l'_{i}}$. Since $T_{k,i}\left(z_{k,l_{i}}\right)$ monotonically increases as $z_{k,l_{i}}$ decreases, we can update $\hat{{\bf{Z}}}^{*}_{k,i}$ to $\hat{{\bf{Z}}}'^{*}_{k,i}$ by decreasing $\hat{z}^{*}_{k,l_{i}-1}$ to $\hat{z}'^{*}_{k,l_{i}-1}$, such that $t_{k,l_{i}-1}\left(y_{k},\hat{{\bf{Z}}}'^{*}_{k,i}\right)= \sum\limits_{l'_{i}=1}^{l_{i}}\tau_{k,l'_{i}}$. This update reduces energy consumption without increasing either the inference start time of layer $l_{i}$ or $t_{k,L_{i}}\left(y_{k},\hat{{\bf{Z}}}'^{*}_{k,i}\right)$. However, the saved energy could be reallocated to increase $\hat{z}^{*}_{k,l'_{i}}$ for layer $l'_{i}$ with $t_{k,l'_{i}}\left(y_{k},\hat{{\bf{Z}}}^{*}_{k,i}\right)>\sum\limits_{l''_{i}=1}^{l'_{i}+1}\tau_{k,l''_{i}}$, thus reducing $t_{k,l'_{i}}\left(y_{k},\hat{{\bf{Z}}}'^{*}_{k,i}\right)$, thus decreasing $t_{k,L_{i}}\left(y_{k},\hat{{\bf{Z}}}'^{*}_{k,i}\right)$. This contradicts the optimality of $\hat{{\bf{Z}}}^{*}_{k,i}$, which completes the proof. 

\section{Proof of Proposition \ref{lemma_2}}\label{proof_lemma_2}
When $y_{k}=0$, \eqref{const_p2_3} implies that $\hat{z}^{*}_{k,l_{i}}=0$, which corresponds to the first case in \eqref{eq_z_2}. When $y_{k}\ne0$ and $e_{k,i}\left({\bf{1}}\right)\le Q_{k}$, this indicates that setting $\hat{z}_{k,l_{i}}=1$ for all layers satisfies \eqref{const_p2_1}. Since the objective function of $\mathcal{P}2$ is non-increasing as $\hat{z}_{k,l_{i}}$ grows, this setting preserves the optimality, corresponding to the second case in \eqref{eq_z_2}. 

Next, we derive $\hat{z}^{*}_{k,l_{i}}$ when $y_{k}\ne 0$ and $e_{k,i}\left({\bf{1}}\right)> Q_{k}$. Leveraging \eqref{eq_T} and Proposition~\ref{lemma_0}, we denote the inference start time of layer $l_{i}\in\left[1,L_{i}\right]$ of user $k$ by 
\begin{equation}\label{eq_s_l}
    s_{k,l_{i}}=
    \begin{cases}
        s_{k,1} = \max\left\{\tau_{k,1},T_{k,i}\left(\hat{z}_{k,0}\right)\right\}, \text{ if }l_{i}=1,\\
        s_{k,1}+\sum\limits_{l'_{i}=1}^{l_{i}-1}T_{k,i}\left(\hat{z}_{k,l'_{i}}\right),\text{ if }l_{i}\in\left[2,L_{i}\right].
    \end{cases}
\end{equation}
Based on \eqref{eq_T} and \eqref{eq_s_l},  $t_{k,l_{i}}\left(y_{k},\hat{{\bf{Z}}}_{k,i}\right)$ can be equivalently expressed as $t_{k,l_{i}}\left(y_{k},\hat{{\bf{Z}}}_{k,i}\right)=s_{k,l_{i}}+T_{k,i}\left(\hat{z}_{k,l_{i}}\right)=s_{k,1}+\sum\limits_{l'_{i}=1}^{l_{i}}\left[V_{1}\left(k,l'_{i}\right)+\frac{\Gamma_{k,l'_{i}}}{\hat{z}_{k,l'_{i}}}\right]$, where $\Gamma_{k,l_{i}}=\frac{b_{k}W_{l_{i}}\kappa_{k}}{f_{k}}$. Substituting this into $\mathcal{P}2$, when $y_{k}>0$ and $e_{k,i}\left({\bf{1}}\right)> Q_{k}$, $\mathcal{P}2$ can be equivalently reformulated as
\begin{subequations}
	\begin{equation}
		{\mathcal{P}2'}:\ \mathop{\min}\limits_{\hat{{\bf{Z}}}_{k,i}}\ s_{k,1}+\sum\limits_{l_{i}=1}^{L_{i}}\left[V_{1}\left(k,l_{i}\right)+\frac{\Gamma_{k,l_{i}}}{\hat{z}_{k,l_{i}}}\right]
	\end{equation}	
        \begin{equation}\label{const_p4_2}
		{\rm{s.t.}} \ \sum\limits_{l_{i}=1}^{L_{i}}\Gamma_{k,l_{i}}\hat{z}^{2}_{k,l_{i}}\le Q'_{k}, 
	\end{equation}	
        \begin{equation}\label{const_p4_1}
		s_{k,1}+\sum\limits_{l'_{i}=1}^{l_{i}}\left[V_{1}\left(k,l'_{i}\right)+\frac{\Gamma_{k,l'_{i}}}{\hat{z}_{k,l'_{i}}}\right]\ge \sum\limits_{l_{i}'=1}^{l_{i}+1}\tau_{k,l'_{i}},   \forall l_{i}\in\left[1, L_{i}-1\right] ,
	\end{equation}	
        \begin{equation}\label{const_p4_3}
		0\le \hat{z}_{k,l_{i}}\le 1,\ \forall l_{i}\in\left[1, L_{i}\right] ,
	\end{equation}	
\end{subequations}   
where $Q'_{k}=\frac{Q_{k}-e_{1}\left(k,i\right)}{\Psi_{k}f^{3}_{k}}$. Note that $\mathcal{P}2'$ is a convex problem, as the objective function and constraints of $\mathcal{P}2'$ are convex functions of $\hat{z}_{k,l_{i}}$.

The partial Lagrangian function of $\mathcal{P}2'$ is given by
\begin{equation}
    \begin{aligned}
    \mathcal{L}_{k,i}
    &=s_{k,1}+\sum\limits_{l_{i}=1}^{L_{i}}V_{1}\left(k,l_{i}\right)+\sum\limits_{l_{i}=1}^{L_{i}}\frac{\Gamma_{k,l_{i}}}{\hat{z}_{k,l_{i}}}\\
    &+\sum\limits_{l_{i}=1}^{L_{i}-1}\mu_{l_{i}}\left[ \sum\limits_{l_{i}'=1}^{l_{i}+1}\tau_{k,l'_{i}}-s_{k,1}-\sum\limits_{l'_{i}=1}^{l_{i}}V_{1}\left(k,l'_{i}\right)-\sum\limits_{l'_{i}=1}^{l_{i}}\frac{\Gamma_{k,l'_{i}}}{\hat{z}_{k,l'_{i}}}\right]\\
    &+\eta\left[\sum\limits_{l_{i}=1}^{L_{i}}\Gamma_{k,l_{i}}\hat{z}^{2}_{k,l_{i}}-Q'_{k}\right].
    \end{aligned}
\end{equation}
Letting $\mu_{L_{i}}=0$, $\frac{\partial\mathcal{L}_{k,i}}{\partial{\hat{z}_{k,l_{i}}}}$ can be expressed as
\begin{equation}
    \frac{\partial\mathcal{L}_{k,i}}{\partial{\hat{z}_{k,l_{i}}}}
=-\frac{\Gamma_{k,l_{i}}}{\hat{z}^{2}_{k,l_{i}}}+2\eta \Gamma_{k,l_{i}}\hat{z}_{k,l_{i}}+\frac{\Gamma_{k,l_{i}}}{\hat{z}^{2}_{k,l_{i}}}\sum\limits_{l'_{i}=l_{i}}^{L_{i}}\mu_{l'_{i}},\ 1\le l_{i} \le L_{i}.
\end{equation}

We first derive the relationships between the optimal solution $\hat{z}^{*}_{k,l_{i}}$ to $\mathcal{P}2'$ and the corresponding optimal Lagrange multipliers $\mu^{*}_{l_{i}}$ and $\eta^{*}$. 
The stationary condition $\frac{\partial\mathcal{L}_{k,i}}{\partial{\hat{z}^{*}_{k,l_{i}}}}=0$ in the Karush-Kuhn-Tucker (KKT) conditions yields
\begin{equation}\label{eq_kkt_0}
    -\frac{\Gamma_{k,l_{i}}}{\left(\hat{z}^{*}_{k,l_{i}}\right)^{2}}+2\eta^{*} \Gamma_{k,l_{i}}\hat{z}^{*}_{k,l_{i}}+\frac{\Gamma_{k,l_{i}}}{\left(\hat{z}^{*}_{k,l_{i}}\right)^{2}}\sum\limits_{l'_{i}=l_{i}}^{L_{i}}\mu^{*}_{l'_{i}}=0,\ 1\le l_{i} \le L_{i}.
\end{equation}
On the one hand, if $\eta^{*} =0$, then from \eqref{eq_kkt_0}, we can derive that  $\sum\limits_{l'_{i}=l_{i}}^{L_{i}}\mu^{*}_{l'_{i}}=1, \ 1\le l_{i} \le L_{i}$. Moreover, since the objective function of $\mathcal{P}2'$ is monotonically decreasing as $\hat{z}_{k,l_{i}}$ grows, based on \eqref{const_p4_1}, $\hat{z}^{*}_{k,l_{i}}$ is given by
\begin{equation}
    \hat{z}^{*}_{k,l_{i}}=\dot{z}_{k,l_{i}}\triangleq
    \begin{cases}
        \frac{\Gamma_{k,1}}{\tau_{k,1}+\tau_{k,2}-s_{k,1}-V_{1}\left(k,1\right)}, \text{ if } l_{i}=1,\\
        \frac{\Gamma_{k,l_{i}}}{\tau_{k,l_{i}+1}-V_{1}\left(k,l_{i}\right)},\text{ if } 2\le l_{i}\le L_{i}-1,\\
        1, \text{ if } l_{i}=L_{i},
    \end{cases}
\end{equation}
when $\sum\limits_{l_{i}=1}^{L_{i}}\Gamma_{k,l_{i}}\dot{z}^{2}_{k,l_{i}}\le Q'_{k}$ and $0\le\dot{z}_{k,l_{i}}\le 1$ hold. This corresponds to the third case in \eqref{eq_z_2}. 
On the other hand, if $\eta^{*}\ne0$, then based on \eqref{eq_kkt_0} and \eqref{const_p4_3}, $\hat{z}^{*}_{k,l_{i}}$ is given by
\begin{equation}\label{eq_z_1}
    \hat{z}^{*}_{k,l_{i}}=\min\left\{1,\max\left\{0,\sqrt[3]{\frac{\rho^{*}_{l_{i}}}{2\eta^{*}}}\right\}\right\},
\end{equation}
where $\rho^{*}_{l_{i}}=1-\sum\limits_{l'_{i}=l_{i}}^{L_{i}}\mu^{*}_{l'_{i}}$, which corresponds to the fourth case in \eqref{eq_z_2}.

Second, we 
derive the relationship between $\mu^{*}_{l_{i}}$ and $\eta^{*}$ in \eqref{eq_z_1} for the fourth case in \eqref{eq_z_2}. 
Based on the complementary slackness condition
\begin{equation}\label{eq_eta_kkt}
\eta^{*}\left[\sum\limits_{l_{i}=1}^{L_{i}}\Gamma_{k,l_{i}}\left(\hat{z}^{*}_{k,l_{i}}\right)^{2}-Q'_{k}\right]=0,
\end{equation}
in the KKT conditions, it follows that
\begin{equation}\label{eq_kkt_energy}
    \sum\limits_{l_{i}=1}^{L_{i}}\Gamma_{k,l_{i}}\left(\hat{z}^{*}_{k,l_{i}}\right)^{2}=Q'_{k}.
\end{equation}
Then, by substituting \eqref{eq_z_1} into \eqref{eq_kkt_energy}, we obtain
\begin{equation}\label{eq_eta_0}
    \sum\limits_{l_{i}=1}^{L_{i}}\Gamma_{k,l_{i}}\min\left\{1,\max\left\{0,\left(\frac{\rho^{*}_{l_{i}}}{2\eta^{*}}\right)^{\frac{2}{3}}\right\}\right\}=Q'_{k},
\end{equation}
from which $\eta^{*}$ can be determined as
\begin{equation}\label{eq_eta_1}
    \eta^{*}=\chi\left(\mu^{*}_{l_{i}},Q'_{k}\right),
\end{equation}
where $\chi\left(\mu^{*}_{l_{i}},Q'_{k}\right)$ is the inverse function that computes $\eta^{*}$ from \eqref{eq_eta_0} given $\bigcup\limits_{l_{i}=1}^{L_{i}}\mu^{*}_{l_{i}}$ and $Q'_{k}$. 

Finally, by leveraging \eqref{eq_z_1} and \eqref{eq_eta_1}, and applying the projected subgradient method~\cite{1664999}, 
we derive the update rules for $\mu^{\left(m\right)}_{l_{i}}$, $\eta^{\left(m\right)}$, and $\hat{z}^{\left(m\right)}_{k,l_{i}}$ in the $m$-th iteration for the fourth case in \eqref{eq_z_2}, as shown in \eqref{eq_update}. Since $\mathcal{P}2'$ is a convex problem and satisfies the Slater condition, the global optimal solution can be obtained upon convergence of the update rules in \eqref{eq_update} for this case. This completes the proof.

\section{Proof of Corollary \ref{remark_2}}\label{proof_remark_2}
On the one hand, if $e_{k,i}\left({\bf{1}}\right)\le Q_{k}$, then according to \eqref{eq_z_2}, setting the GPU frequency allocation scaling factor to 1 for all layers of model $i$ preserves the optimality of $\mathcal{P}1$. On the other hand, if both $e_{k,i}\left({\bf{1}}\right)$ and $e_{k,i}\left(\dot{{\bf{Z}}}_{k,i}\right)$ exceed $Q_{k}$, then, from \eqref{eq_eta_kkt}, we can derive that $\eta^{*}\ne 0$, where $\eta^{*}$ denotes the optimal Lagrange multiplier of $\mathcal{P}2'$ as defined in the proof of Proposition \ref{lemma_2}. According to Proposition~\ref{proposition_1}, the solution to $\mathcal{P}2'$ coincides with the computing resource allocation in $\mathcal{P}1$ when $y_{k}$ and the provisioned model $i$ are given. Therefore, based on \eqref{eq_kkt_energy}, we conclude that user $k$ consumes the entire energy budget to perform forward propagation with model $i$. This completes the proof.

\section{Proof of Theorem \ref{theorem_3}}\label{proof_theorem_3}
The proof of the optimality of Algorithm~\ref{algorithm_greedy} proceeds by analyzing its three key components: (i) calculating the minimum feasible bandwidth allocation $\check{y}_{k}$ using Algorithm~\ref{algorithm_bi}; (ii) selecting users in ascending order of $\check{y}_{k}$ within the while loop; and (iii) assigning $\check{y}_{k^{*}}$, $\check{x}^{*}_{k^{*},i}$, and $\check{z}^{*}_{k^{*},l_{i}}$ to $y^{*}_{k^{*}}$, $x^{*}_{k^{*},i}$, and $z^{*}_{k^{*},l_{i}}$, respectively in Line~\ref{line:greedy_z}.
    
    We begin by proving that Algorithm~\ref{algorithm_bi} guarantees that the produced $\check{y}_{k}$ is the minimum feasible bandwidth for user $k$, and the corresponding $\check{x}^{*}_{k,i}$ and $\check{z}^{*}_{k,l_{i}}$ are optimal under $\check{y}_{k}$. First, from Proposition~\ref{proposition_1}, for any given $y_{k}$, the optimal computing resource allocation for user $k$ with model $i$ and the corresponding minimum E2E latency $t_{k,L_{i}}\left(y_{k},\hat{{\bf{Z}}}^{*}_{k,i}\right)$ can be obtained by solving $\mathcal{P}2$. Second, Proposition~\ref{theorem_1} ensures that for any given $y_{k}$, $t_{k,L_{i^{*}}}\left(y_{k},\hat{\bf{Z}}^{*}_{k,i^{*}}\right)$ is the minimum E2E latency of user $k$ with the provisioned model $i^{*}$. Third, from \eqref{eq_final_e2e}, $t_{k,L_{i}}\left(y_{k},\hat{\bf{Z}}^{*}_{k,i^{*}}\right)$ is a non-increasing function of $y_{k}$. Specifically, as $y_{k}$ grows, the total model downloading latency decreases. Moreover, due to the parallelization between the model downloading and inference, the E2E latency does not increase as $y_{k}$ grows since the computing resource allocation under a larger $y_{k}$ results in no greater inference latency than that under a smaller $y_{k}$. Therefore, by leveraging the bisection search method, Algorithm~\ref{algorithm_bi} can obtain $\check{y}_{k}$ by iteratively narrowing the feasible interval until $t_{k,L_{i^{*}}}\left(\check{y}_{k},\hat{{\bf{Z}}}^{*}_{k,i^{*}}\right)\le \bar{T}_{k}$ is satisfied within a desired error bound.
    
    Next, we prove that selecting users in ascending order of $\check{y}_{k}$ in Algorithm~\ref{algorithm_greedy} preserves the optimality of the solution to $\mathcal{P}1$. Suppose that the served user set $\mathcal{K}^{*}$ produced by Algorithm \ref{algorithm_greedy} is not optimal. Then, there must exist another solution yielding a higher task throughput, with the served user set denoted as $\dot{\mathcal{K}}$, where $\left|\dot{\mathcal{K}}\right| > \left|\mathcal{K}^{*}\right|$, and $\dot{\mathcal{K}}$ does not exactly consist of the $\left|\dot{\mathcal{K}}\right|$ users with the smallest values of $\check{y}_{k}$. Consider updating the set $\dot{\mathcal{K}}$ by replacing its users with the first $\left|\dot{\mathcal{K}}\right|$ users with the smallest values of $\check{y}_{k}$, and denote the updated user set as $\dot{\mathcal{K}}'$. Clearly, $\sum\limits_{k\in\dot{\mathcal{K}}'}\check{y}_{k}\le1$ still holds. However, the replacement results in $\sum\limits_{k\in\dot{\mathcal{K}}'}\check{y}_{k}\le \sum\limits_{k\in\dot{\mathcal{K}}}\check{y}_{k}$, indicating that more users could be allocated bandwidth and be served after the replacement. This contradicts the optimality of $\dot{\mathcal{K}}$ and completes the proof that selecting users in ascending order of $\check{y}_{k}$ preserves the optimality. 
    
    At last, since $\check{y}_{k^{*}}$ is the minimum feasible bandwidth allocation for user $k^{*}$ to complete the inference task within $\bar{T}_{k^{*}}$, the value of $\check{y}_{k^{*}}$ is the optimal bandwidth allocation for user $k^{*}$. Moreover, from Proposition~\ref{theorem_1}, $\check{x}^{*}_{k^{*},i}$ and $\check{z}^{*}_{k^{*},l_{i}}$ are the optimal model provisioning and computing resource allocation, respectively, under $\check{y}_{k^{*}}$. Therefore, the assignments in Line~\ref{line:greedy_z} guarantee the optimal bandwidth allocation, model provisioning, and computing resource allocation for the selected user $k^{*}$. This completes the proof.

\section{Proof of Theorem \ref{theorem_4}}\label{proof_theorem_4}
We begin by analyzing the time complexity of Algorithm~\ref{algorithm_bi}. First, the complexity of Line~\ref{line:bi_solve_p2} in Algorithm~\ref{algorithm_bi} is $O\left(\frac{L_{i}}{\hat{\epsilon}^{2}}\right)$, where 
    $\hat{\epsilon}$ is the desired error bound for deciding the optimal Lagrange multipliers \cite{nesterov2013introductory}. 
Therefore, the time complexity of Lines~\ref{line:bi_for_start} to \ref{line:bi_for_end} is $O\left(\sum\limits_{i\in\mathcal{I}_{k}}\frac{L_{i}}{\hat{\epsilon}^{2}}\right)$. Second, in each iteration of Algorithm~\ref{algorithm_bi}, Line~\ref{line:bi_i_*} involves checking all $\left|\mathcal{I}_{k}\right|$ models, resulting in time complexity $O\left(\left|\mathcal{I}_{k}\right|\right)$. Besides, the complexity of Lines~\ref{line:bi_if_start} to \ref{line:bi_if_end} is $O\left(1\right)$. Third, since Algorithm~\ref{algorithm_bi} employs the bisection search method, the while loop from Line \ref{line:bi_while_start} to Line~\ref{line:bi_while_end} requires $O\left({\rm{log}}\frac{1}{\epsilon}\right)$ iterations to converge within the desired error bound $\epsilon$. 
Finally, the complexity of Line~\ref{line:bi_tilde_x_z} is $O\left(1\right)$ since $\hat{z}^{*}_{k,l_{i}}$ and $i^{*}$ have already been obtained in Line~\ref{line:bi_solve_p2} and \ref{line:bi_i_*}, respectively. Therefore, the complexity of Lines~\ref{line:bi_if_i_start} to \ref{line:bi_if_i_end} is $O\left(1\right)$. 
Based on the above analysis, the total time complexity of Algorithm~\ref{algorithm_bi} is $O\left({\rm{log}}\frac{1}{\epsilon}\left(\sum\limits_{i\in\mathcal{I}_{k}}\frac{L_{i}}{\hat{\epsilon}^{2}}+\left|\mathcal{I}_{k}\right|\right)\right)=O\left(\frac{1}{\hat{\epsilon}^{2}}{\rm{log}}\frac{1}{\epsilon}\sum\limits_{i\in\mathcal{I}_{k}}L_{i}\right)=O\left(\sum\limits_{i\in\mathcal{I}_{k}}L_{i}\right)$.

The time complexity of Algorithm~\ref{algorithm_greedy} is derived as follows. 
First, based on the time complexity of Algorithm~\ref{algorithm_bi}, that of Line~\ref{line:greedy_call} in Algorithm~\ref{algorithm_greedy} is $O\left(\sum\limits_{k\in\mathcal{K}}\sum\limits_{i\in\mathcal{I}_{k}}L_{i}\right)$. Second, in each while loop iteration, Line~\ref{line:greedy_k} involves checking all $K$ users, yielding a time complexity of $O\left(K\right)$. Third, the complexity of both Line~\ref{line:greedy_z} and \ref{line:greedy_u} is $O\left(1\right)$. At last, Algorithm~\ref{algorithm_greedy} requires at most $O\left(K\right)$ iterations in the while loop from Line~\ref{line:greedy_while_start} to Line~\ref{line:greedy_while_end}. 
Therefore, the total time complexity of Algorithm~\ref{algorithm_greedy} is $O\left(\sum\limits_{k\in\mathcal{K}}\sum\limits_{i\in\mathcal{I}_{k}}L_{i}+\sum\limits_{k\in\mathcal{K}}K\right)\le O\left(K^{2}+KIL_{\max}\right)$, where $L_{\max}=\mathop{\max}\limits_{i \in \mathcal{I}}\left\{L_{i}\right\}$. This completes the proof.
\end{appendices}

\end{document}